\documentclass[journal]{IEEEtran}

\pdfoutput=1

\usepackage{amsmath,amssymb,amsfonts}
\usepackage{amsthm}
\usepackage{physics}
\usepackage{braket}
\usepackage{bbm}
\usepackage{mathtools}
\usepackage{microtype}
\usepackage[acronym]{glossaries}

\usepackage{graphicx}
\usepackage[caption=false,font=footnotesize]{subfig}

\usepackage{multirow}
\usepackage{booktabs}
\usepackage{array}

\usepackage{xcolor}
\usepackage[english]{babel}
\usepackage[hypertexnames=false,colorlinks=true,linkcolor=blue,citecolor=blue,urlcolor=blue]{hyperref}

\newtheorem{theorem}{Theorem}

\newtheorem{proposition}[theorem]{Proposition}

\newacronym{sbs}{SBS}{Stimulated Brillouin Scattering}
\newacronym{qkd}{QKD}{Quantum Key Distribution}

\allowdisplaybreaks

\begin{document}

\title{Quantum Information-Theoretic Analysis of Optical Fiber Sensing via Stimulated Brillouin Scattering
\thanks{The research is part of the Munich Quantum Valley, which is supported by the Bavarian state government with funds from the Hightech Agenda Bayern Plus. This work was financed by the DFG via grant NO 1129/2-1 and by the Federal Ministry of Education and Research of Germany in the Q-STARS project, grant number 16KIS2604, as well as via grants 16KISQ093, 16KISQ039 and 16KISQ077. The generous support of the state of Bavaria via the 6GQT project is greatly appreciated.}}

\author{
Kiran Adhikari, Janis N\"otzel (Member, IEEE)\\
\textit{Emmy-Noether Group Theoretical Quantum Systems Design},\\
\textit{Technical University of Munich, Munich, Germany}}

\maketitle

\begin{abstract}
Stimulated Brillouin scattering (SBS) is a powerful mechanism for distributed sensing in optical fibers, as the Brillouin response is highly sensitive to local material parameters. Assuming the input probe is a quantum state, we develop a quantum-information-theoretic framework for SBS-based fiber sensing and intrusion detection. Starting from a linearized opto-acoustic Hamiltonian, we derive an effective input-output relation for a short fiber segment. We then model each fiber segment as a single-mode bosonic Gaussian channel and the entire fiber as a cascade of quantum channels. We then compute the quantum Fisher information for general SBS-dependent parameters and identify optimal measurement settings to achieve the quantum limit, comparing them with receiver-constrained strategies such as heterodyne detection and photon counting. In addition, we propose an adaptive approach that achieves first-order asymptotic optimality and provides a novel second-order correction to address preliminary miscalibration. As an application, we apply the framework to the fiber-tapping detection problem. 
\end{abstract}

\begin{IEEEkeywords}
Stimulated Brillouin scattering, quantum sensing, optical fibers, quantum hypothesis testing, quantum Fisher information.
\end{IEEEkeywords}

\section{Introduction}

Optical fiber networks are a critical component of modern communication infrastructure and are often assumed to provide strong physical-layer security as the optical field is confined within a dielectric waveguide. In practice, however, the practical attacks, such as evanescent coupling or fiber bending, allow 
 an adversary to extract optical power with minimal disturbance, making reliable detection of eavesdropping a fundamental challenge \cite{cca879e457e44e34b704c801c6aaa6e8, Shaneman2004OpticalNS}. While prior work has studied eavesdropper detection and joint communication–sensing strategies in optical and quantum channels \cite{eavesdropperDetection, munarvallespirJCAS, boche2016, boche2017}, a quantitative understanding of the fundamental limits of detection in realistic fiber systems remains incomplete yet desirable when planning for the adoption of modern quantum technology into future networks \cite{integratingQuantumSimulation}. A central problem is therefore to determine the limits imposed by quantum mechanics on the detection of local perturbations resulting from fiber tapping attacks.

In optical fibers, Stimulated Brillouin scattering (SBS) provides a natural physical mechanism to model such perturbations \cite{Wolff:21, Lu2019DistributedOF}. SBS is a nonlinear optoacoustic interaction between a pump field, a counter-propagating Stokes field, and an acoustic phonon. Although SBS is often undesirable in optical communications because it depletes the pump and generates backward scattering, it is highly useful for sensing, as the Brillouin resonance depends on local material properties such as strain and temperature. This enables spatially resolved sensing along the fiber and has been extensively exploited in techniques such as Brillouin optical correlation domain analysis (BOCDA) \cite{e83-c_3_405, Lu2019DistributedOF, senaOTDR}.

\cite{popp2023eavesdropperlocalizationquantumclassical} demonstrated experimentally that \gls{sbs} techniques can be used to localize the eavesdropping in a quantum channel. This can be highly attractive in the \gls{qkd} setting, as \gls{qkd} already provides information on whether an eavesdropper is present in the channel \cite{Gisin_2002}. \cite{Li_2022} demonstrated that quantum techniques can indeed enhance \gls{sbs} estimation. Our work can be thought of as developing a quantum information-theoretic framework \cite{adhikari2026fundamentallimitseavesdropperdetection,noetzel2026quantumlimitedblindsourceseparation} for \gls{sbs}-based intrusion detection in optical fibers. From the theoretical perspective, this is quite attractive as SBS can be modeled using quantum mechanics, in which the Stokes field and the acoustic phonon become entangled as it propagates \cite{zhu2024optoacousticentanglementcontinuousbrillouinactive, rachbauerHowFindOptimal2024,zuoQuantumEmpoweredFiberSensing2025,tanEnhancedSensitivityPolarization2023,tianRobustPhotonEntanglement2013}. The entire optical fiber can be modeled as a cascade bosonic Gaussian channel, and the input Stokes field can be an arbitrary quantum input. This formalism allows comparison between practical receivers, such as heterodyne detection or photon counting, and the best measurement allowed by quantum mechanics. 

Our main contributions are as follows:
\begin{itemize}
    \item In section \ref{sec:quantum_mechannical}, we derive a quantum Hamiltonian model for the SBS interaction. Dividing the entire fiber, we derive for each segment $i$ an input-output relation of the Stokes field $a_s$. We show that, in the appropriate limit, this reproduces the classical SBS power-gain equation.  
    \item The input-output relation connects the microscopic SBS model to experimentally accessible fiber parameters. In section \ref{sec:fiber_cascade}, we show that this allows us to model each fiber segment as a single-mode Gaussian quantum channel on the Stokes field. The entire fiber is then a cascade of such Gaussian channels. 
    \item Assuming the coherent-state input, the output state of such a channel is a displaced thermal state. In section \ref{sec:qfi_sbs_parameter_sensing}, we compute the quantum Fisher information (QFI) of the displaced thermal output state for a general parameter $\theta$. We also derive the corresponding symmetric logarithmic derivative and identify the optimal measurement structure.
    \item In section \ref{subsubsec:derivatives_sbs_output_state},  we apply the QFI framework to Brillouin-frequency and temperature sensing. For Brillouin-frequency estimation, the largest sensitivity occurs near the steepest part of the Brillouin line, where the response is most sensitive to changes in $\Omega_B$. For temperature sensing, we consider a model in which $\Omega_B$, $\Gamma_B$, the gain coefficient, and the thermal phonon occupation can depend on the temperature.
    \item In section \ref{sec:finite_sample_adaptive}, we show how the optimal measurement of the probes can be implemented via an adaptive method. We also show that the adaptive protocol achieves first-order asymptotic optimality and provide a second-order correction to account for preliminary miscalibration. 
    \item In section \ref{sec:application1}, we present an application in the fiber tapping problem. We formulate the intrusion-detection problem as a local parameter estimation and quantify it using QFI and receiver-constrained Fisher information.
\end{itemize}

\section{Notation}
To shorten equations, we define for every $x\in[0,1]$ the quantity $x':=1-x$. The underlying Hilbert space in our analysis is the Fock space. Notation is borrowed from \cite{bookserfini} and \cite{HolevoBook}, so that $\hat a^\dagger$ and $\hat a$ are creation- and annihilation operators, and $S_N(\alpha):=D(\alpha)S_N(0)D(\alpha)^\dagger$ denotes the displaced thermal state with mean thermal photon number $N\geq0$ and displacement $\alpha\in\mathbb C$.

\section{Quantum Mechanical Modeling of SBS}
\label{sec:quantum_mechannical}

Stimulated Brillouin scattering (SBS)  \cite{Wolff:21, Lu2019DistributedOF,10.1007/3-540-46629-0_9}, as depicted in Figure \ref{fig:brillouin}, is a third-order nonlinear optoacoustic interaction between a strong optical pump, a counter-propagating red-shifted Stokes field, and an acoustic phonon mode of the medium. SBS is a good probe of distributed sensing, as the Brillouin resonance depends on local material properties such as strain and temperature. Strain $p$ and temperature $T$ modify the effective refractive index $n_{\mathrm{eff}}(p,T)$ and acoustic velocity $v_{\mathrm{ac}}(p,T)$ which then affects Brillouin frequency $\Omega_B$ as (Eq. 1 of \cite{thevenazBrillouinDistributedTimedomain2010})
\begin{equation}
    \frac{\Omega_B}{2\pi}=\frac{2}{\lambda}\,n_{\mathrm{eff}}(p,T)\,v_{\mathrm{ac}}(p,T).
\end{equation}

Physically, the pump drives a periodic density modulation via electrostriction, producing a traveling acoustic wave that acts as a moving refractive-index grating, scattering pump photons into Stokes photons. Energy conservation gives (Eq. 1 of \cite{Wolff:21}) :
\begin{equation}
    \omega_{\mathrm{s}}=\omega_{\mathrm{p}}-\Omega
\end{equation}
where $\Omega$ is the pump-Stokes frequency difference. The SBS interaction is resonantly enhanced when $\Omega = \Omega_B$, in which case $\omega_s = \omega_{\mathrm{p}}-\Omega_B$.

\begin{figure}[!t]
    \centering
    \includegraphics[width=0.8\linewidth]{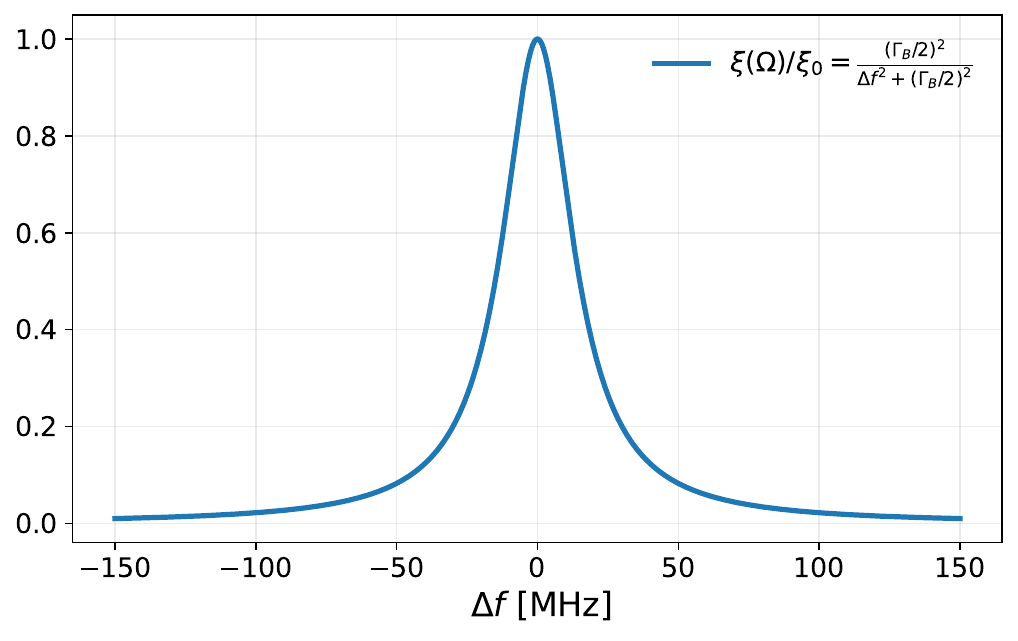}
    \caption{Normalized Brillouin gain spectrum for a standard silica single-mode fiber near $1550\,\mathrm{nm}$ with parameters summarized in Table~\ref{tab:physical_sbs_parameters}. 
At zero detuning, the gain is maximum. It then decreases rapidly with a linewidth set by $\Gamma_B$. }
    \label{fig:Brillouin_gain_spectrum}
\end{figure}

Because the acoustic mode is damped, SBS exhibits a narrow Lorentzian gain 
 spectrum,
\begin{equation}
\xi(\Omega)=\xi_0\,\frac{(\Gamma_B/2)^2}{(\Omega_B-\Omega)^2+(\Gamma_B/2)^2},   
\end{equation}
where \(\Omega_B\) is the resonance frequency, \(\Gamma_B\) is the Brillouin linewidth, and \(\xi_0 = \xi(\Omega_B)\) is the peak Brillouin gain coefficient expressed in terms of materials properties as (Eq. 2 of \cite{polym3020886})
\begin{equation}
\label{eq:xi_0}
    \xi_0=\frac{\gamma_e^2\omega^2}{n\,v\,c^3\,\rho_0 A_{\rm eff} \Gamma_B}
=\frac{4\pi^2 n^7 p_{12}^2}{c\,\lambda_0^2\,v\,\rho_0 A_{\rm eff} \Gamma_B},
\end{equation}
where \(\gamma_e\) is the electrostrictive constant, \(\omega\) the optical frequency, \(n\) the refractive index of the bulk material, \(v\) the acoustic bulk velocity, \(\rho_0\) the density of the bulk material, $A_{\rm eff}$ is the effective area and \(p_{12}\) the Pockels coefficient. Accordingly, the Stokes power grows as  \cite{10.1007/3-540-46629-0_9} 
\begin{equation}
    \frac{dP_s}{dz}=\xi(\Omega)P_pP_s.
\end{equation}
In Figure~\ref{fig:Brillouin_gain_spectrum}, we have plotted the Lorentzian SBS gain response for a standard silica
single-mode fiber near $1550\,\mathrm{nm}$ with the physical parameters summarized
in Table~\ref{tab:physical_sbs_parameters}.

\begin{table*}[!t]
\centering
\footnotesize
\begin{tabular}{llll}
\hline
Quantity & Symbol & Value used & Notes \\
\hline
Optical wavelength & $\lambda$ & $1550\,\mathrm{nm}$ & Telecom C-band wavelength \\
Fiber attenuation & $\alpha_{\mathrm{dB}}$ & $0.2\,\mathrm{dB/km}$ & Standard SMF-scale loss \\
Brillouin frequency shift & $\Omega_B/2\pi$ & $10.8\,\mathrm{GHz}$ & Silica fiber near $1550\,\mathrm{nm}$ \\
Brillouin linewidth & $\Gamma_B/2\pi$ & $30\,\mathrm{MHz}$ & Room-temperature SBS linewidth \\
Brillouin gain coefficient & $g_{B,0}$ & $2.5\times10^{-11}\,\mathrm{m/W}$ & Peak material/fiber gain coefficient \\
Effective mode area & $A_{\mathrm{eff}}$ & $80\,\mu\mathrm{m}^2$ & Standard single-mode fiber scale \\
Pump power & $P_p$ & $10$--$100\,\mathrm{mW}$ & Classical pump power \\
Temperature & $T$ & $300\,\mathrm{K}$ & Room temperature \\
\hline
\end{tabular}
\caption{Physical parameters for a standard silica single-mode fiber near $1550\,\mathrm{nm}$ \cite{Wolff:21, poppBrillouin}.}
\label{tab:physical_sbs_parameters}
\end{table*}

\subsection{Quantum Hamiltonian}
We model SBS as an interaction of three bosonic fields acting on the 
bosonic Fock space 
$\mathcal{F}_{p}\otimes\mathcal{F}_{s}\otimes\mathcal{F}_{b}$:
an strong optical pump field $ a_p(z,t)$, a counter-propagating Stokes (or probe)
optical field $ a_s(z,t)$, and an acoustic phonon field $\hat b(z,t)$ \cite{zhu2024optoacousticentanglementcontinuousbrillouinactive, rakich2016quantumtheorycontinuumoptomechanics, sipeHamiltonianTreatmentStimulated2016}. Then, the Hamiltonian is:

\begin{align}
\label{eq:H_1}
H_{\mathrm{int}}
&=
\int dz\,\hbar g_0(z)
\Bigl(
 a_p(z,t)\,a_s^\dagger(z,t)\,b^\dagger(z,t)
\nonumber\\
&\hspace{2.7cm}
+ a_p^\dagger(z,t)\,a_s(z,t)\,b(z,t)
\Bigr).
\end{align}
where $g_0(z)$ is an effective position-dependent opto-acoustic coupling coefficient. 
It is related to the Brillouin gain coefficient $\xi_0$ (Eq. \ref{eq:xi_0})  after the appropriate classical-limit matching carried out in section \ref{subsec:classical_equivalence}. In Hamiltonian \ref{eq:H_1}, the first term, $a_p\,a_s^\dagger b^\dagger,$ indicates the annihilation of a pump photon together with the creation of a Stokes photon and a phonon: 
\begin{equation}
\text{pump photon} \;\rightarrow\; \text{Stokes photon} + \text{phonon}.
\end{equation}
The second term is its Hermitian conjugate and describes the reverse conversion process. Together, this ensures that the Hamiltonian is hermitian and that the evolution is unitary. 

We assume that the optical pump is so strong that it is approximated by a classical coherent field $a_p(z,t) \approx \alpha_p(z,t)e^{-\mathrm{i}\omega_p t}$ with \(|\alpha_p| \gg 1\). The envelope $\alpha_p(z,t) $is allowed to depend on position and time because the classical pump may be attenuated during propagation and may have a finite temporal pulse shape. In this sense, the pump behaves as a classical drive for interaction between the Stokes and the phonon field.

Under this classical pump approximation \cite{sipeHamiltonianTreatmentStimulated2016}, we define $G(z,t) = g_0(z)\alpha_p(z,t)$ as the pump-enhanced coupling. Now, the Hamiltonian has a simple form of a two-mode squeezing Hamiltonian \cite{sipeHamiltonianTreatmentStimulated2016}:
\begin{align}
H_{\mathrm{lin}}
&=
\int dz\,\hbar
\Bigl(
G(z,t)\,a_s^\dagger(z,t)b^\dagger(z,t)
\nonumber\\
&\hspace{2.5cm}
+G^*(z,t)\,a_s(z,t)b(z,t)
\Bigr).
\end{align}
This means that the Stokes photon gets entangled with the phonon as it propagates along the fiber.

 \begin{figure}[!t]
    \centering
    \includegraphics[width=0.9\linewidth]{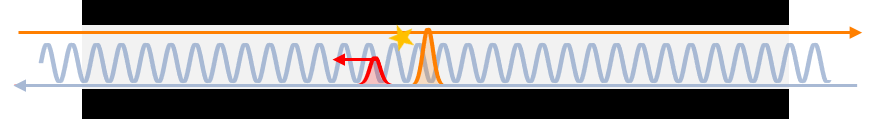}
    \caption{Light blue: CW light, depicting the probe, operating at frequency $\omega_s$. Orange: short, bright pulse, depicting the pump, operating at frequency $\omega_p$. The backscattered stokes photons are at the frequency $\omega_s$ of the probe. }
    \label{fig:brillouin}
\end{figure}

\subsection{Small-segment input--output map}
\label{sec:derivation_of_input_output}
In this section, we will consider how the Stokes photon changes as it passes through a small segment of the fiber. For this, we consider a fiber of length $L$, divided into $N$ segments of equal length $\delta z = L/N$. We can match the length of each segment, $\delta z$, to the spatial resolution of the fiber sensing. Then it would correspond to the pulse of duration. For example, nanosecond pulses would correspond to meter or sub-meter-scale resolution segments.

Let $z_i$ denote the center of the $i$th segment. As the Stokes field counterpropagates relative to the pump, we introduce a coordinate direction $s =-z$ along the Stokes field. Over $\delta z$,  we treat the following parameters as constant:
\begin{equation}
G(z)\approx G_i,\qquad
\Omega_B(z)\approx \Omega_{B,i},\qquad
\alpha_s(z)\approx \alpha_{s,i}. \nonumber
\end{equation}
The linearized \gls{sbs} interaction on the segment is
\begin{equation}
H_{\mathrm{lin},i}
=
\hbar\left(
G_i e^{\mathrm{i}\Delta k_i z} a_s^\dagger b^\dagger
+
G_i^* e^{-\mathrm{i}\Delta k_i z} a_s b
\right). \nonumber
\end{equation}
For a sufficiently short segment, the phase mismatch satisfies
$\Delta k_i\,\delta z \ll 1$, so the phase can be absorbed into the coupling
$\widetilde G_i = G_i e^{\mathrm{i}\Delta k_i z_i}$. The Hamiltonian then becomes
\begin{equation}
\label{eq:linearH2}
H_{\mathrm{lin},i}
=
\hbar\left(
\widetilde G_i a_s^\dagger b^\dagger
+
\widetilde G_i^* a_s b
\right).
\end{equation}
From the Hamiltonian \eqref{eq:linearH2}, we obtain the explicit input--output
relation of the Stokes photon mode in Proposition~\ref{prop:input-ouput}. The ingredients entering Proposition~ \ref{prop:input-ouput} are standard in quantum SBS theory \cite{zhu2024optoacousticentanglementcontinuousbrillouinactive, rakich2016quantumtheorycontinuumoptomechanics,sipeHamiltonianTreatmentStimulated2016}. The new step is the organization into an explicit input-output map of the Stokes mode which allows us to model each fiber segment as a bosonic Gaussian channel. This forms the starting point for the Quantum Fisher Information (QFI) and intrusion-detection analysis developed in the rest of the work.

\begin{proposition}
\label{prop:input-ouput}
The output Stokes operator satisfies the input--output relation
\begin{equation}
\label{eq:input_output}
a_{s,\mathrm{out}}(\Omega)
=
\mu_i(\Omega)\, a_{s,\mathrm{in}}(\Omega)
+
\sqrt{\eta'_i}\, v_i(\Omega)
+
\nu_i(\Omega)\, b_{i,\mathrm{in}}^\dagger(\Omega),
\end{equation}
where $\eta'_i=1-\eta_i$,
\begin{equation}
\begin{aligned}
\mu_i(\Omega)&=\sqrt{\eta_i}+\kappa_i(\Omega),\\
\eta_i&=e^{-\alpha_{s,i}\delta z},\\
\kappa_i(\Omega)&=|\widetilde G_i|^2\chi_i(\Omega)\,\delta z,\\
\nu_i(\Omega)&=-\mathrm{i}\widetilde G_i
\sqrt{\Gamma\delta z}\,\chi_i(\Omega).
\end{aligned}
\end{equation}
The local Brillouin susceptibility is
\begin{equation}
\chi_i(\Omega)
=
\frac{1}{\Gamma/2+\mathrm{i}(\Omega_{B,i}-\Omega)}.
\label{eq:chi_def}
\end{equation}
Here, $\kappa_i(\Omega)$ is the frequency-dependent SBS-induced gain/loss
coefficient, $\nu_i(\Omega)$ describes the added phonon noise coupled into the
Stokes mode, $\chi_i(\Omega)$ is the local Brillouin susceptibility,
$\sqrt{\eta_i}$ describes pure attenuation over a length $\delta z$, $v_i(\Omega)$
is the canonical vacuum mode associated with optical loss, and
$b_{i,\mathrm{in}}(\Omega)$ is the normalized segment phonon bath mode.
\end{proposition}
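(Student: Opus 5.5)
The plan is to derive the relation from Heisenberg--Langevin equations for the segment Hamiltonian \eqref{eq:linearH2}, supplemented by optical propagation loss and acoustic damping, and then integrate over the short segment to first order in $\delta z$. I would write the spatial (along $s=-z$) propagation equation for the Stokes envelope and the temporal equation for the phonon:
\begin{align}
\partial_s a_s &= -\tfrac{\alpha_{s,i}}{2} a_s - \mathrm{i}\widetilde G_i b^\dagger + \sqrt{\alpha_{s,i}}\, f_v,\nonumber\\
\partial_t b &= -\tfrac{\Gamma}{2} b - \mathrm{i}\widetilde G_i a_s^\dagger + \sqrt{\Gamma}\, \xi_b,\nonumber
\end{align}
where $f_v$ is a vacuum Langevin noise associated with loss and $\xi_b$ is the thermal phonon bath input. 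Both damping terms are inserted in the standard fluctuation--dissipation form, so that commutators are preserved. Because the acoustic wave decays over micrometres while the Stokes field traverses the segment, I would drop acoustic propagation, which is the usual local-phonon approximation.

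Next, I would Fourier transform in time with the convention $\omega_s=\omega_p-\Omega$, going into the frame rotating at the pump--Stokes beat. The phonon equation then becomes algebraic:
\begin{equation}
b^\dagger(\Omega)=\chi_i(\Omega)\bigl(\mathrm{i}\widetilde G_i^* a_s(\Omega)+\sqrt{\Gamma}\,\xi_b^\dagger(\Omega)\bigr),\nonumber
\end{equation}
with $\chi_i$ as in \eqref{eq:chi_def}. Getting the sign of the detuning in $\chi_i$ right requires tracking which conjugate appears; I would check it against the resonance condition $\Omega=\Omega_B$. Substituting back yields a closed linear equation for $a_s$:
\begin{equation}
\partial_s a_s = \Bigl(-\tfrac{\alpha_{s,i}}{2}+|\widetilde G_i|^2\chi_i\Bigr)a_s - \mathrm{i}\widetilde G_i\sqrt{\Gamma}\chi_i\,\xi_b^\dagger+\sqrt{\alpha_{s,i}} f_v.\nonumber
\end{equation}
The positive real part of $|\widetilde G_i|^2\chi_i$ is the Brillouin gain.

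I would then integrate across the segment. The pure-loss part is exact, giving the factor $\sqrt{\eta_i}=e^{-\alpha_{s,i}\delta z/2}$. Its accumulated vacuum noise integrates to a single normalized mode $v_i$ with weight $\sqrt{1-\eta_i}$, which is the beam-splitter form. The SBS terms are kept to first order in $\delta z$. The gain term contributes $|\widetilde G_i|^2\chi_i\delta z\, a_{s,\mathrm{in}}=\kappa_i a_{s,\mathrm{in}}$, and I replace the prefactor $\sqrt{\eta_i}$ multiplying it by $1$, since the difference is $O(\delta z^2)$. For the phonon noise, I define the normalized segment bath mode $b_{i,\mathrm{in}}=\delta z^{-1/2}\int_{\text{seg}}\xi_b\,ds$. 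Its commutator is $[b_{i,\mathrm{in}},b_{i,\mathrm{in}}^\dagger]=1$ (per unit bandwidth), which produces exactly $\nu_i=-\mathrm{i}\widetilde G_i\sqrt{\Gamma\delta z}\chi_i$. Together these give \eqref{eq:input_output}.

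The main obstacle is consistency at the level of commutators and orders of $\delta z$. The identity $|\mu_i|^2-|\nu_i|^2+\eta_i'=1$ should hold up to $O(\delta z^2)$, using $2\mathrm{Re}\,\chi_i=\Gamma|\chi_i|^2$; I would verify it explicitly to justify the normalization of $b_{i,\mathrm{in}}$ and the first-order truncation. The same check also pins down the vacuum-mode weight and how the bath densities are defined, which is the delicate part.
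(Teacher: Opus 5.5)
Your route is the paper's: Heisenberg--Langevin equations with optical loss and acoustic damping, algebraic elimination of the phonon in the frequency domain, substitution into the Stokes propagation equation, and first-order integration over the segment with normalized bath modes $b_{i,\mathrm{in}}$ and $v_i$. Treating the pure loss exactly is only a slightly cleaner version of the paper's replacements $1-\alpha_{s,i}\delta z/2\to\sqrt{\eta_i}$ and $\sqrt{\alpha_{s,i}\delta z}\to\sqrt{1-\eta_i}$; one slip is that your phonon equation omits the free-rotation term $-\mathrm{i}\Omega_{B,i}b$ that the paper includes, and without it (or the equivalent detuning $\Omega_{B,i}-\Omega$ in the frame rotating at the pump--Stokes beat) $\chi_i$ cannot depend on $\Omega_{B,i}$ as stated.
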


Note that the input--output relation Eq.~\eqref{eq:input_output} does not mean
that three independent input Hilbert spaces are mapped to a single output Hilbert
space. Rather, it is the Heisenberg-picture evolution of the Stokes mode under a
joint unitary interaction with environmental degrees of freedom. Such expressions
are standard in quantum optics. For example, a lossy channel can be modeled as a
beam splitter with transmissivity $\eta\in[0,1]$, which mixes the signal mode
$a_{\mathrm{in}}$ with an environmental vacuum mode $v$ according to $a_{\mathrm{out}}
=
\sqrt{\eta}\,a_{\mathrm{in}}
+
\sqrt{1-\eta}\,v.$ As a consistency check, if the SBS interaction is absent, $\widetilde G_i=0$,
then $\kappa_i(\Omega)=0$, $\nu_i(\Omega)=0$, and
$\mu_i(\Omega)=\sqrt{\eta_i}$. Hence Eq.~\eqref{eq:input_output} reduces to
\begin{equation}
a_{s,\mathrm{out}}(\Omega)
=
\sqrt{\eta_i}\,a_{s,\mathrm{in}}(\Omega)
+
\sqrt{1-\eta_i}\,v_i(\Omega).
\end{equation}
 The detailed analysis of the beam splitter input-output map is in Appendix \ref{sec:beam-splitter}. This can serve as a warm-up example for more complicated SBS analysis.

\begin{proof}[Proof of Proposition \ref{prop:input-ouput}]

 We assume the bosonic commutation relations \([a_s,a_s^\dagger]=1\), \([b,b^\dagger]=1\), and \([a_s,b]=[a_s,b^\dagger]=0\). Using the Heisenberg equation of motion \(\dot O = \frac{i}{\hbar}[H,O]\), we obtain:
\begin{align}
\dot a_s
&=\frac{i}{\hbar}[H_{\mathrm{lin},i},a_s] \nonumber\\
&=\frac{i}{\hbar}
\left[
\hbar\left(\widetilde G_i a_s^\dagger b^\dagger
+\widetilde G_i^* a_s b\right),
 a_s
\right] \nonumber\\
&= -\mathrm{i}\widetilde G_i b^\dagger .
\end{align}
Next, for \(b^\dagger\), we obtain
\begin{equation}
\dot b^\dagger
=
\frac{i}{\hbar}[H_{\mathrm{lin},i},b^\dagger]
=
i \widetilde G_i [a_s^\dagger b^\dagger,b^\dagger]
+
i \widetilde G_i^* [a_s b,b^\dagger]
=  i \widetilde G_i^* a_s .
\end{equation}
Similarly, \(\dot b = -i \widetilde G_i a_s^\dagger\). 

These equations describe the coherent part. We should also add damping and Langevin noise, which gives the evolution equation for slowly varying Stokes envelope as (Eq. 72 of \cite{sipeHamiltonianTreatmentStimulated2016}):
\begin{equation}
\partial_s a_s(s,t)
=
-\frac{\alpha_{s,i}}{2}a_s(s,t)
-i\widetilde G_i\,b^\dagger(s,t)
+\sqrt{\alpha_{s,i}}\,v_i(s,t),
\label{eq:as_s_equation_time}
\end{equation}
where \(\alpha_{s,i}\) is the optical power attenuation coefficient in the segment and \(v_i(s,t)\) is the corresponding optical vacuum Langevin field.  The phonon mode is modeled as a position-dependent local damped oscillator (Eq. 72 of \cite{sipeHamiltonianTreatmentStimulated2016}):
\begin{equation}
\partial_t b(s,t)
=
-\qty(\frac{\Gamma}{2}+i\Omega_{B,i})b(s,t)
-i\widetilde G_i\,a_s^\dagger(s,t)
+\sqrt{\Gamma}\,b_{\mathrm{in}}(s,t),
\label{eq:phonon_langevin}
\end{equation}
where \(\Gamma\) is the phonon damping rate, \(\Omega_{B,i}\) is the local Brillouin resonance frequency, and \(b_{\mathrm{in}}(s,t)\) describes the thermal Langevin noise of the phonon bath. Its nonvanishing correlations are
\begin{align}
\langle b_{\mathrm{in}}^\dagger(s,t)b_{\mathrm{in}}(s',t')\rangle
&=
n_{\mathrm{th},i}\,\delta(s-s')\delta(t-t'),\nonumber \\
\langle b_{\mathrm{in}}(s,t)b_{\mathrm{in}}^\dagger(s',t')\rangle
&=
\qty(n_{\mathrm{th},i}+1)\,\delta(s-s')\delta(t-t'), \nonumber
\end{align}
where
\begin{equation}
    n_{\mathrm{th},i}
=
\frac{1}{\exp\qty(\hbar\Omega_{B,i}/k_B T_m)-1}
\end{equation}
denotes the thermal occupation of the acoustic field at temperature \(T_m\).

Using the Fourier convention \(o(s,t)=\int\frac{d\Omega}{2\pi}\,e^{-i\Omega t}o(s,\Omega)\), Eq.~\eqref{eq:phonon_langevin} becomes
\begin{equation}
\begin{aligned}
-\mathrm{i}\Omega\,b(s,\Omega)
&=
-\left(\frac{\Gamma}{2}+\mathrm{i}\Omega_{B,i}\right)b(s,\Omega)
\\
&\quad
-\mathrm{i}\widetilde G_i\,a_s^\dagger(s,\Omega)
+\sqrt{\Gamma}\,b_{\mathrm{in}}(s,\Omega).
\end{aligned}
\end{equation}
Rearranging the terms gives,
\begin{equation}
b(s,\Omega)
=
\chi_i(\Omega)
\qty[-i\widetilde G_i\,a_s^\dagger(s,\Omega)
+
\sqrt{\Gamma}\,b_{\mathrm{in}}(s,\Omega)],
\label{eq:b_solution}
\end{equation}
where we identify $\chi_i(\Omega)$ as the local Brillouin susceptibility written explicitly as:
\begin{equation}
    \chi_i(\Omega)=\frac{1}{\Gamma/2+i(\Omega_{B,i}-\Omega)}
\end{equation}
 Similarly, the adjoint of $b(s,\Omega)$ is:
\begin{equation}
b^\dagger(s,\Omega)
=
\chi_i^*(\Omega)
\qty[
i\widetilde G_i^*\,a_s(s,\Omega)
+
\sqrt{\Gamma}\,b_{\mathrm{in}}^\dagger(s,\Omega)
].
\label{eq:bdag_solution}
\end{equation}
In frequency space, Eq.~\eqref{eq:as_s_equation_time} becomes
\begin{align}
&\partial_s a_s(s,\Omega)\nonumber\\
&= -\frac{\alpha_{s,i}}{2}a_s(s,\Omega)
-\mathrm{i}\widetilde G_i\,b^\dagger(s,\Omega)\nonumber\\
&\quad+\sqrt{\alpha_{s,i}}\,v_i(s,\Omega)\nonumber\\
&= -\frac{\alpha_{s,i}}{2}a_s(s,\Omega)\nonumber\\
&\quad-\mathrm{i}\widetilde G_i\chi_i(\Omega)
\left[\mathrm{i}\widetilde G_i^*\,a_s(s,\Omega)
+\sqrt{\Gamma}\,b_{\mathrm{in}}^\dagger(s,\Omega)\right]\nonumber\\
&\quad+\sqrt{\alpha_{s,i}}\,v_i(s,\Omega)\nonumber\\
&=\left[-\frac{\alpha_{s,i}}{2}
+|\widetilde G_i|^2\chi_i(\Omega)\right]a_s(s,\Omega)\nonumber\\
&\quad-\mathrm{i}\widetilde G_i\sqrt{\Gamma}\,\chi_i(\Omega)
b_{\mathrm{in}}^\dagger(s,\Omega)\nonumber\\
&\quad+\sqrt{\alpha_{s,i}}\,v_i(s,\Omega).
\label{eq:as_s_freq}
\end{align}

Integrating Eq.~\eqref{eq:as_s_freq} from \(s_i\) to \(s_i+\delta z\) gives

\begin{align}
a_s&(s_i+\delta z,\Omega)
=
a_s(s_i,\Omega)
+
\delta z\,K_i(\Omega)\,
a_s(s_i,\Omega) + 
\nonumber\\
&\quad
L_i(\Omega) +\sqrt{\alpha_{s,i}}
\int_{s_i}^{s_i+\delta z}
ds\,
v_i(s,\Omega)
\label{eq:short_segment_integrated}
\end{align}
where, $K_i(\Omega)
=
-\frac{\alpha_{s,i}}{2}
+
|\widetilde G_i|^2\chi_i(\Omega)$ and $L_i(\Omega) = -\mathrm{i}\widetilde G_i\sqrt{\Gamma}\,
\chi_i(\Omega)
\int_{s_i}^{s_i+\delta z}
ds\,
b_{\mathrm{in}}^\dagger(s,\Omega)$. We then define normalized bath modes:
\begin{equation}
\begin{aligned}
b_{i,\mathrm{in}}(\Omega)
&=
\frac{1}{\sqrt{\delta z}}
\int_{s_i}^{s_i+\delta z} ds\,
 b_{\mathrm{in}}(s,\Omega),\\
v_i(\Omega)
&=
\frac{1}{\sqrt{\delta z}}
\int_{s_i}^{s_i+\delta z} ds\,v_i(s,\Omega).
\end{aligned}
\label{eq:normalized_segment_modes}
\end{equation}
This normalization is essential as a spatially white noise field integrated over a segment should scale as \(\sqrt{\delta z}\).  Eq.~\eqref{eq:normalized_segment_modes} becomes
\begin{align}
a_s(s_i+ &\delta z,\Omega)
=
a_s(s_i,\Omega)
+
\delta z
K_i(\Omega)a_s(s_i,\Omega)
\nonumber\\
&\quad
-\mathrm{i}\widetilde G_i\sqrt{\Gamma\delta z}\,
\chi_i(\Omega)
 b_{i,\mathrm{in}}^\dagger(\Omega) +
\sqrt{\alpha_{s,i}\delta z}\,v_i(\Omega).
\label{eq:short_segment_integrated_segment_modes}
\end{align}
We now define the input and output operators of the Stokes field as
\begin{align}
    a_{s,\mathrm{in}}(\Omega)&=a_s(s_i,\Omega) \\
    a_{s,\mathrm{out}}(\Omega)&=a_s(s_i+\delta z,\Omega)
\end{align}
Then, Eq. \ref{eq:short_segment_integrated_segment_modes} becomes:
\begin{align}
a_{s,\mathrm{out}}(\Omega)
&=
\left[
1-
\frac{\alpha_{s,i}\delta z}{2}
+
|\widetilde G_i|^2\chi_i(\Omega)\delta z
\right]a_{s,\mathrm{in}}(\Omega)
\nonumber\\
&\quad
-\mathrm{i}\widetilde G_i\sqrt{\Gamma\delta z}\,
\chi_i(\Omega)b_{i,\mathrm{in}}^\dagger(\Omega)
+
\sqrt{\alpha_{s,i}\delta z}\,v_i(\Omega).
\label{eq:segment_before_loss_rewrite}
\end{align}
We can further simplify using the definition of amplitude transmissivity $\sqrt{\eta_i}=e^{-\alpha_{s,i}\delta z/2}$, which describes pure attenuation over a length $\delta z$. To first order in $\delta z$, $\sqrt{\eta_i}\approx 1-\frac{\alpha_{s,i}\delta z}{2}$ and $\sqrt{1-\eta_i}\approx \sqrt{\alpha_{s,i}\delta z}$.  Eq.~\eqref{eq:segment_before_loss_rewrite} becomes
\begin{equation}
\begin{aligned}
a_{s,\mathrm{out}}(\Omega)
&=
\left[
\sqrt{\eta_i}
+
|\widetilde G_i|^2\chi_i(\Omega)\delta z
\right]a_{s,\mathrm{in}}(\Omega)
\\
&\quad
+
\sqrt{1-\eta_i}\,v_i(\Omega)
-
\mathrm{i}\widetilde G_i\sqrt{\Gamma\delta z}\,
\chi_i(\Omega)b_{i,\mathrm{in}}^\dagger(\Omega).
\end{aligned}
\label{eq:segment_after_loss_rewrite}
\end{equation}
Defining
\begin{equation}
  \kappa_i(\Omega)=|\widetilde G_i|^2\chi_i(\Omega)\delta z, \quad  \nu_i(\Omega)=-i\widetilde G_i\sqrt{\Gamma\delta z} \,\chi_i(\Omega)
\end{equation}
where, $\mu_i(\Omega)=\sqrt{\eta_i}+\kappa_i(\Omega)$, we finally obtain
\begin{equation}
a_{s,\mathrm{out}}(\Omega)
=
\mu_i(\Omega)\,a_{s,\mathrm{in}}(\Omega)
+
\sqrt{1-\eta_i}\,v_i(\Omega)
+
\nu_i(\Omega)\,b_{i,\mathrm{in}}^\dagger(\Omega),
\end{equation}
which is the desired small-segment input--output relation.

\end{proof}

To the first order in $\delta z$, we have 
\begin{equation}
[a_{s,\mathrm{out}},a^\dagger_{s,\mathrm{out}}]
=
|\mu_i|^2+\eta_i'-|\nu_i|^2  \simeq
1+2\mathrm{Re}\,\kappa_i-|\nu_i|^2 .
\end{equation}
The susceptibility satisfies $2\mathrm{Re}\,\chi_i(\Omega)
=
\Gamma|\chi_i(\Omega)|^2 ,$
and therefore $|\nu_i|^2
=
2\mathrm{Re}\,\kappa_i .$
Hence,  $[a_{s,\mathrm{out}},a^\dagger_{s,\mathrm{out}}]
\simeq
1$, and the Eq.~\eqref{eq:input_output} satisfies the bosonic commutation relation. Physically, this means the phonon-noise term compensates the SBS gain contribution and ensures that the output Stokes operator remains a valid bosonic mode.

Therefore, each segment implements a bosonic Gaussian channel, consisting of a
deterministic linear transformation, vacuum noise associated with optical
attenuation, and thermal noise from the phonon bath. 

\begin{table*}[!t]
\centering
\footnotesize
\begin{tabular}{llll}
\hline
Quantity & Symbol & Expression & Value used \\
\hline
Linear attenuation coefficient
& $\alpha_s$
& $\dfrac{\ln 10}{10}\dfrac{\alpha_{\mathrm{dB}}}{1000}$
& $4.61\times10^{-5}\,\mathrm{m}^{-1}$ \\[1.0ex]

Angular Brillouin frequency
& $\Omega_B$
& $2\pi\times10.8\,\mathrm{GHz}$
& $6.79\times10^{10}\,\mathrm{rad/s}$ \\[1.0ex]

Angular Brillouin linewidth
& $\Gamma_B$
& $2\pi\times30\,\mathrm{MHz}$
& $1.88\times10^{8}\,\mathrm{rad/s}$ \\[1.0ex]

Effective one-dimensional SBS gain
& $\xi_0$
& $\dfrac{g_{B,0}}{A_{\mathrm{eff}}}$
& $0.31\,\mathrm{W}^{-1}\mathrm{m}^{-1}$ \\[1.0ex]

Thermal phonon occupation
& $n_{\mathrm{th}}$
& $\dfrac{1}{\exp(\hbar\Omega_B/k_BT)-1}$
& $\simeq 5.8\times10^2$ \\[1.0ex]

Peak small-signal gain rate
& $\xi_0P_p$
& $\xi_0P_p$
& $3.1\times10^{-3}$--$3.1\times10^{-2}\,\mathrm{m}^{-1}$ \\
\hline
\end{tabular}
\caption{Derived parameters from the parameters in Table \ref{tab:physical_sbs_parameters}. These derived parameters are used in the numerical plots.}
\label{tab:derived_sbs_parameters}
\end{table*}

\begin{figure}[!t] \centering \subfloat[Normalized susceptibility $(\Gamma_B/2)\chi_i(\Omega)$, where $\chi_i(\Omega)=\bigl(\Gamma_B/2+\mathrm{i}(\Omega_{B,i}-\Omega)\bigr)^{-1}$. \label{fig:brillouin_susceptibility_real_imag_panel}]{%
\includegraphics[width=0.48\linewidth]{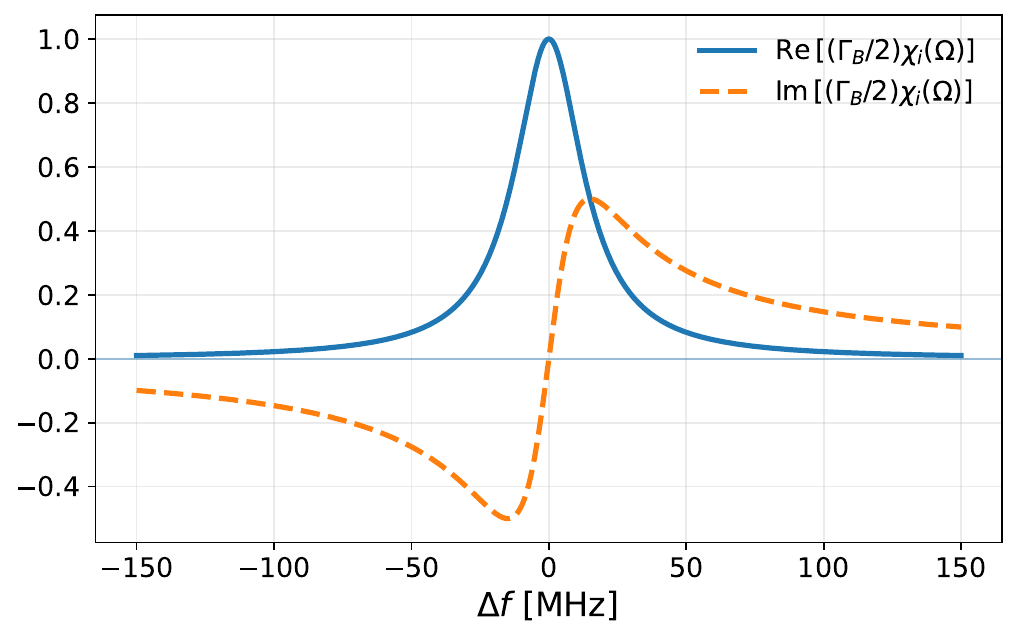}} \hfill \subfloat[Corresponding SBS-induced channel coefficient $\kappa_i(\Omega)=|\tilde{G}_i|^2\chi_i(\Omega)\delta z$ for $P_p=100\,\mathrm{mW}$ and $\delta z=0.5\,\mathrm{m}$. \label{fig:kappa_real_imag_panel}]
{ \includegraphics[width=0.48\linewidth]{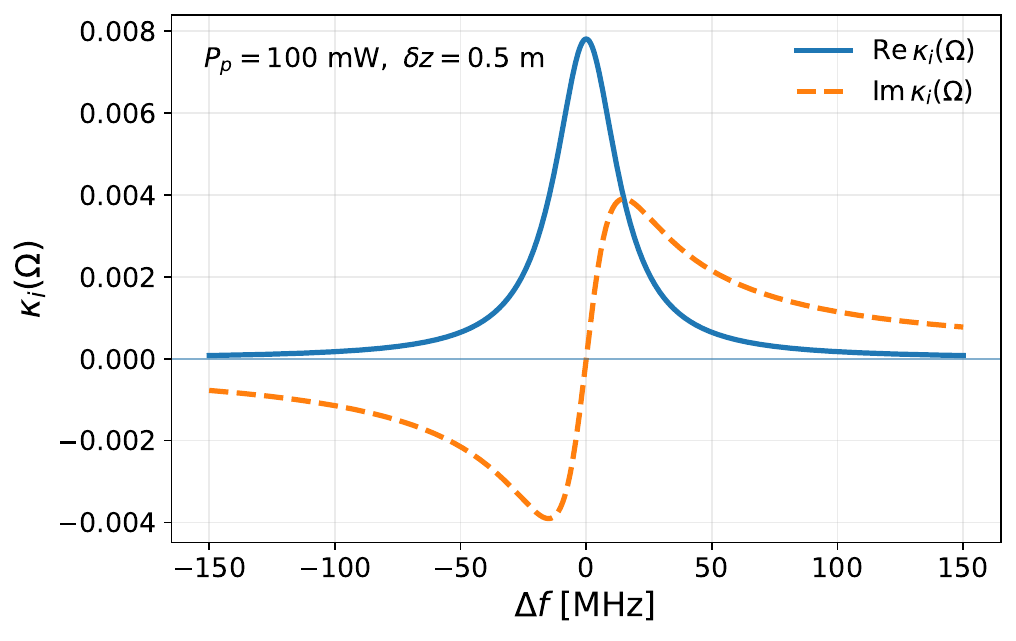}} \caption{ Real and imaginary parts of the Brillouin response using representative parameters for a standard silica single-mode fiber near $1550\,\mathrm{nm}$, summarized in Tables~\ref{tab:physical_sbs_parameters} and~\ref{tab:derived_sbs_parameters}. In both panels, the real part gives the resonant SBS gain response, while the imaginary part gives the dispersive phase response and changes sign across the Brillouin resonance. } \label{fig:brillouin_response_real_imag} \end{figure}

We will now numerically evaluate the SBS channel coefficients. For this, we have considered a standard silica single-mode fiber near $1550\,\mathrm{nm}$. The parameters are summarized in Tables~\ref{tab:physical_sbs_parameters} and~\ref{tab:derived_sbs_parameters}. Figure~\ref{fig:brillouin_response_real_imag} shows the frequency response of the local SBS channel. The real part gives the Lorentzian gain response while the imaginary part gives the dispersive phase response. Figure~\ref{fig:short_segment_power_factor} shows the short-segment power factor
$|\mu_i(\Omega)|^2 \simeq 1-\alpha_s\delta z+\xi(\Omega)P_p\delta z$,
which quantifies the deterministic attenuation and SBS-induced amplification
experienced by the Stokes mode over a fiber segment of length $\delta z$. We have considered two cases: different segment length $\delta z$, and the different pump power $P_p$. In both cases, the response is peaked at $\Omega=\Omega_B$, where the Brillouin gain spectrum is maximal. Figure~\ref{fig:sbs_phonon_noise} shows the corresponding SBS-induced
phonon-noise coefficient $|\nu_i(\Omega)|^2$. This response is also peaked at $\Omega=\Omega_B$.

\begin{figure}[!t]
    \centering

    \subfloat[Dependence on the segment length $\delta z$ for fixed pump power.
    \label{fig:mu_delta_z}]{%
        \includegraphics[width=0.48\linewidth]
        {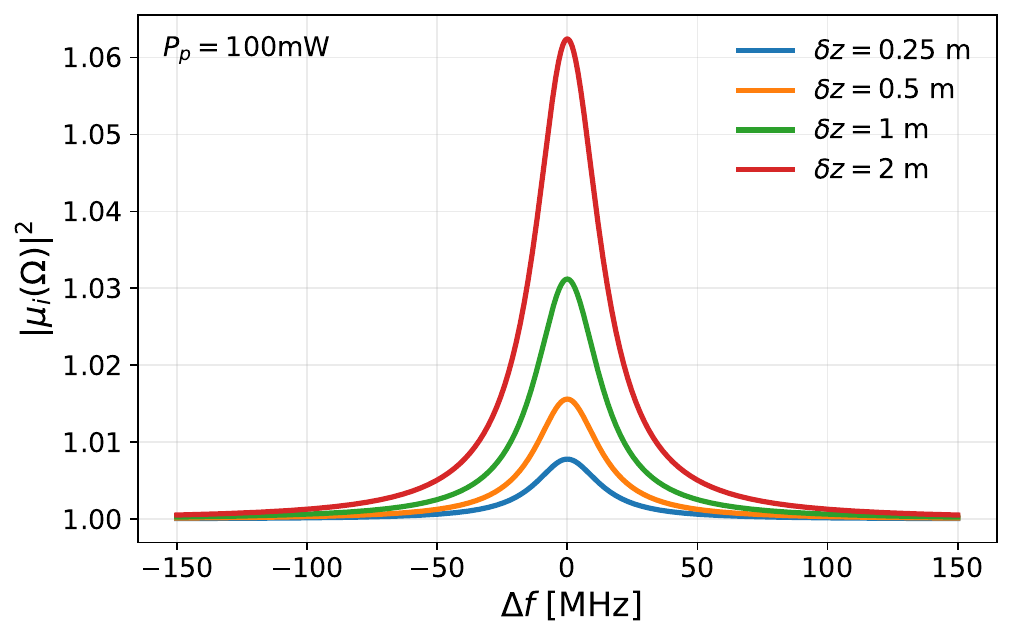}}
    \hfill
    \subfloat[Dependence on pump power $P_p$ for fixed segment length.
    \label{fig:mu_power}]{%
        \includegraphics[width=0.48\linewidth]
        {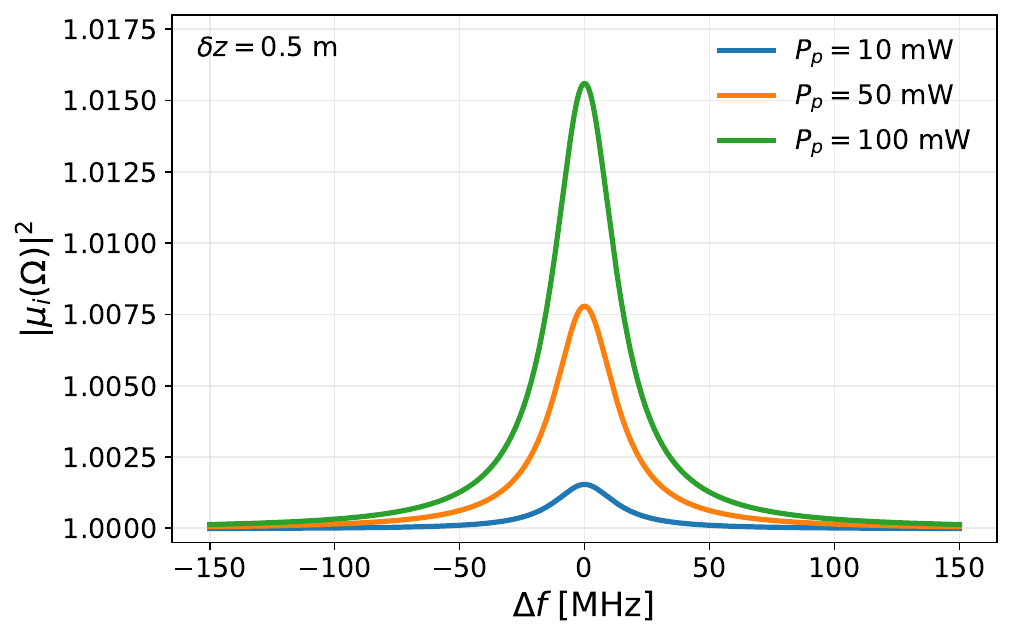}}

    \caption{
    Short-segment SBS power factor
    $|\mu_i(\Omega)|^2$
    as a function of detuning $\Delta f=(\Omega-\Omega_B)/2\pi$ using representative parameters for a standard silica single-mode fiber near
$1550\,\mathrm{nm}$, summarized in
Tables~\ref{tab:physical_sbs_parameters} and~\ref{tab:derived_sbs_parameters}.
    }
    \label{fig:short_segment_power_factor}
\end{figure}

\begin{figure}[!t]
    \centering

    \subfloat[Dependence on the segment length $\delta z$ for fixed pump power.
    \label{fig:nu_delta_z}]{%
        \includegraphics[width=0.48\linewidth]{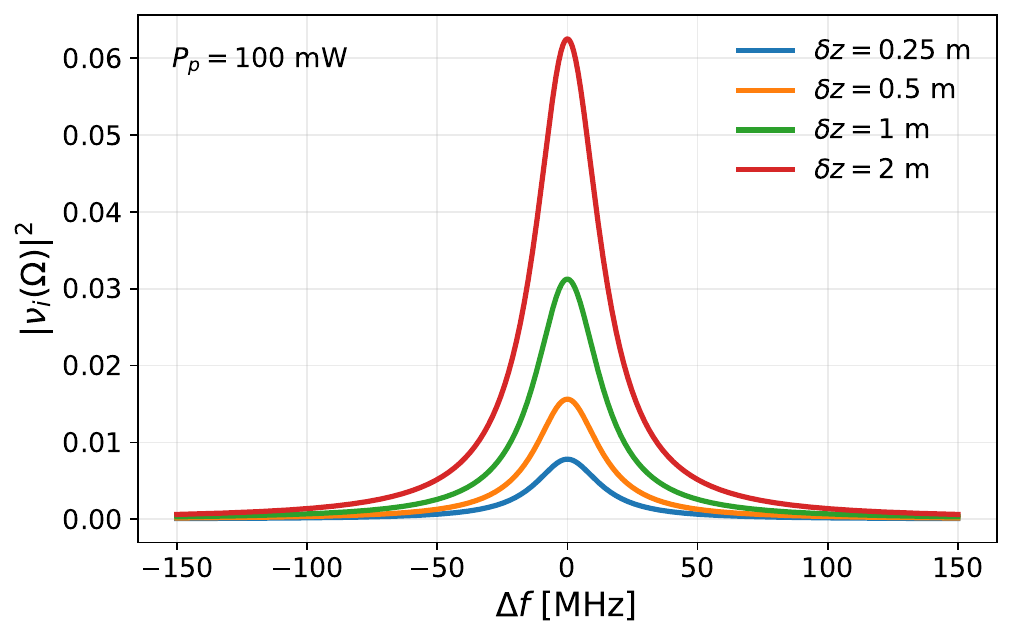}}
    \hfill
    \subfloat[Dependence on pump power $P_p$ for fixed segment length.
    \label{fig:nu_power}]{%
        \includegraphics[width=0.48\linewidth]{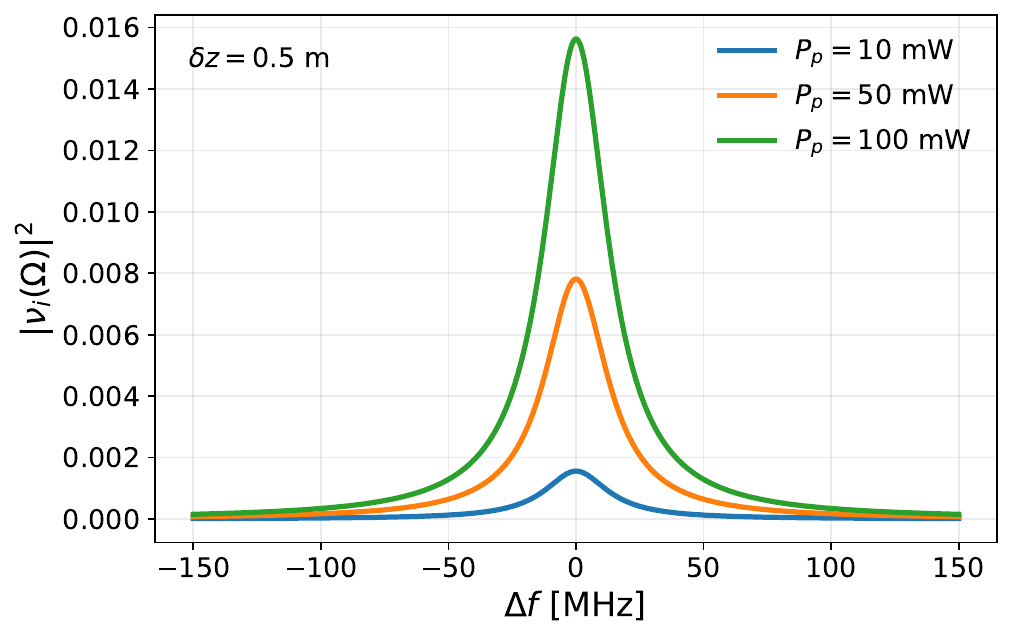}}

   \caption{
SBS-induced phonon-noise spectrum 
$|\nu_i(\Omega)|^2 $
as a function of detuning $\Delta f=(\Omega-\Omega_B)/2\pi$ using representative parameters for a standard silica single-mode fiber near
$1550\,\mathrm{nm}$, summarized in
Tables~\ref{tab:physical_sbs_parameters} and~\ref{tab:derived_sbs_parameters}.
}
    \label{fig:sbs_phonon_noise}
\end{figure}

\subsection{Classical limit}
\label{subsec:classical_equivalence}

We now show that the quantum input--output model reproduces the classical SBS
power-gain equation in the appropriate limit. Starting from the local propagation
equation for the Stokes mode,
\begin{equation}
\begin{aligned}
\partial_s a_s(s,\Omega)
&=
\left(
-\frac{\alpha_{s,i}}{2}
+|\widetilde G_i|^2\chi_i(\Omega)
\right)a_s(s,\Omega)
\\
&\quad
-\mathrm{i}\widetilde G_i\sqrt{\Gamma}\chi_i(\Omega)\,
b_{\mathrm{in}}^\dagger(s,\Omega)
+
\sqrt{\alpha_{s,i}}\,v_i(s,\Omega).
\end{aligned}
\label{eq:classical_start}
\end{equation}
we consider the deterministic mean-field limit. Since the Langevin noise operators
have zero mean, $\langle v_i(s,\Omega)\rangle = 0,
\langle b_{\mathrm{in}}(s,\Omega)\rangle = 0,$
they do not contribute to the evolution of the coherent field amplitude
$\langle a_s\rangle$. Taking the expectation value of
Eq.~\eqref{eq:classical_start} therefore gives
\begin{equation}
\partial_s \langle a_s(s,\Omega)\rangle
=
\left(
-\frac{\alpha_{s,i}}{2}
+
|\widetilde G_i|^2\chi_i(\Omega)
\right)
\langle a_s(s,\Omega)\rangle .
\label{eq:classical_amp}
\end{equation}

The optical Stokes power is proportional to the photon number, or equivalently to
the squared field amplitude in the coherent-state limit $P_s(s,\Omega)
\propto
|\langle a_s(s,\Omega)\rangle|^2 .$
Using Eq.~\eqref{eq:classical_amp}, the corresponding power evolution becomes
\begin{equation}
\frac{dP_s}{ds}
=
\left[
-\alpha_{s,i}
+
2|\widetilde G_i|^2\operatorname{Re}\chi_i(\Omega)
\right]
P_s .
\label{eq:power_quantum}
\end{equation}

In the absence of ordinary fiber attenuation, the classical SBS equation is $\frac{dP_s}{ds}
=
\xi(\Omega)P_pP_s ,$
where $P_p$ is the pump power and $\xi(\Omega)$ is the Brillouin gain spectrum.
Including ordinary propagation loss gives the more general form $\frac{dP_s}{ds}
=
\left[
-\alpha_{s,i}
+
\xi(\Omega)P_p
\right]
P_s$. We thus identify the effective quantum coupling
through $2|\widetilde G_i|^2\operatorname{Re}\chi_i(\Omega)
=
\xi(\Omega)P_p .$
Equivalently, using
$\kappa_i(\Omega)=|\widetilde G_i|^2\chi_i(\Omega)\delta z$, we obtain $2\operatorname{Re}\kappa_i(\Omega)
=
\xi(\Omega)P_p\,\delta z .$ Thus the real part of $\kappa_i(\Omega)$ reproduces the classical SBS
power gain over the segment, while the imaginary part gives the associated
dispersive phase shift.

$2|\widetilde G_i|^2\operatorname{Re}\chi_i(\Omega)$ and the classical Brillouin gain spectrum $\xi(\Omega)$ have the same Lorentzian form:

\begin{align}
2|\widetilde G_i|^2\operatorname{Re}\chi_i(\Omega)
&=
|\widetilde G_i|^2
\frac{\Gamma}
{(\Gamma/2)^2+(\Omega_{B,i}-\Omega)^2},
\nonumber\\
\xi(\Omega)
&=
\xi_0
\frac{(\Gamma_B/2)^2}
{(\Omega_B-\Omega)^2+(\Gamma_B/2)^2}.
\label{eq:quantum_lorentzian}
\end{align}
Assuming $\Gamma=\Gamma_B$, the effective quantum coupling may be calibrated by matching  $2|\widetilde G_i|^2\operatorname{Re}\chi_i(\Omega)
=
\xi(\Omega)P_p .$ Equivalently, up to normalization conventions  $|\widetilde G_i|^2
=
\frac{\xi_0P_p\Gamma_B}{4}$.

The classical limit can also be obtained from the input--output map,
\begin{equation}
\langle a_{s,\mathrm{out}}^\dagger a_{s,\mathrm{out}}\rangle
=
|\mu_i(\Omega)|^2
\langle a_{s,\mathrm{in}}^\dagger a_{s,\mathrm{in}}\rangle
+
|\nu_i(\Omega)|^2\left(n_{\mathrm{th},i}+1\right),
\end{equation}
where the cross terms vanish because the input, optical vacuum, and phonon bath
modes are independent, and the optical vacuum contributes no normally ordered
photons. The second term describes spontaneous and thermal Brillouin scattering.
For the deterministic stimulated contribution, we neglect this additive noise
term and obtain $P_{s,i+1}(\Omega)
\approx
|\mu_i(\Omega)|^2P_{s,i}(\Omega).$
To first order in $\delta z$, $P_{s,i+1}(\Omega)
\simeq
\left[
1+
\left(
-\alpha_{s,i}
+
2|\widetilde G_i|^2\operatorname{Re}\chi_i(\Omega)
\right)\delta z
\right]
P_{s,i}(\Omega).$
Taking the continuum limit gives
\begin{equation}
\frac{dP_s}{ds}
=
\left[
-\alpha_s
+
2|\widetilde G|^2\operatorname{Re}\chi(\Omega)
\right]P_s.
\end{equation}
Comparing with the classical SBS equation gives  $2\operatorname{Re}\kappa_i(\Omega)
=
\xi(\Omega)P_p\,\delta z.$

\section{Continuous variable formalism}
\label{sec:fiber_cascade}
Continuous-variable (CV) quantum systems are bosonic modes described by canonical quadrature operators rather than finite-dimensional qubits \cite{bookserfini, HolevoBook}. For an $n$-mode system, we define the quadrature vector $\hat{\mathbf R}
=
(\hat x_1,\hat p_1,\ldots,\hat x_n,\hat p_n)^T,$ with canonical commutation relations $[\hat R_j,\hat R_k]= \mathrm{i}\,\tilde \Omega_{jk},$
where $\tilde \Omega$ is the symplectic form. For a single mode with annihilation operator $\hat a$, the quadratures are
\begin{equation}
q=\frac{ a+ a^\dagger}{\sqrt 2},
\qquad
 p=\frac{ a-  a^\dagger}{\mathrm{i}\sqrt 2}.
\end{equation}
A Gaussian state is any state whose Wigner function is Gaussian in phase space. Such a state is completely characterized by its first and second moments. The mean vector is $ d=\langle \hat{R}\rangle$
and the covariance matrix is $V_{jk}
=
\frac12\left\langle
\{\hat R_j-d_j,\hat R_k-d_k\}
\right\rangle.$

We will now show that each fiber segment $i$ under the linearized stimulated Brillouin scattering interaction defines a phase-insensitive single-mode Gaussian channel acting on the Stokes mode. For this, we consider an optical probe mode entering segment \(i\) as modeled by a local Gaussian channel \(\mathcal E_i(z_c)\), whose parameters depend on the correlation-peak position \(z_c\). As derived in Section \ref{sec:derivation_of_input_output}, the input Stokes mode in the segment $i$ evolves as 
\begin{equation}
\begin{aligned}
\label{eq:aout_ain}
a_{s,\mathrm{out}}(\Omega)
&=
\mu_i(\Omega)\,a_{s,\mathrm{in}}(\Omega)
+
\sqrt{\eta'_i}\,v_i(\Omega)
+
\nu_i(\Omega)\,b_{\mathrm{in}}^\dagger(\Omega), \\
a_{s,\mathrm{out}}^\dagger(\Omega)
&=
\mu_i^*(\Omega)\, a_{s,\mathrm{in}}^\dagger(\Omega)
+
\sqrt{\eta'_i}\, v_i^\dagger(\Omega)
+
\nu_i^*(\Omega)\, b_{\mathrm{in}}(\Omega).
\end{aligned}
\end{equation}
where $\mu_i(\Omega)=\sqrt{\eta_i}+\kappa_i(\Omega)$. We will now show that each fiber segment can be modeled as a single-mode Gaussian channel.

\begin{proposition}[Gaussian channel representation of a fiber segment]
 In quadrature form, the channel of the segment $i$ is described by
\begin{align}
d_{i+1} &= X_i d_i, \\
V_{i+1} &= X_i V_i X_i^T + Y_i,
\end{align}
where the deterministic matrix is
\begin{align}
X_i(\Omega) 
&=
\begin{pmatrix}
\Re \mu_i(\Omega) & -\Im \mu_i(\Omega) \\
\Im \mu_i(\Omega) & \Re \mu_i(\Omega)
\end{pmatrix} \\
&=
\begin{pmatrix}
\sqrt{\eta_i} + \Re \kappa_i(\Omega) & -\Im \kappa_i(\Omega) \\
\Im \kappa_i(\Omega) & \sqrt{\eta_i} + \Re \kappa_i(\Omega)
\end{pmatrix},
\end{align}
with $\mu_i(\Omega) = \sqrt{\eta_i} + \kappa_i(\Omega)$. The added-noise matrix is
\begin{equation}
Y_i =
\left[
\left(n_{\mathrm{th},i}+\frac{1}{2}\right)|\nu_i(\Omega)|^2
+
\frac{1-\eta_i}{2}
\right]\mathbb{I}_2.
\end{equation}
\end{proposition}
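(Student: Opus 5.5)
We start from the Heisenberg-picture input--output relation \eqref{eq:aout_ain} and convert it into quadrature form. Write $a_{s,\mathrm{in}}=(q_{\mathrm{in}}+\mathrm{i}p_{\mathrm{in}})/\sqrt2$, and similarly for the vacuum mode $v_i$ with quadratures $(q_v,p_v)$ and for the phonon bath mode $b_{\mathrm{in}}$ with quadratures $(q_b,p_b)$. Then substitute into $q_{\mathrm{out}}=(a_{s,\mathrm{out}}+a_{s,\mathrm{out}}^\dagger)/\sqrt2$ and $p_{\mathrm{out}}=(a_{s,\mathrm{out}}-a_{s,\mathrm{out}}^\dagger)/(\mathrm{i}\sqrt2)$.

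Expanding $\mu_i a_{s,\mathrm{in}}$ gives
\begin{align}
q_{\mathrm{out}} &\ni \Re\mu_i\, q_{\mathrm{in}}-\Im\mu_i\, p_{\mathrm{in}},\nonumber\\
p_{\mathrm{out}} &\ni \Im\mu_i\, q_{\mathrm{in}}+\Re\mu_i\, p_{\mathrm{in}}.\nonumber
\end{align}
This is exactly $X_i$ acting on $(q_{\mathrm{in}},p_{\mathrm{in}})^T$. The vacuum term contributes $\sqrt{\eta_i'}(q_v,p_v)^T$. Because the phonon enters through $b_{\mathrm{in}}^\dagger$, the term $\nu_i b_{\mathrm{in}}^\dagger$ contributes $M_i(q_b,p_b)^T$, where
\begin{equation}
M_i=\begin{pmatrix}\Re\nu_i & \Im\nu_i\\ \Im\nu_i & -\Re\nu_i\end{pmatrix},\nonumber
\end{equation}
a reflection times $|\nu_i|$ (phase-conjugating, amplifier-type coupling). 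Altogether,
\begin{equation}
\hat R_{\mathrm{out}}=X_i\hat R_{\mathrm{in}}+\sqrt{\eta_i'}\,\hat R_v+M_i\hat R_b .\nonumber
\end{equation}

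The first moment follows by taking expectations. The vacuum and thermal bath modes have zero mean, so $d_{i+1}=X_id_i$.

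For the second moments, the three modes are independent (the product-state assumption already used in the classical-limit section), so the symmetrized cross-covariances vanish. The output covariance is therefore
\begin{equation}
V_{i+1}=X_iV_iX_i^T+\eta_i' V_v+M_iV_bM_i^T .\nonumber
\end{equation}
The vacuum has $V_v=\tfrac12\mathbb I_2$, and the thermal bath with occupation $n_{\mathrm{th},i}$ (from the stated Langevin correlations, after the segment normalization \eqref{eq:normalized_segment_modes}) has $V_b=(n_{\mathrm{th},i}+\tfrac12)\mathbb I_2$. Since $M_iM_i^T=|\nu_i|^2\mathbb I_2$, we get
\begin{equation}
Y_i=\Bigl[\bigl(n_{\mathrm{th},i}+\tfrac12\bigr)|\nu_i|^2+\tfrac{1-\eta_i}{2}\Bigr]\mathbb I_2 .\nonumber
\end{equation}
The second form of $X_i$ follows by inserting $\mu_i=\sqrt{\eta_i}+\kappa_i$, with $\sqrt{\eta_i}$ real.

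The main subtlety is that the relation is frequency-resolved. The "single mode" must be understood at fixed $\Omega$, with the bath mode at $\Omega$ treated as an independent single mode of normalized correlations, so that the $\delta(\Omega-\Omega')$ factors drop out consistently. A further point is that the term with $b_{\mathrm{in}}^\dagger$ requires care with the sign and conjugation in $M_i$.

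Finally, to justify calling this a valid Gaussian channel, we check the complete-positivity condition
\begin{equation}
Y_i+\tfrac{\mathrm{i}}{2}\bigl(\tilde\Omega-X_i\tilde\Omega X_i^T\bigr)\ge0 .\nonumber
\end{equation}
Here $X_i\tilde\Omega X_i^T=|\mu_i|^2\tilde\Omega$, so the condition reduces to $Y_i\ge\tfrac12\bigl|1-|\mu_i|^2\bigr|\mathbb I_2$. To first order in $\delta z$ this holds, because $|\nu_i|^2=2\Re\kappa_i$ (shown above) gives $\tfrac{1-\eta_i}{2}+\tfrac{|\nu_i|^2}{2}\ge\tfrac12\bigl|\eta_i'-2\Re\kappa_i\bigr|$, while $n_{\mathrm{th},i}\ge0$ only adds noise.
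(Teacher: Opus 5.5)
Your proposal is correct and follows the paper's proof essentially step for step. It uses the same quadrature decomposition $\hat R_{\mathrm{out}}=X_i\hat R_{\mathrm{in}}+\sqrt{\eta_i'}\,\hat R_v+M_i\hat R_b$ (your $M_i$ is exactly the paper's $Z_i$), zero bath means for the first moments, and, for the second moments, uncorrelated inputs with $V_b=(n_{\mathrm{th},i}+\tfrac12)\mathbb I_2$, $V_v=\tfrac12\mathbb I_2$ and $M_iM_i^T=|\nu_i|^2\mathbb I_2$. Your closing complete-positivity check goes a little beyond the paper, which only states that inequality after the proof, and you are right to claim it only to first order in $\delta z$ (via $|\nu_i|^2=2\Re\kappa_i\ge 0$).
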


\begin{proof}
    We write $\mu_i$ and $\nu_i$ in terms of their real and imaginary parts as $\mu_i=\mu_{R,i}+ \mathrm{i} \mu_{I,i}$, and $\nu_i=\nu_{R,i}+ \mathrm{i}\nu_{I,i}$ respectively. Then Eq.~\eqref{eq:aout_ain} becomes
\begin{align}
q_{\mathrm{out}}
&=
\mu_{R,i}\, q - \mu_{I,i}\, p
+ \sqrt{1-\eta_i}\, q_v
+ \nu_{R,i}\, q_b + \nu_{I,i}\, p_b,
\nonumber\\
p_{\mathrm{out}}
&=
\mu_{I,i}\, q + \mu_{R,i}\, p
+ \sqrt{1-\eta_i}\, p_v
+ \nu_{I,i}\, q_b - \nu_{R,i}\, p_b.
\end{align}
where the quadrature vectors are 
\begin{equation}
\begin{gathered}
R_{\mathrm{in}}=
\begin{pmatrix}q\\ p\end{pmatrix},\qquad
R_{\mathrm{out}}=
\begin{pmatrix}q_{\mathrm{out}}\\ p_{\mathrm{out}}\end{pmatrix},\\
R_b=
\begin{pmatrix}q_b\\ p_b\end{pmatrix},\qquad
R_v=
\begin{pmatrix}q_v\\ p_v\end{pmatrix}.
\end{gathered}
\end{equation}
the segment transformation can be written as $R_{\mathrm{out}} = X_i R_{\mathrm{in}} + Z_i R_b + \sqrt{1-\eta_i}\, R_v$ with
\begin{equation}
X_i=
\begin{pmatrix}
\Re \mu_i & -\Im \mu_i \\
\Im \mu_i & \Re \mu_i
\end{pmatrix},
\quad
Z_i=
\begin{pmatrix}
\Re \nu_i & \Im \nu_i \\
\Im \nu_i & -\Re \nu_i
\end{pmatrix}.
\label{eq:XiZi}
\end{equation}
We define the displacement vectors as $d_{\mathrm{in}}=\langle R_{\mathrm{in}} \rangle$, $d_{\mathrm{out}}=\langle R_{\mathrm{out}} \rangle$, $d_b=\langle R_b \rangle$, and $d_v=\langle R_v \rangle$, and the covariance matrices as $(V_{\mathrm{in}})_{mn}
=
\frac{1}{2}
\left\langle
\Delta R_{\mathrm{in},m}\Delta R_{\mathrm{in},n}
+
\Delta R_{\mathrm{in},n}\Delta R_{\mathrm{in},m}
\right\rangle$ with analogous definitions for $V_{\mathrm{out}}, V_b,$ and $V_v$, where $\Delta R = R - \langle R \rangle$.

Taking expectation values gives $d_{\mathrm{out}} = X_i d_{\mathrm{in}} + Z_i d_b + \sqrt{1-\eta_i}\, d_v.$ For vacuum and thermal baths, $d_b = 0$ and $d_v = 0$, hence
\begin{equation}
d_{\mathrm{out}} = X_i d_{\mathrm{in}}.
\label{eq:dout}
\end{equation}

For the covariance matrix, we have $\Delta R_{\mathrm{out}} = X_i \Delta R_{\mathrm{in}} + Z_i \Delta R_b + \sqrt{1-\eta_i}\,\Delta R_v.$ Assuming all inputs are uncorrelated, this yields
\begin{equation}
V_{\mathrm{out}} = X_i V_{\mathrm{in}} X_i^T + Z_i V_b Z_i^T + (1-\eta_i)V_v.
\end{equation}
Defining $Y_i = Z_i V_b Z_i^T + (1-\eta_i)V_v,$ we obtain the Gaussian-channel form
$V_{\mathrm{out}} = X_i V_{\mathrm{in}} X_i^T + Y_i.$

For a thermal phonon bath with mean occupation $n_{\mathrm{th},i}$,
\begin{equation}
V_b =
\left(
n_{\mathrm{th},i}+\frac{1}{2}
\right) I_2,
\quad
V_v = \frac{1}{2} I_2.
\label{eq:VbVv}
\end{equation}

Thus,
\begin{align}
Y_i
&= Z_i V_b Z_i^T + (1-\eta_i)V_v
\nonumber\\
&=
\left(n_{\mathrm{th},i}+\frac{1}{2}\right)|\nu_i|^2 I_2
+
\frac{1-\eta_i}{2} I_2.
\label{eq:Yi_final}
\end{align}

Therefore,
\begin{align}
d_{\mathrm{out}} &= X_i d_{\mathrm{in}}, \nonumber\\
V_{\mathrm{out}} &= X_i V_{\mathrm{in}} X_i^T
+
\left[
\left(
n_{\mathrm{th},i}+\frac{1}{2}
\right)|\nu_i|^2
+
\frac{1-\eta_i}{2}
\right] I_2.
\label{eq:gaussian_channel_final}
\end{align}
where
\begin{equation}
X_i(\Omega)=
\begin{pmatrix}
\sqrt{\eta_i}+\Re \kappa_i(\Omega) & -\Im \kappa_i(\Omega) \\
\Im \kappa_i(\Omega) & \sqrt{\eta_i}+\Re \kappa_i(\Omega)
\end{pmatrix}.
\label{eq:Xi_final}
\end{equation}

This shows that $\sqrt{\eta_i}$ represents attenuation, $\Re \kappa_i(\Omega)$ gives SBS gain/loss, $\Im \kappa_i(\Omega)$ produces phase rotation, and $Y_i$ contains both vacuum noise and thermal Brillouin noise. Thus, each short fiber segment is modeled as a one-mode Gaussian channel.
\end{proof}

The complete-positivity condition imposes a fundamental constraint on the relationship between gain and added noise:
\begin{equation}
Y_i+\frac{\mathrm{i}}{2} \tilde \Omega-X_i\frac{\mathrm{i}}{2} \tilde \Omega X_i^T\ge 0,
\quad
\tilde \Omega=
\begin{pmatrix}
0&1\\
-1&0
\end{pmatrix}.
\end{equation}
This reduces to the determinant being non-negative as:
\begin{equation}
\frac{1-\eta_i}{2}
+
\left(n_{\mathrm{th},i}+\frac12\right)|\nu_i|^2
\ge
\frac12\left|1-\left|\sqrt{\eta_i}+\kappa_i\right|^2\right|.
\end{equation}
The inequality represents a quantum noise bound: the total added noise must exceed a minimum value determined by the channel's net gain or loss. Physically, this is necessary to satisfy the bosonic commutation relationships at all points.

\subsection{Fiber as a cascade of quantum channels}

The entire fiber is therefore described by a cascade of quantum channels,
\begin{equation}
\mathcal E_{\mathrm{fiber}}
=
\mathcal E_L
\circ
\mathcal E_{L-1}
\circ
\cdots
\circ
\mathcal E_1.
\label{eq:fiber_cascade}
\end{equation} 
The cascade of two Gaussian channels with parameters \((X_1,Y_1)\) and \((X_2,Y_2)\) is again Gaussian, with total deterministic matrix $X_{21}=X_2X_1$, and the noise matrix $Y_{21}=X_2Y_1X_2^T+Y_2.$ 

Hence, the entire fiber, modeled as a cascade of \(L\) local Gaussian channels, has total parameters
\begin{equation}
\begin{aligned}
\label{eqn:fiber-channel}
X_{\mathrm{tot}} &=X_LX_{L-1}\cdots X_1, \\
Y_{\mathrm{tot}}
&=
\sum_{k=1}^L
\left(
X_LX_{L-1}\cdots X_{k+1}
\right)
Y_k
\left(
X_LX_{L-1}\cdots X_{k+1}
\right)^T.
\end{aligned}
\end{equation}

\begin{proposition}
    The explicit form of the parameters $X_{\mathrm{tot}}$ and $Y_{\mathrm{tot}}$ are as follows

\begin{equation}
  X_{\mathrm{tot}}=
\begin{pmatrix}
\Re \mu_{\mathrm{tot}} & -\Im \mu_{\mathrm{tot}}\\
\Im \mu_{\mathrm{tot}} & \Re \mu_{\mathrm{tot}}
\end{pmatrix},
\qquad
Y_{\mathrm{tot}}=y_{\mathrm{tot}}\mathbb I_2
\end{equation}
where parameters $\mu_{\mathrm{tot}}$ and $ y_{\mathrm{tot}}$ given by
\begin{equation}
    \begin{aligned}
        \mu_{\mathrm{tot}}&=\prod_{k=1}^L \mu_k \\
        y_{\mathrm{tot}} &=\sum_{k=1}^L
\left(
\prod_{j=k+1}^N |\mu_j|^2
\right)
\left[
\tfrac{\eta'_k}{2}
+
\left(n_{\mathrm{th},k}+\tfrac12\right)|\nu_k|^2
\right].
    \end{aligned}
\end{equation}
\end{proposition}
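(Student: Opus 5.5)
The proof rests on one algebraic fact: every $X_i$ is the real $2\times 2$ representation of a complex number. Define the map $\phi:\mathbb C\to\mathbb R^{2\times 2}$ by
\begin{equation}
\phi(z)=\begin{pmatrix}\Re z & -\Im z\\ \Im z & \Re z\end{pmatrix}.
\end{equation}
This map is the standard ring homomorphism, so $\phi(z)\phi(w)=\phi(zw)$ and $\phi(z)^T=\phi(\bar z)$. Consequently $\phi(z)\phi(z)^T=\phi(|z|^2)=|z|^2\mathbb I_2$. The product identity can be checked directly by multiplying the two matrices and comparing entries with $\Re(zw)$ and $\Im(zw)$. The Gaussian channel representation proposition gives $X_i=\phi(\mu_i)$ and $Y_i=y_i\mathbb I_2$, where
\begin{equation}
y_i=\tfrac{\eta_i'}{2}+\left(n_{\mathrm{th},i}+\tfrac12\right)|\nu_i|^2 .
\end{equation}
Everything else follows by substituting these forms into the cascade formulas \eqref{eqn:fiber-channel}.

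First, for $X_{\mathrm{tot}}$, I will show by induction on $L$ that $X_L\cdots X_1=\phi(\mu_L)\cdots\phi(\mu_1)=\phi(\mu_L\cdots\mu_1)$, using multiplicativity at each step. Because $\mathbb C$ is commutative, the ordering of the product is irrelevant, so this equals $\phi(\prod_k\mu_k)=\phi(\mu_{\mathrm{tot}})$. This is exactly the claimed form of $X_{\mathrm{tot}}$.

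Second, for $Y_{\mathrm{tot}}$, set $M_k:=X_L\cdots X_{k+1}=\phi(\prod_{j=k+1}^L\mu_j)$, with the convention $M_L=\mathbb I_2$. Then each term of the sum in \eqref{eqn:fiber-channel} becomes
\begin{equation}
M_k Y_k M_k^T = y_k M_kM_k^T = y_k\prod_{j=k+1}^L|\mu_j|^2\,\mathbb I_2 .
\end{equation}
Summing over $k$ gives $Y_{\mathrm{tot}}=y_{\mathrm{tot}}\mathbb I_2$ with the stated expression. As an alternative check, I can verify the two-channel recursion $Y_{21}=X_2Y_1X_2^T+Y_2$ directly and unroll it inductively, which reproduces the same sum.

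The only mild obstacles are bookkeeping. One is that the statement writes the upper product limit as $N$ where the cascade has $L$ segments; I will read it as $L$ and treat the empty product for $k=L$ as equal to $1$. The other is confirming $\phi(z)^T=\phi(\bar z)$, so that the transpose in $M_kY_kM_k^T$ yields $|z|^2$ rather than $z^2$. Both are routine once the homomorphism $\phi$ is set up.
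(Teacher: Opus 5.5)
Your proposal is correct and follows essentially the same route as the paper, which also identifies each $X_i$ as the real representation of multiplication by $\mu_i$, multiplies out $X_2X_1$ to get the product rule, recurses to obtain $\mu_{\mathrm{tot}}=\prod_k\mu_k$, and then uses $Y_k=y_k\mathbb I_2$ together with $T_kT_k^T=|\mu^{(k)}|^2\mathbb I_2$ to sum the cascade noise. Your identity $\phi(z)^T=\phi(\bar z)$ justifies the step $T_kT_k^T=|\mu^{(k)}|^2\mathbb I_2$, which the paper states without comment, and your reading of the upper product limit $N$ as $L$ agrees with the paper's own proof.
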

\begin{proof}

We now show that the total deterministic matrix \(X_{\mathrm{tot}}\) has
the same rotation-scaling structure as each local segment matrix \(X_i\). For each fiber segment \(i\), the deterministic part of the Gaussian
channel is
\begin{equation}
X_i =
\begin{pmatrix}
\Re \mu_i & -\Im \mu_i \\
\Im \mu_i & \Re \mu_i
\end{pmatrix},
\qquad
\mu_i = \sqrt{\eta_i} + \kappa_i .
\end{equation}
This matrix is the real \(2\times 2\) representation of multiplication by
the complex number \(\mu_i\). To see this, let $\mu_i = a_i + \mathrm{i} b_i$ with \(a_i = \Re \mu_i\) and \(b_i = \Im \mu_i\). Then
\begin{equation}
X_i =
\begin{pmatrix}
a_i & -b_i \\
b_i & a_i
\end{pmatrix}.
\end{equation}

Consider now two consecutive segments with coefficients
\(\mu_1 = a_1 + i b_1\) and \(\mu_2 = a_2 + i b_2\). Their deterministic
matrices satisfy
\begin{equation}
\begin{aligned}
X_2 X_1
&=
\begin{pmatrix}
a_2 & -b_2 \\
b_2 & a_2
\end{pmatrix}
\begin{pmatrix}
a_1 & -b_1 \\
b_1 & a_1
\end{pmatrix}
\\
&=
\begin{pmatrix}
a_2 a_1-b_2 b_1 & -(a_2 b_1+b_2 a_1) \\
a_2 b_1+b_2 a_1 & a_2 a_1-b_2 b_1
\end{pmatrix}.
\end{aligned}
\end{equation}
On the other hand, the product of the corresponding complex coefficients is $\mu_2 \mu_1
=
(a_2 + i b_2)(a_1 + i b_1)
=
(a_2 a_1 - b_2 b_1) + i(a_2 b_1 + b_2 a_1).$ Therefore,
\begin{equation}
X_2 X_1 =
\begin{pmatrix}
\Re(\mu_2 \mu_1) & -\Im(\mu_2 \mu_1) \\
\Im(\mu_2 \mu_1) & \Re(\mu_2 \mu_1)
\end{pmatrix}.
\end{equation}
Hence, the product of two matrices of this form is again a matrix of the same form, with a coefficient given by the product of the associated complex
numbers. Applying this recursively to the cascade of \(L\) fiber segments, $X_{\mathrm{tot}} = X_L X_{L-1} \cdots X_1,$  we conclude that
\begin{equation}
X_{\mathrm{tot}} =
\begin{pmatrix}
\Re \mu_{\mathrm{tot}} & -\Im \mu_{\mathrm{tot}} \\
\Im \mu_{\mathrm{tot}} & \Re \mu_{\mathrm{tot}}
\end{pmatrix},
\qquad
\mu_{\mathrm{tot}} = \prod_{k=1}^{L} \mu_k .
\end{equation}

This shows that the full fiber again acts as a phase-insensitive
rotation-scaling transformation on the quadratures. The magnitude
\(|\mu_{\mathrm{tot}}|\) gives the net attenuation or amplification, while
the argument \(\arg(\mu_{\mathrm{tot}})\) gives the total phase rotation
accumulated along the fiber.

We now derive the isotropic form of the total added-noise matrix
\(Y_{\mathrm{tot}}\). From the cascade rule for Gaussian channels,
\begin{equation}
Y_{\mathrm{tot}}
=
\sum_{k=1}^{L}
T_k Y_k T_k^T,
\qquad
T_k := X_L X_{L-1} \cdots X_{k+1},
\end{equation}
where \(T_k\) describes the deterministic propagation through all segments
following segment \(k\). For each segment \(k\), the local added-noise matrix is proportional to
the identity, $Y_k = y_k I_2,$ with $y_k :=
\frac{1-\eta_k}{2}
+
\left(n_{\mathrm{th},k}+\frac12\right)|\nu_k|^2.$ Hence, $Y_{\mathrm{tot}}
=
\sum_{k=1}^{L} y_k\, T_k T_k^T.$ We further denote $\mu^{(k)} := \prod_{j=k+1}^{L} \mu_j $, and obtain $T_k T_k^T =|\mu^{(k)}|^2 I_2. $  Using $|\mu^{(k)}|^2
=
\left|\prod_{j=k+1}^{L}\mu_j\right|^2
=
\prod_{j=k+1}^{L} |\mu_j|^2,$ we find
\begin{equation}
\begin{aligned}
T_k Y_k T_k^T
&=
\left(
\prod_{j=k+1}^{L} |\mu_j|^2
\right)
\\
&\quad\times
\left[
\frac{1-\eta_k}{2}
+
\left(n_{\mathrm{th},k}+\frac12\right)|\nu_k|^2
\right] I_2.
\end{aligned}
\end{equation}
Summing over all segments yields
\begin{equation}
\begin{aligned}
Y_{\mathrm{tot}} &= y_{\mathrm{tot}} I_2,\\
y_{\mathrm{tot}}
&=
\sum_{k=1}^{L}
\left(
\prod_{j=k+1}^{L} |\mu_j|^2
\right)
\left[
\frac{1-\eta_k}{2}
+
\left(n_{\mathrm{th},k}+\frac12\right)|\nu_k|^2
\right].
\end{aligned}
\end{equation}

\end{proof}

\section{Quantum Fisher Information for SBS Parameter Sensing}
\label{sec:qfi_sbs_parameter_sensing}

In this section, we will formulate stimulated Brillouin scattering (SBS) sensing as a local quantum-parameter-estimation problem of an unknown single scalar parameter $\theta$. We leave the multi-parameter case for future work.  As derived in the section \ref{sec:derivation_of_input_output}, the local input--output relation in the active segment is
\begin{equation}
\label{eq:sbs_local_channel_theta}
\hat a_{\mathrm{out}}(\Omega;\theta)
=
\mu_i(\Omega;\theta)\hat a_{\mathrm{in}}(\Omega)
+
\sqrt{1-\eta}\,\hat v_i(\Omega)
+
\nu_i(\Omega;\theta)\hat b_i^\dagger(\Omega).
\end{equation}
 The parameter $\theta$ may be the Brillouin frequency $f_B=\Omega_B/(2\pi)$, the temperature $T$, the strain $\epsilon$, the Brillouin linewidth $\Gamma_B$, or the effective gain coefficient $\xi_0$.

 We assume uniform optical loss \(\eta_k = \eta\) for all segments and a coherent input probe state with amplitude \(\sqrt{E}\), where $E$ is the mean input photon number. We assume that the pump and probe overlap at a segment $i$, so that the SBS interaction is active only in that segment. This assumption is also behind the BOCDA system. For every segment $j\neq i$, we thus set $\kappa_j=0$ and $\nu_j=0$. In the active segment $i$, the SBS coefficients depend on the parameter $\theta$. 
 
 \subsection{Input as a coherent state, output a displaced thermal state}
 Throughout this work, we assume that the input state is a coherent state. For a coherent input $\ket{\sqrt E}$, the output is a displaced thermal state $\rho_\theta
=
S_{N_i(\Omega;\theta)}
\left(
\beta_i(\Omega;\theta)
\right),$
with (derived at Appendix \ref{app:displaced-thermal-output})
\begin{align}
  \label{eq:clean_parameters_theta}
\beta_i(\Omega;\theta)
&=
\eta^{(L-1)/2}
\left[
\sqrt{\eta}+\kappa_i(\Omega;\theta)
\right]\sqrt E, 
\end{align}
\begin{align}
 \label{eq:clean_parameters_N}
N_i(\Omega;\theta)
&=
\eta^{L-i}
\Bigg[
\left(n_{\mathrm{th}}(\theta)+\frac{1}{2}\right)
|\nu_i(\Omega;\theta)|^2
\nonumber\\
&\hspace{2.0cm}
+
\frac{|\sqrt{\eta}+\kappa_i(\Omega;\theta)|^2-\eta}{2}
\Bigg].
\end{align}

\paragraph{Coherent part}
Eq. \ref{eq:clean_parameters_theta} represents the coherent part where the factor $\eta^{(L-1)/2} =\eta^{(L-i)/2} \eta^{(i-1)/2} $ accounts for attenuation in the
passive segments before and after the interaction, while
$\sqrt{\eta}+\kappa_i$ describes the active segment's response.
Since $\kappa_i$ is complex, this response can modify both
the amplitude and phase of the coherent field. 

\paragraph{Thermal part}
The thermal component, Eq. \ref{eq:clean_parameters_N},  can be understood through an ideal,
lossless two-mode-squeezing interaction between the Stokes mode
$\hat a_s$ and the phonon mode $\hat b$. Choosing a convenient
coupling phase, the evolution is
\begin{equation}
    \hat U(r)
    = \exp\!\left[
        r(\hat a_s^\dagger\hat b^\dagger-\hat a_s\hat b)
      \right],
\end{equation}
where $r$ is the squeezing strength. The Stokes operator
therefore transforms as
\begin{equation}
    \hat a_{s,\mathrm{out}}
    = \hat U^\dagger(r)\hat a_s\hat U(r)
    = \hat a_s\cosh r+\hat b^\dagger\sinh r.
\end{equation}
For an initially uncorrelated coherent Stokes state
$|\sqrt{E}\rangle$ and a thermal phonon state with occupation
$n_{\mathrm{th}}$, tracing out the phonons yields a displaced
thermal optical state with
\begin{equation}
\begin{aligned}
    \beta
    &= \langle\hat a_{s,\mathrm{out}}\rangle
     = \sqrt{E}\cosh r,\\
    N
    &= \langle\hat a_{s,\mathrm{out}}^\dagger
              \hat a_{s,\mathrm{out}}\rangle-|\beta|^2\\
    &= (n_{\mathrm{th}}+1)\sinh^2r.
\end{aligned}
\end{equation}
The factor $n_{\mathrm{th}}+1$ follows from
$\langle\hat b\hat b^\dagger\rangle=n_{\mathrm{th}}+1$.
Thus, the added thermal occupation includes both a
thermal-bath contribution and spontaneous emission, which
remains nonzero for vacuum phonons when $r>0$. The coefficient $|\nu_i(\Omega;\theta)|^2$ can be interpreted
as $\sinh^2 r_i(\Omega;\theta)$, where
$r_i=\operatorname{arsinh}|\nu_i|$ is an effective local
two-mode-squeezing parameter.

The factor $\eta^{L-i}$
accounts for attenuation after the active segment $i$.
The second term in the bracket accounts for the SBS-induced change in the
optical input's vacuum fluctuations. Both terms arise from the
same interaction and, to first order in the segment length,
combine to give
\begin{equation}
  N_i \simeq \eta^{L-i}
[n_{\mathrm{th}}(\theta)+1]|\nu_i(\Omega;\theta)|^2.  
\end{equation}

Within this model, the coherent part $|\beta_i|^2\propto E$, whereas thermal part $N_i$ is independent of $E$. The explanation goes as follows. For the ideal
two-mode amplifier,
\begin{equation}
    N = (n_{\mathrm{th}}+1)\sinh^2 r.
\end{equation}
At fixed interaction conditions, the undepleted pump determines $r$,
while the bath temperature and acoustic frequency determine
$n_{\mathrm{th}}$. Neither depends on the coherent probe photon
number $E$ in this model. Even at $n_{\mathrm{th}}=0$, spontaneous
pair creation gives $N=\sinh^2 r>0$ for nonzero coupling. Finally, in the pure-loss limit $\kappa_i(\Omega;\theta)=0$ and $\nu_i(\Omega;\theta)=0$, one obtains $N_i(\Omega;\theta)=0$, so a coherent input remains coherent under vacuum loss.

As we would like to track the change in fiber parameters, we can do it by tracking the change in the local Brillouin frequency $f_B$ which depends on the local fiber parameters. Operationally, when the pump and probe frequencies are held fixed, change in $f_B$ means change in mean thermal photon number and coherent amplitude. This can be made precise via Quantum Fisher Information explained in next paragraph.

\paragraph{Quantum Fisher Information (QFI) of displaced thermal state}
Our next task is to compute the Quantum Fisher Information for $\rho_\theta$ and find the optimal measurement setting. We have provided a review of quantum Fisher information in Appendix \ref{sec:QFI}. For a Gaussian state with covariance matrix $\Sigma(\theta)$, purity $P(\theta)$, and displacement vector
$\Delta X(\theta)$, the QFI is given in  Eq.~(10) of \cite{Pinel_2013}

\begin{equation}
\label{eq:general_form}
F_Q(\theta)
=
\frac12
\frac{\mathrm{tr}\!\left[(\Sigma^{-1}\dot{\Sigma})^2\right]}{1+P^2}
+
2\frac{\dot P^2}{1-P^4}
+
\Delta \dot X^T \Sigma^{-1} \Delta \dot X
\end{equation}
where $\dot{(\cdot)} \equiv \partial_\theta (\cdot)$. For the displaced thermal state
$\rho_{\rm out}(\theta)=S_{N(\theta)}(\beta(\theta))$, the covariance matrix is
isotropic, $\Sigma(\theta)
=
\left(N(\theta)+\frac{1}{2}\right) I_2,$ and the displacement vector is $\Delta X(\theta)
=
\sqrt{2}
\begin{pmatrix}
\mathrm{Re}\,\beta(\theta)\\
\mathrm{Im}\,\beta(\theta)
\end{pmatrix}.$ The purity is $P(\theta)
=
\frac{1}{2N(\theta)+1}.$

Now, let us compute the first two terms in Eq. \ref{eq:general_form}. Let $s(\theta)=N(\theta)+\frac{1}{2}.$
Then, $\Sigma(\theta)=s(\theta)I_2$, $\Sigma^{-1}(\theta)=\frac{1}{s(\theta)}I_2$,   $\dot{\Sigma}=(\partial_\theta N)I_2$, and $\Sigma^{-1}\dot{\Sigma}=
\frac{\partial_\theta N}{s}I_2.$ Squaring and taking the trace gives
\begin{align}
\operatorname{tr}\!\left[
(\Sigma^{-1}\dot{\Sigma})^2
\right]
&=
\operatorname{tr}\!\left[
\frac{(\partial_\theta N)^2}{s^2}I_2
\right] =
2\frac{(\partial_\theta N)^2}{s^2}.
\end{align}
Since $P=\frac{1}{2N+1}=\frac{1}{2s}$, the first term in Eq. \ref{eq:general_form} becomes
\begin{align}
\frac{1}{2}
\frac{
\operatorname{tr}\!\left[
(\Sigma^{-1}\dot{\Sigma})^2
\right]
}{
1+P^2
}
&=
\frac{
(\partial_\theta N)^2/s^2
}{
1+P^2
} = \frac{4(\partial_\theta N)^2}{4s^2+1}.
\end{align}
Now, let us do the calculation for the second term in Eq. \ref{eq:general_form}. The purity derivative is $\dot P
=
\partial_\theta\left(\frac{1}{2s}\right)
=
-\frac{\partial_\theta N}{2s^2}$ and the purity term becomes
\begin{align}
2\frac{\dot P^2}{1-P^4}=
\frac{8(\partial_\theta N)^2}{16s^4-1}.
\end{align}
Adding the covariance and purity contributions gives
\begin{align}
\label{eq:first_two_term}
F_{\rm cov}(\theta)
&=
\frac{4(\partial_\theta N)^2}{4s^2+1}
+
\frac{8(\partial_\theta N)^2}{16s^4-1}
\nonumber \\
&=
(\partial_\theta N)^2
\left[
\frac{4}{4s^2+1}
+
\frac{8}{(4s^2-1)(4s^2+1)}
\right]
\nonumber \\
&=
(\partial_\theta N)^2
\frac{
4(4s^2-1)+8
}{
(4s^2-1)(4s^2+1)
}
\nonumber \\
&=
(\partial_\theta N)^2
\frac{
4(4s^2+1)
}{
(4s^2-1)(4s^2+1)
}
\nonumber \\
&=
\frac{
4(\partial_\theta N)^2
}{
4s^2-1
} \nonumber \\
&=   \frac{
(\partial_\theta N)^2
}{
N(N+1)
}.
\end{align}
where, we set \(s=N+\frac{1}{2}\) in the last step.

Now, let us compute the third term of Eq. \ref{eq:general_form}. Taking the derivative of the displacement vector gives
\begin{equation}
\Delta \dot X(\theta)
=
\sqrt{2}
\begin{pmatrix}
\mathrm{Re}\,\partial_\theta\beta(\theta)\\
\mathrm{Im}\,\partial_\theta\beta(\theta)
\end{pmatrix}.
\end{equation}
Thus, $\Delta \dot X^T \Delta \dot X
=
2\left[
(\mathrm{Re}\,\partial_\theta\beta)^2
+
(\mathrm{Im}\,\partial_\theta\beta)^2
\right]
=
2|\partial_\theta\beta(\theta)|^2.$ Since $\Sigma^{-1} = [N(\theta)+\tfrac12]^{-1} I_2$, we obtain
\begin{align}
\label{eq:third_term}
\Delta \dot X^T \Sigma^{-1} \Delta \dot X
&=
\frac{1}{N(\theta)+\tfrac12}
\Delta \dot X^T \Delta \dot X
\nonumber\\
&=
\frac{2|\partial_\theta\beta(\theta)|^2}{N(\theta)+\tfrac12}
=
\frac{4|\partial_\theta \beta(\theta)|^2}{2N(\theta)+1}.
\end{align}
Summing both contributions Eq. \ref{eq:third_term} and \ref{eq:first_two_term}, we obtain
\begin{equation}
\label{eq:F_coh}
F_Q(\theta)
=
\frac{4|\partial_\theta \beta(\theta)|^2}{2N(\theta)+1}
+
\frac{[\partial_\theta N(\theta)]^2}{N(\theta)\,[N(\theta)+1]}.
\end{equation}
The first term quantifies sensitivity to displacement (signal attenuation), while the second term captures sensitivity to changes in thermal noise.

As an example, consider a toy state,  $\rho(\theta) = S_{a+b\theta}(c\sqrt{\theta})$ such that $N(\theta) = a  + b\theta$ and $\beta(\theta) = c \sqrt{\theta}$ for which QFI becomes:
\begin{equation}
\begin{aligned}
F_Q(\theta)
&=
\frac{c^2}{\left(2a+2b\theta+1\right)\theta}
+
\frac{b^2}{(a+b\theta)(a+b\theta+1)}.
\end{aligned}
\label{eq:QFI_Toy}
\end{equation}

The quantum Cramér--Rao bound gives the bound on variance as:
\begin{equation}
\mathrm{Var}_\theta(\hat{\theta})
\ge
\frac{1}{M F_Q(\theta)}
.
\label{eq:CRB_final}
\end{equation}
where $M$ is the number of probes. Quantum Cramér--Rao bound is saturated by measurements in the eigenbasis of the symmetric logarithmic derivative (SLD). We find that for the displaced thermal states, the optimal measurement is obtained by a photon
counting after an appropriate displacement of the field:
\begin{equation}
\begin{aligned}
\Pi_n
&=
D\!\left(\beta(\theta)-\frac{A}{\theta}\right)
|n\rangle\langle n|
D^\dagger\!\left(\beta(\theta)-\frac{A}{\theta}\right),\\
A
&=
\frac{2\partial_\theta\beta(\theta)}{2N(\theta)+1}, \quad
\theta
=
\frac{\partial_\theta N(\theta)}{N(\theta)[N(\theta)+1]}.
\end{aligned}
\label{eq:optimal_POVMs}
\end{equation}
The details of the derivation are in Appendix \ref{sec:SLD_optimal}. If \(\partial_\theta N=0\), the SLD reduces to a linear quadrature operator,  and the optimal measurement is homodyne detection along the direction
\(\Delta \dot X\). Conversely, if \(\partial_\theta\beta=0\), the SLD is
purely number-like, and the optimal measurement is photon counting in the displaced frame.

It is crucial to note that the optimal POVM Eq.~\ref{eq:optimal_POVMs} depends on the parameter $\theta$, which we are trying to estimate. To tackle this issue, one can use the Quantum Local Asymptotic Normality (QLAN) approach \cite{Yamagata_2013}, which implements an adaptive, near-optimal measurement that asymptotically saturates the QFI. We have shifted the adaptive discussions to section \ref{sec:finite_sample_adaptive}.

\subsection{QFI of SBS}
\label{subsubsec:derivatives_sbs_output_state}

The quantum Fisher information of SBS is determined by the derivatives of $\beta_i(\Omega;\theta)$ and $N_i(\Omega;\theta)$ with respect to $\theta$. Differentiating  gives
\begin{align}
\label{eq:dbeta_dtheta_general}
\partial_\theta \beta_i(\Omega;\theta)
&=
\eta^{(L-1)/2}\sqrt{E}\,
\partial_\theta \kappa_i(\Omega;\theta)
\nonumber\\
&=
\eta^{(L-1)/2}\sqrt{E}\,
\partial_\theta
\left[
|\widetilde G_i(\theta)|^2
\chi_i(\Omega;\theta)\delta z
\right]. 
\end{align}
Similarly, 
\begin{align}
\label{eq:dN_dtheta_general}
\partial_\theta N_i(\Omega;\theta)
&=
\eta^{L-i}
\partial_\theta
\left[
\left(n_{\mathrm{th}}+\frac{1}{2}\right)|\nu_i|^2
+
\frac{|\mu_i|^2-\eta}{2}
\right].
\end{align}

\begin{figure}[!t]
    \centering
    \includegraphics[width=0.8\linewidth]{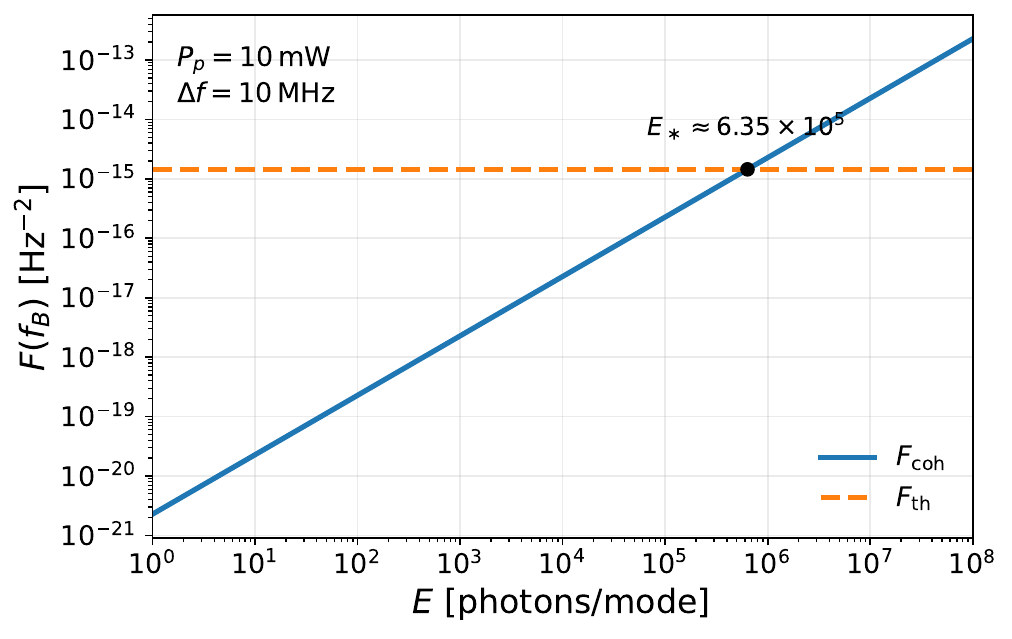}
\caption{Coherent (solid) and thermal (dashed) contributions to the
quantum Fisher information per mode for estimating the Brillouin
frequency $f_B$, as a function of the input probe photon number $E$.
For $P_p=10\,\mathrm{mW}$ and $\Delta f=10\,\mathrm{MHz}$,
$F_{\mathrm{coh}}$ increases linearly with $E$, whereas
$F_{\mathrm{th}}$ remains constant. The contributions become equal
at $E_\ast \approx 6.35\times10^5$ photons per mode, marking the
transition from thermal to coherent information dominance.}
    \label{fig:coherent_thermal}
\end{figure}

The figure \ref{fig:coherent_thermal}  illustrates two ways in which the Brillouin frequency
is encoded in the output state: through changes in the coherent
amplitude $\beta_i$ and in the thermal occupation $N_i$.
Within the fixed-pump model, increasing the probe photon number
enhances the displacement sensitivity because
$\beta_i\propto\sqrt{E}$, giving $F_{\mathrm{coh}}\propto E$,
while $F_{\mathrm{th}}$ remains independent of $E$.
For the chosen operating point, thermal information dominates
below $E_\ast\approx6.35\times10^5$ photons per mode, so ideal
photon-number-resolving detection after local nulling can capture
most of the available frequency information when $E\ll E_\ast$.
Above this crossover, changes in the coherent field become the
dominant information source, and accessing its amplitude and
phase response becomes increasingly important.

\subsubsection{Brillouin-frequency sensing}
\label{subsubsec:brillouin_frequency_sensing}

For Brillouin-frequency sensing, the unknown parameter is $\theta=f_B$, with $\Omega_B=2\pi f_B$, and QFI $F_Q^{(i)}(f_B)$ is: 
\begin{equation}
\label{eq:qfi_brillouin_final}
F_Q^{(i)}(f_B)
=
\frac{
4|\partial_{f_B} \beta_i(\Omega;f_B)|^2
}
{
2N_i(\Omega;f_B)+1
}
+
\frac{
[\partial_{f_B} N_i(\Omega;f_B)]^2
}
{
N_i(\Omega;{f_B} )[N_i(\Omega;f_B)+1]
}.
\end{equation}
The corresponding frequency sensitivity is
\begin{equation}
\label{eq:qcrb_fB}
\Delta f_B
\geq
\frac{1}
{\sqrt{kF_Q^{(i)}(f_B)}}.
\end{equation}
In Figure~\ref{fig:qfi_spectrum_Eprobe} we have plotted $F_Q^{(i)}(f_B)$ as a function of the probe detuning $\Delta f=(\Omega-\Omega_B)/2\pi$ for different coherent probe energies. We find that the largest QFI occurs not at $\Delta = 0$. This is because, while the Lorentzian gain is maximal at exact resonance $\Delta=0$, the slope vanishes. Instead, the largest QFIs are on the sides, where a small shift of $f_B$ produces the largest change in the output state.

\begin{figure}[!t]
    \centering
    \includegraphics[width=0.8\linewidth]{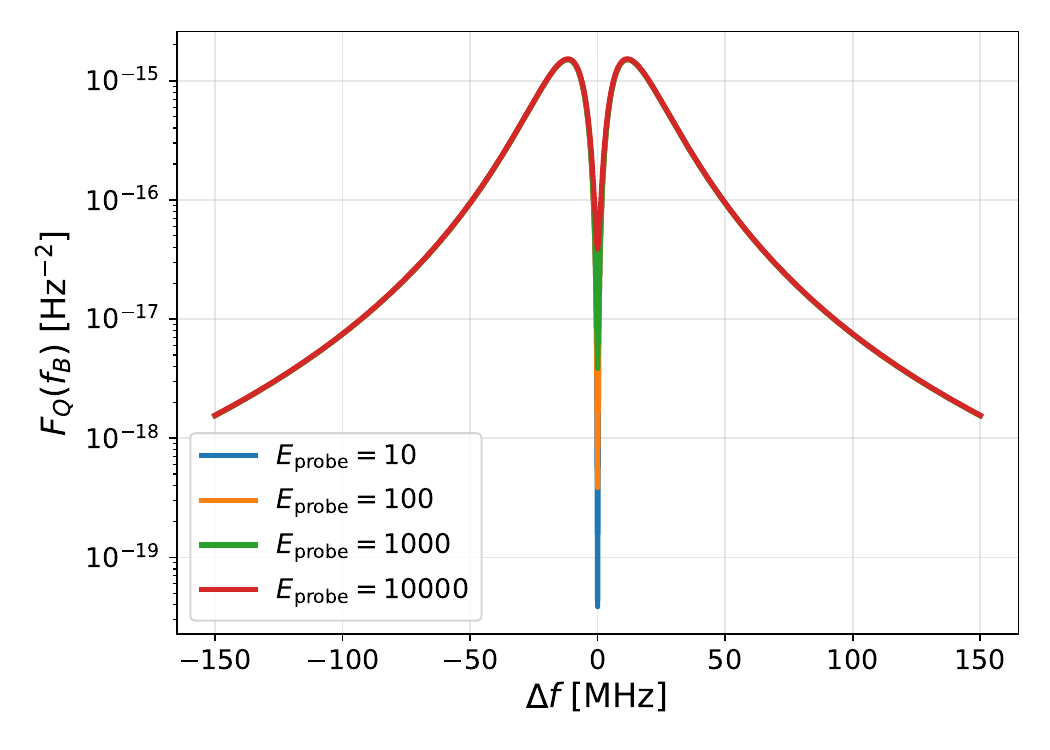}
    \caption{
Quantum Fisher information $F_Q(f_B)$ for estimating the Brillouin frequency 
$f_B=\Omega_B/2\pi$. The parameters are 
$f_B=10.8~{\rm GHz}$, $\Gamma_B/2\pi=30~{\rm MHz}$, 
$P_p=10~{\rm mW}$, $\delta z=0.5~{\rm m}$, $L=100$, $i=50$, 
fiber attenuation $0.2~{\rm dB/km}$, 
$\xi_0=0.31~{\rm W}^{-1}{\rm m}^{-1}$, and $T=300~{\rm K}$ with 
$n_{\rm th}\simeq 5.8\times 10^2$.
}
    \label{fig:qfi_spectrum_Eprobe}
\end{figure}

\begin{figure}[!t]
    \centering
    \includegraphics[width=0.8\linewidth]{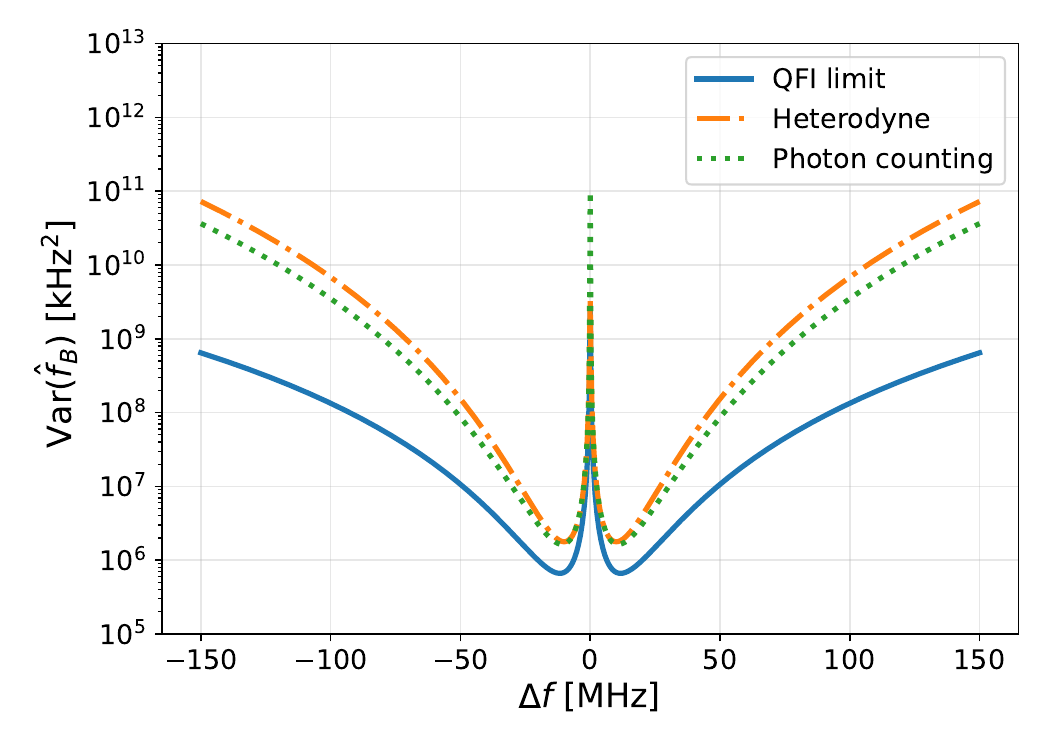}
  \caption{
Receiver-constrained estimator variance  comparing the ultimate quantum limit $F=F_Q$, heterodyne detection $F=F_{\rm het}$, and photon counting $F=F_{\rm pc}$. The parameters are $k=1000$, $E_{\rm probe}=100$, $P_p=10~{\rm mW}$, $f_B=10.8~{\rm GHz}$, $\Gamma_B/2\pi=30~{\rm MHz}$, $\delta z=0.5~{\rm m}$, $L=100$, $i=50$, fiber attenuation $0.2~{\rm dB/km}$, $\xi_0=0.31~{\rm W}^{-1}{\rm m}^{-1}$, and $T=300~{\rm K}$.  
}
    \label{fig:receiver_fB_variance}
\end{figure}

\begin{figure}[!t]
    \centering
    \includegraphics[width=0.8\linewidth]{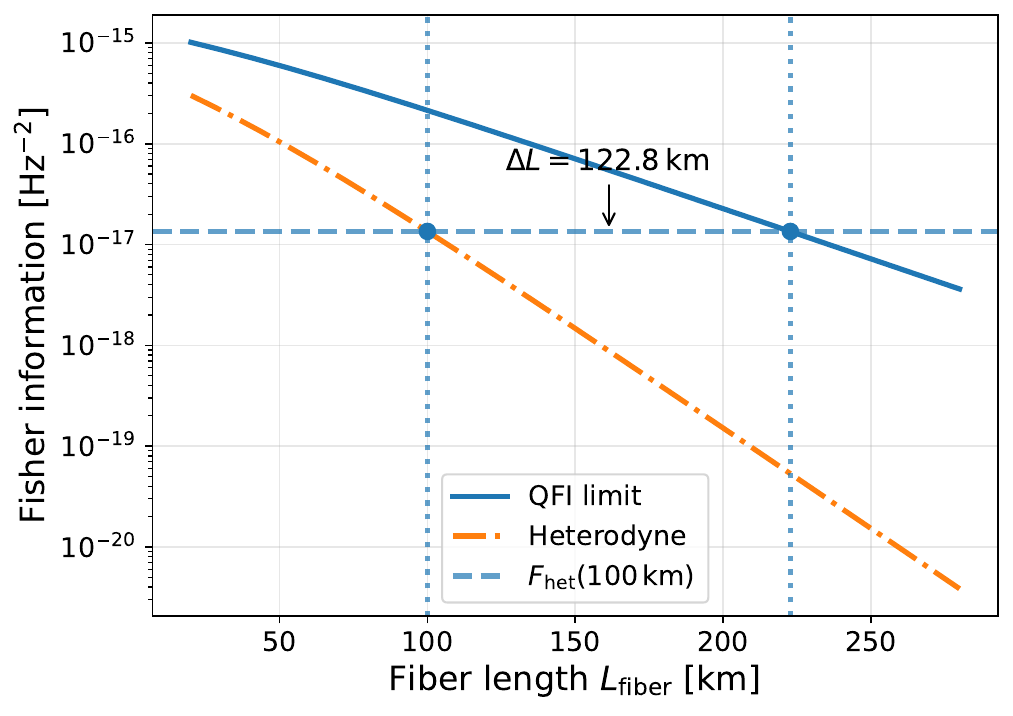}
 \caption{
Quantum sensing reach advantage as a function of the physical fiber length $L_{\rm fiber} = L \delta z$. The quantum Fisher information $F_Q$ and the heterodyne Fisher information $F_{\rm het}$ are evaluated at a fixed operating detuning chosen to maximize $F_{\rm het}$ at the reference length $L_{\rm ref}=100~{\rm km}$. The parameters are $E_{\rm probe}=100$, $P_p=10~{\rm mW}$, $f_B=10.8~{\rm GHz}$, $\Gamma_B/2\pi=30~{\rm MHz}$, $\delta z=0.5~{\rm m}$, fiber attenuation $0.2~{\rm dB/km}$, $\xi_0=0.31~{\rm W}^{-1}{\rm m}^{-1}$, and $T=300~{\rm K}$, with the active SBS segment located at the midpoint of the fiber.
}
    \label{fig:quantum_reach_advantage}
\end{figure}

For the displaced thermal state, the heterodyne outcomes are distributed according to the Husimi \(Q\) function,
\begin{equation}
    q(z|\theta)
    =
    \frac{1}{\pi [N(\theta)+1]}
    \exp\!\left[
        -\frac{|z-\beta(\theta)|^2}{N(\theta)+1}
    \right].
    \label{eq:heterodyne_distribution}
\end{equation}
The corresponding classical Fisher information is
\begin{equation}
    F_{\rm het}(\theta)
    =
    \frac{2|\partial_\theta \beta(\theta)|^2}{N(\theta)+1}
    +
    \frac{[\partial_\theta N(\theta)]^2}{[N(\theta)+1]^2}.
    \label{eq:heterodyne_fisher}
\end{equation}
 Since heterodyne detection adds one unit of vacuum noise, the displacement contribution is smaller than the ultimate quantum Fisher information. We also consider photon counting as a receiver-constrained measurement.  For the state \(S_{N(\theta)}(\beta(\theta))\), this gives the photon-counting probabilities
\begin{align}
    p_n(\theta)
    =
    \frac{N(\theta)^n}{[N(\theta)+1]^{n+1}}
   & \exp\!\left[
        -\frac{|\beta(\theta)|^2}{N(\theta)+1}
    \right]\times \nonumber \\
    &L_n\!\left(
        -\frac{|\beta(\theta)|^2}
        {N(\theta)[N(\theta)+1]}
    \right),
    \label{eq:photon_counting_distribution}
\end{align}
where \(L_n(\cdot)\) is the Laguerre polynomial. The associated classical Fisher information is
\begin{equation}
    F_{\rm pc}(\theta;\alpha_0)
    =
    \sum_{n=0}^{\infty}
    \frac{\left[\partial_\theta p_n(\theta|\alpha_0)\right]^2}
    {p_n(\theta|\alpha_0)} .
    \label{eq:photon_counting_fisher}
\end{equation}

Figure~\ref{fig:receiver_fB_variance} compares the ultimate quantum limit with two receiver-constrained strategies, photon counting and heterodyne. The lower curves belong to better sensitivity. Photon counting has substantially lower variance than heterodyne detection, but it does not saturate the quantum Cramér-Rao bound. These gaps quantify the quantum advantage with an SLD-optimal receiver. Fig.~\ref{fig:quantum_reach_advantage} shows that the quantum optimal receiver achieves the same estimation precision over a substantially longer fiber than heterodyne. Figure~\ref{fig:pump_power_sensitivity} shows the  Brillouin-frequency sensitivity as a function of the SBS pump power. The sensitivity improves rapidly at low pump powers. At larger pump powers, however, the curve flattens, and the sensitivity improvement is small.

\begin{figure}[!t]
    \centering
    \includegraphics[width=0.8\linewidth]{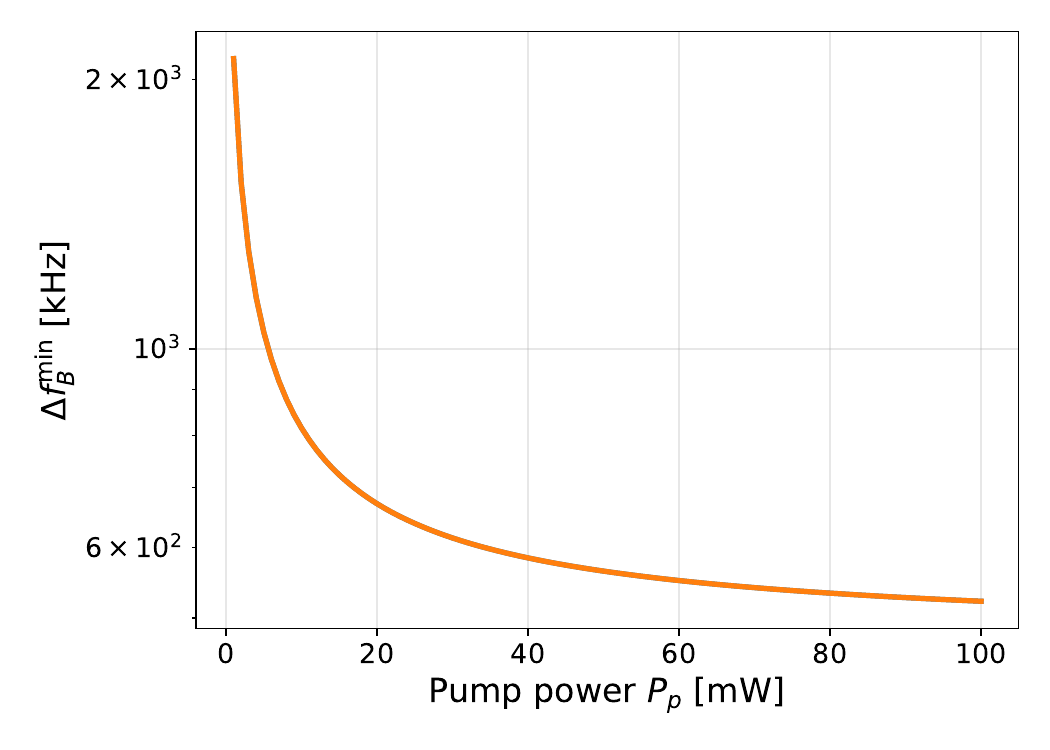}
 \caption{
Minimum achievable Brillouin-frequency uncertainty as a function of SBS pump power. The parameters are $k = 1000$,  $E_{\rm probe}=100$, 
$f_B=10.8~{\rm GHz}$, $\Gamma_B/2\pi=30~{\rm MHz}$,
$\delta z=0.5~{\rm m}$, $L=100$, $i=50$, fiber attenuation
$0.2~{\rm dB/km}$, $\xi_0=0.31~{\rm W}^{-1}{\rm m}^{-1}$, and
$T=300~{\rm K}$.
}
    \label{fig:pump_power_sensitivity}
\end{figure}

\subsubsection{Temperature sensing}
\label{subsubsec:genuine_temperature_sensing}

A simple temperature sensing can be via the following chain rule:
\begin{equation}
\label{eq:simple_temperature_chain_rule}
F_Q(T)
=
F_Q(f_B)
\left(
\frac{\partial f_B}{\partial T}
\right)^2.
\end{equation}
While this conversion is useful, it does not change the shape of the QFI spectrum. To obtain a genuine temperature-dependent SBS model, we allow temperature to simultaneously modify $f_B(T),\Gamma_B(T),\xi_0(T), n_{\mathrm{th}}(T).$ The  temperature QFI is
\begin{equation}
\label{eq:qfi_temperature_final}
F_Q^{(i)}(T)
=
\frac{
4|\partial_T\beta_i(\Omega;T)|^2
}
{
2N_i(\Omega;T)+1
}
+
\frac{
[\partial_T N_i(\Omega;T)]^2
}
{
N_i(\Omega;T)[N_i(\Omega;T)+1]
}.
\end{equation}
For $k$ independent probes, the temperature variance satisfies
\begin{equation}
\label{eq:qcrb_temperature}
\mathrm{Var}(\hat T)
\geq
\frac{1}{kF_Q^{(i)}(T)}.
\end{equation}

\begin{figure}[!t]
    \centering
    \includegraphics[width=0.8\linewidth]{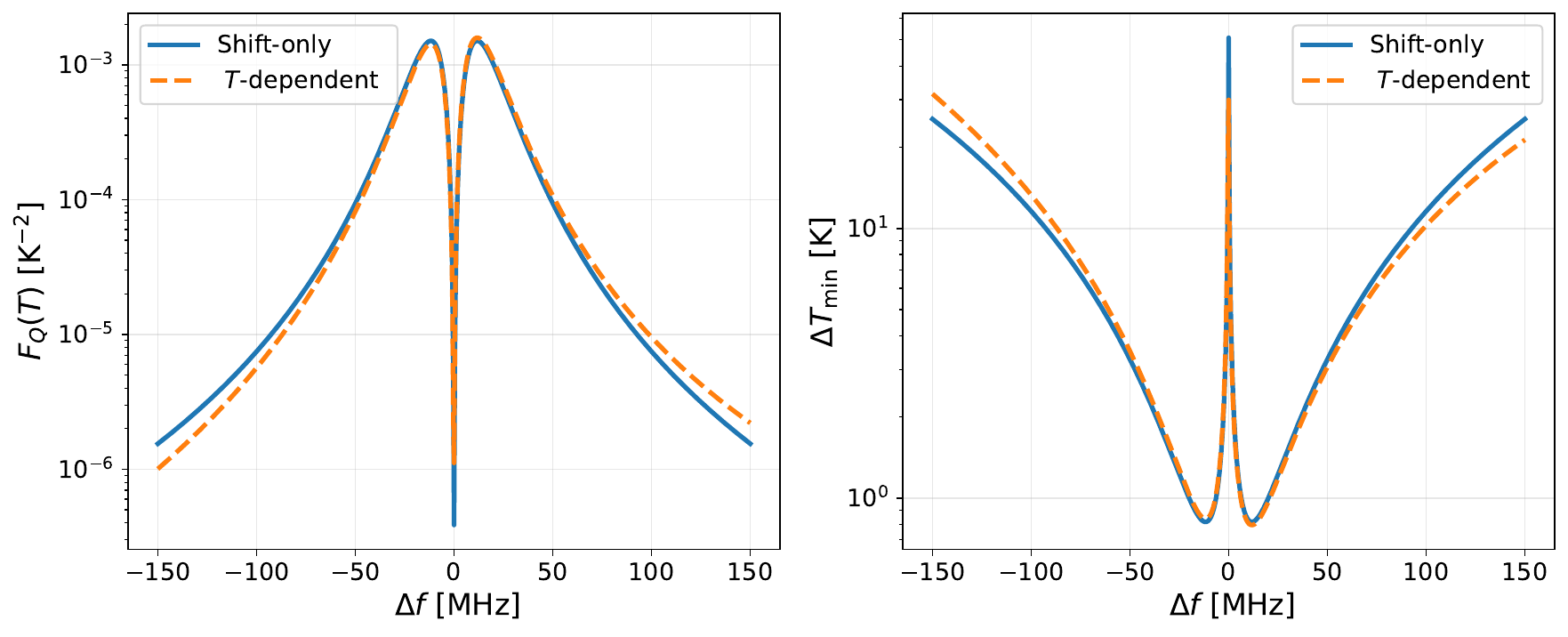}
\caption{
Temperature-sensing performance for stimulated Brillouin scattering. 
(a) Temperature quantum Fisher information $F_Q(T)$ as a function of detuning for a simple frequency-shift model and a genuine 
temperature-dependent SBS model. 
(b) Corresponding Cram\'er--Rao temperature sensitivity for $k=1000$ independent probe shots. 
}
    \label{fig:temperature_sensing}
\end{figure}

In figure~\ref{fig:temperature_sensing}, we have compared the temperature QFI obtained from a simple frequency-shift chain-rule with that of a genuine temperature-dependent SBS channel. These two QFI spectra have identical shapes; there are small quantitative corrections. This shows that the dominant contribution to temperature sensitivity comes from the Brillouin-frequency shift, and that additional parameters such as linewidth, gain, and thermal poise provide only small corrections. 
\section{Adaptive measurement of probes}
\label{sec:finite_sample_adaptive}

As discussed in section \ref{sec:qfi_sbs_parameter_sensing}, the optimal POVMs Eq.~\ref{eq:optimal_POVMs} depend on the parameter $\theta$, which we are trying to estimate. 
Inspired by the QLAN/adaptive-estimation framework \cite{Yamagata_2013} , we use a two-stage local measurement protocol to address this issue. The first stage is the preliminary stage, where we obtain a rough estimate $\hat{\theta}_p$. Conditioned on $\hat{\theta}_p=\theta_0$, in the refinement stage, we plug this value into the optimal POVM. 

Let $n$ be the total number of copies of the probes $\rho_{\theta_t}$, where $\theta_t$ is the true value of the parameter. We use $f(n)$ copies in the preliminary stage, such that $f(n)\in \omega(1)\cap o(n)$. This means that $f(n)\to\infty$, so that the preliminary estimate becomes consistent, but $f(n)/n\to0$, so that only a vanishing fraction of the samples is used for preliminary localization. The number of refinement-stage copies is then $m_n=n-f(n)$, and $\frac{m_n}{n}\to1.$ A convenient balanced choice is $f(n)=\sqrt n$.

As an example, suppose the output displaced thermal state is $\rho(\theta) = S_{a+b\theta}(c\sqrt{\theta})$ such that $N(\theta) = a  + b\theta$ and $\beta(\theta) = c \sqrt{\theta}$. 
A simple way to obtain an initial estimate is via heterodyne detection on a small number of probe copies. The heterodyne POVM is $ E_z
    =
    \frac{1}{\pi}|z\rangle\langle z|,
    z\in\mathbb C,$
where $|z\rangle$ is a coherent state. For the example state, the sample mean of the heterodyne outcomes gives a rough estimate of the displacement as:
\begin{equation}
    \theta_0
    =
    \frac{|\overline z|^2}{|c|^2}.
    \label{eq:initial_lambda_heterodyne}
\end{equation}
Given  $\theta_0$, the necessary displacement amplitude is
\begin{equation}
    \gamma_{\theta_0}
    =
    \beta(\theta_0)-\frac{A(\theta_0)}{B(\theta_0)}  =  c\sqrt{\theta_0}
    -
    \frac{
        c(a+b\theta_0)(a+b\theta_0+1)
    }{
        b\sqrt{\theta_0}(2a+2b\theta_0+1)
    } .
    \label{eq:adaptive_gamma_lambda0}
\end{equation}
In lab, one has to apply the inverse displacement, $\alpha_{\mathrm{disp}}^{(0)}
    =
    -\gamma_{\theta_0}$, which is achieved using a highly transmissive beam splitter with reflectivity $R$ and a coherent local oscillator of amplitude $\beta_{\mathrm{LO}}^{(0)}$. The calibration condition is $\beta_{\mathrm{LO}}^{(0)}
    =
    -\frac{\gamma_{\theta_0}}{\sqrt{R}} .$
After applying $D(-\gamma_{\theta_0})$, the state entering the photon-number-resolving detector is $\rho_{\mathrm{out}}(\theta;\theta_0)
    =
    D(-\gamma_{\theta_0})
    \rho(\theta)
    D^\dagger(-\gamma_{\theta_0})$.  Each experimental run produces one photon-number outcome $  n_j\in\{0,1,2,\ldots\}.$
After $M_0$ repetitions at the setting $\theta_0$, the data record is $  \mathcal D_0
    =
    \{n^{(0)}_1,n^{(0)}_2,\ldots,n^{(0)}_{M_0}\}.$ With this, one obtains the estimate $\theta_1$.

\subsection{First-order asymptotic optimality}
We will now show that the adaptive approach achieves first-order asymptotic optimality. First, let us show that the heterodyne preliminary estimator is strongly consistent. From Eq.~\eqref{eq:heterodyne_distribution}, the heterodyne outcomes have mean $\mathbb{E}_{\theta_t}[Z_j]
=
c\sqrt{\theta_t}.$
Since $f(n)\to\infty$, the strong law of large numbers gives
\begin{equation}
\bar Z_{f(n)}
=
\frac{1}{f(n)}
\sum_{j=1}^{f(n)} Z_j
\xrightarrow{\mathrm{a.s.}}
c\sqrt{\theta_t}.
\label{eq:proof_sample_mean_convergence}
\end{equation}
The map $z\mapsto |z|^2/c^2$ is continuous for $c\neq0$. Therefore, $\hat\theta_p
=
\frac{|\bar Z_{f(n)}|^2}{c^2}
\xrightarrow{\mathrm{a.s.}}
\frac{|c\sqrt{\theta_t}|^2}{c^2}
=
\theta_t.$ Hence $\hat\theta_p$ is a strongly consistent preliminary estimator. 

Let $F_{\rm cl}(\theta;\theta_0)$ be the classical Fisher information of the photon-counting distribution when the true parameter is $\theta$, but the measurement is tuned at parameter $\theta_0$. Since the POVM $\left\{
\Pi_q^{(\theta_t)}
=
D(\gamma_{\theta_t}) |q\rangle\langle q|D^\dagger(\gamma_{\theta_t})
\right\}_{q=0}^{\infty}$
is the spectral measure of the SLD operator, the standard attainability theorem implies $F_{\rm cl}(\theta_t;\theta_t)
=
F_Q(\theta_t).$

In the actual protocol, since we do not know $\theta_t$ a priori, we calibrate the receiver using $\hat\theta_p$. Since $\hat\theta_p\xrightarrow{\mathrm{a.s.}}\theta_t$ and $\gamma_\theta$ is continuous in a neighbourhood of $\theta_t$, the continuous mapping theorem gives $\gamma_{\hat\theta_p}
\xrightarrow{\mathrm{a.s.}}
\gamma_{\theta_t}.$
Consequently,
\begin{equation}
\Pi_q^{(\hat\theta_p)}
\longrightarrow
\Pi_q^{(\theta_t)},
\quad
F_{\rm cl}(\theta_t;\hat\theta_p)
\longrightarrow
F_{\rm cl}(\theta_t;\theta_t)
=
F_Q(\theta_t).
\end{equation}

Let $\hat\theta_r(\theta_0)$ denote the refinement-stage maximum-likelihood estimator when the receiver is calibrated at $\theta_0$. For every fixed $\theta_0$ sufficiently close to $\theta_t$, the regularity assumptions imply the standard maximum-likelihood asymptotic normality relation $\sqrt{m_n}
\left(
\hat\theta_r(\theta_0)-\theta_t
\right)
\xrightarrow{d}
\mathcal{N}
\left(
0,
F_{\rm cl}(\theta_t;\theta_0)^{-1}
\right).$
Taking $\theta_0=\hat\theta_p$, Slutsky's theorem gives
\begin{equation}
\sqrt{m_n}
\left(
\hat\theta_r-\theta_t
\right)
\xrightarrow{d}
\mathcal{N}
\left(
0,
F_Q(\theta_t)^{-1}
\right).
\label{eq:m_n_asymptotic_normality}
\end{equation}

Since $m_n/n\to1$, Eq.~\eqref{eq:m_n_asymptotic_normality} is equivalently $\sqrt{n}
\left(
\hat\theta_r-\theta_t
\right)
\xrightarrow{d}
\mathcal{N}
\left(
0,
F_Q(\theta_t)^{-1}
\right).$
Thus the refinement estimator attains the QFI at first order,
\begin{equation}
\operatorname{Var}(\hat\theta_r)
=
\frac{1}{nF_Q(\theta_t)}
+
o\!\left(\frac{1}{n}\right).
\label{eq:variance_final_proof}
\end{equation}

\subsection{Second-order correction from preliminary miscalibration}

Now, we will consider the second-order correction due to preliminary miscalibration. For $\theta_0$ close to $\theta_t$, using Taylor series, we expand
\begin{equation}
F_{\rm cl}(\theta_t;\theta_0)
=
F_Q(\theta_t)
-
C(\theta_t)(\theta_0-\theta_t)^2
+
o\!\left((\theta_0-\theta_t)^2\right),
\label{eq:Fcl_calibration_expansion}
\end{equation}
where $C(\theta_t)
=
-\frac{1}{2}
\left.
\frac{\partial^2}{\partial \theta_0^2}
F_{\rm cl}(\theta_t;\theta_0)
\right|_{\theta_0=\theta_t}$ is a measure of calibration error. A small value means that the receiver is robust to miscalibration, while a large value means that a small calibration error can cause a significant loss of Fisher information. For fixed $\theta_0$, the refinement-stage variance satisfies
\begin{align}
&\operatorname{Var}(\hat{\theta}_r|\theta_0)
=
\frac{1}
{m_n F_{\rm cl}(\theta_t;\theta_0)}
+
o\!\left(\frac{1}{m_n}\right)
\nonumber\\
&=
\frac{1}
{m_n\left[
F_Q(\theta_t)
-
C(\theta_t)(\theta_0-\theta_t)^2
+
o((\theta_0-\theta_t)^2)
\right]} \nonumber \\
& \qquad+
o\!\left(\frac{1}{m_n}\right)
\nonumber\\
&=
\frac{1}
{m_nF_Q(\theta_t)}
+
\frac{
C(\theta_t)(\theta_0-\theta_t)^2
}{
m_nF_Q(\theta_t)^2
}
+
o\!\left(
\frac{(\theta_0-\theta_t)^2}{m_n}
\right).
\label{eq:conditional_variance_fixed_lambda0}
\end{align}
Thus, the second-order correction due to imperfect preliminary calibration is controlled by the mean-square error of the preliminary estimator. Equation~\eqref{eq:conditional_variance_fixed_lambda0} gives the conditional variance of the refinement-stage estimator when the calibration value is fixed at $\theta_0=\hat\theta_p$. The unconditional variance is obtained from the law of total variance,
\begin{equation}
\operatorname{Var}(\hat\theta_r)
=
\mathbb{E}_{\hat\theta_p}\!\left[
\operatorname{Var}\!\left(\hat\theta_r\,\big|\,\hat\theta_p\right)
\right]
+
\operatorname{Var}_{\hat\theta_p}\!\left(
\mathbb{E}\!\left[
\hat\theta_r\,\big|\,\hat\theta_p
\right]
\right).
\end{equation}
Under the standard regularity assumptions for the refinement-stage maximum-likelihood estimator, the conditional bias contributes only at higher order. Therefore, to the order considered here, the second term may be neglected. Averaging Eq.~\eqref{eq:conditional_variance_fixed_lambda0} over the preliminary estimator then yields
\begin{align}
\operatorname{Var}(\hat\theta_r)
=
\frac{1}{m_nF_Q(\theta_t)}
&+
\frac{C(\theta_t)}
{m_nF_Q(\theta_t)^2}
\mathbb{E}\!\left[
(\hat\theta_p-\theta_t)^2
\right] \nonumber \\ 
&+ 
o\!\left(
\frac{
\mathbb{E}[(\hat\theta_p-\theta_t)^2]
}{m_n}
\right).
\label{eq:Var_lamda_r}
\end{align}
The quantity $\mathbb{E}[(\hat{\theta}_p-\theta_t)^2]$ depends on the preliminary measurement, while the first term is independent of it.

For the displaced thermal state $S_{N_t}(\beta_t)$, the heterodyne outcome is
distributed as $Z\sim \mathcal{CN}(\beta_t,N_t+1)$.   With $f(n)$ independent heterodyne samples, the sample mean satisfies $\bar Z
=
\frac{1}{f(n)}
\sum_{j=1}^{f(n)}Z_j
\sim
\mathcal{CN}\!\left(
\beta_t,
\frac{N_t+1}{f(n)}
\right).$
The estimate of $\theta_t$ depends on the specific encoding map $\beta(\theta)$. For the toy model, $\beta(\theta)=c\sqrt{\theta}$ and $N(\theta)=a+b\theta,$ we define the preliminary estimator $\hat\theta_p
=
\frac{|\bar Z|^2}{c^2}.$ Let $\sigma_n^2=\frac{N_t+1}{f(n)}$. For $X\sim\mathcal{CN}(\mu,\sigma^2)$, $\mathbb{E}[|X|^2]=|\mu|^2+\sigma^2,
\mathbb{E}[|X|^4]
=
|\mu|^4+4|\mu|^2\sigma^2+2\sigma^4 .$
Therefore,
\begin{align}
\mathbb{E}\!\left[(\hat\theta_p-\theta_t)^2\right]
&=
\frac{1}{c^4}
\mathbb{E}\!\left[
\left(|\bar Z|^2-c^2\theta_t\right)^2
\right] \nonumber\\
&=
\frac{1}{c^4}
\left[
\mathbb{E}[|\bar Z|^4]
-
2c^2\theta_t\,\mathbb{E}[|\bar Z|^2]
+
c^4\theta_t^2
\right] \nonumber\\
&=
\frac{2\theta_t\sigma_n^2}{c^2}
+
\frac{2\sigma_n^4}{c^4} \nonumber\\
&=
\frac{2\theta_t(N_t+1)}
{c^2 f(n)}
+
\frac{2(N_t+1)^2}
{c^4 f(n)^2}.
\end{align}
For $f(n)= \sqrt{n}$, we have the expansion to the leading order of the form:
\begin{equation}
 \mathbb{E}\!\left[(\hat\theta_p-\theta_t)^2\right]
    =
    \frac{2\theta_t(N_t+1)}{c^2\sqrt n}
    +
    O(n^{-1}).
\end{equation}
Furthermore, $\frac{1}{m_n}
    =
    \frac{1}{n-\sqrt n}
    =
    \frac{1}{n}
    +
    \frac{1}{n^{3/2}}
    +
    O(n^{-2})$. Substituting these into Eq.~\eqref{eq:Var_lamda_r}, we obtain
\begin{align}
    \operatorname{Var}(\hat\theta_r)
    &=
    \frac{1}{nF_Q(\theta_t)}\nonumber\\
    &\quad+ \left(  \frac{1}{F_Q(\theta_t)} 
    +
    \frac{
    2\theta_t(N_t+1)C(\theta_t)
    }{
    c^2F_Q(\theta_t)^2
    }\right)
    n^{-3/2}\nonumber \\
    &+
    o(n^{-3/2}),
    \label{eq:MSE_one_shot_final}
\end{align}

So far, we have focused on a two-stage approach, where the samples are divided into a single calibration step followed by a refinement stage. We now consider a feedback-adaptive protocol in which the refinement data itself are divided into $R$ batches, $m_1+m_2+\cdots+m_R=m_n=n-f(n)$. After each batch, the displacement is recalibrated using the current estimate. Let, $e_{r-1}=\hat{\theta}_{r-1}-\theta_t$ be the calibration error at batch $r$, and the corresponding classical Fisher Information and $F_{\rm cl}(\theta_t;\hat{\theta}_{r-1})=
F_Q(\theta_t)-C(\theta_t)e_{r-1}^{2}+O(e_{r-1}^{3}).$ The total accumulated information is
\begin{equation}
I_{\rm batch}
=
m_nF_Q(\theta_t)
-
C(\theta_t)\sum_{r=1}^{R}m_r e_{r-1}^2
+
o\!\left(\sum_{r=1}^{R}m_r e_{r-1}^2\right).
\end{equation}
For equal batches and $f(n)=\sqrt n$, the leading correction becomes
\begin{align}
\operatorname{Var}&(\hat\theta_r)_{\rm batch}
=
\frac{1}{nF_Q(\theta_t)}
+ \nonumber \\
&\left[
\frac{1}{F_Q(\theta_t)}
+
\frac{1}{R}
\frac{2\theta_t(N_t+1)C(\theta_t)}
{c^2F_Q(\theta_t)^2}
\right]n^{-3/2}
+
o(n^{-3/2}),
\end{align}
For $R=1$, this reduces to the two-stage approach. One could consider a fully online limit, where the displacement is updated after every $j$-th photon-counting measurement outcome $n_j$ by maximizing the accumulated likelihood $\ell_j(\theta)=\ell_{j-1}(\theta)+\log p_{n_j}(\theta;\hat\theta_{j-1})$. If the online estimator is asymptotically efficient, then $\mathbb{E}[e_j^2]\simeq [jF_Q(\theta_t)]^{-1}$, and via Harmonic series sum expansion, we obtain $\sum_{j=1}^{n}\mathbb{E}[e_j^2]\simeq \log n/F_Q(\theta_t)$. Consequently, to the leading order,
\begin{equation}
\operatorname{Var}(\hat\theta_r)_{\rm online}
=
\frac{1}{nF_Q(\theta_t)}
+
\frac{C(\theta_t)\log n}
{n^2F_Q(\theta_t)^3}
+
o\!\left(\frac{\log n}{n^2}\right).
\end{equation}

Thus, feedback does not change the leading QCRB term, but it can reduce the finite-sample calibration penalty. Batch feedback suppresses the calibration part of the $n^{-3/2}$ correction by a factor $1/R$, while online feedback is expected to reduce the calibration penalty to order $(\log n)/n^2$.

\begin{figure}[h]
    \centering
    \includegraphics[width=\linewidth]{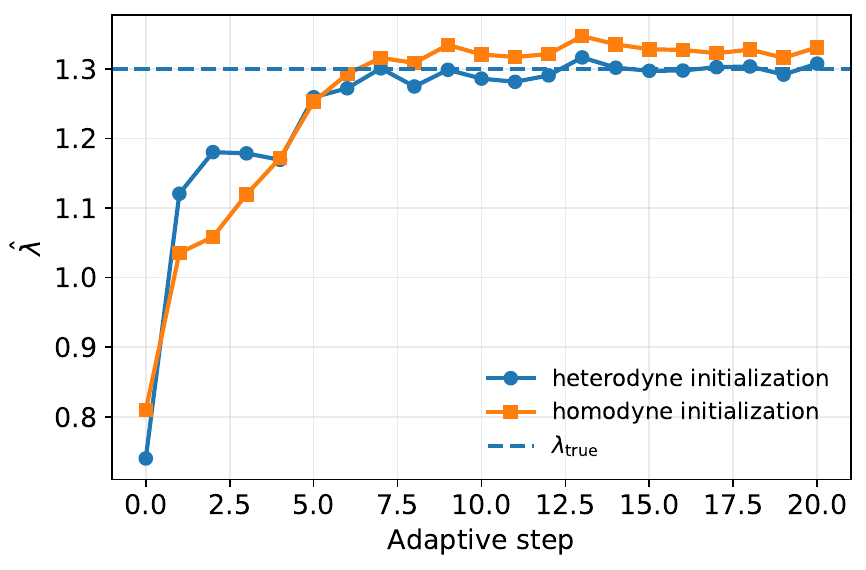}
   \caption{
Adaptive estimation of $\theta$ for the displaced thermal-state model with $N(\theta)=a+b\theta$ and $\beta(\theta)=c\sqrt{\theta}$, using parameters $\theta_{\mathrm{true}}=1.3$, $a=0.2$, $b=0.5$, and $c=1.0$. The protocol uses $M_{\mathrm{init}}=50$ heterodyne or homodyne samples for initialization, followed by displaced photon counting with $M_{\mathrm{per\ round}}=100$ samples per adaptive round for $K=20$ rounds, with maximum-likelihood optimization over $\theta\in[10^{-4},5.0]$. The dashed line denotes the true value $\theta_{\mathrm{true}}$, and both initialization methods converge toward it under adaptive updates.
}
    \label{fig:adaptive step}
\end{figure}

The figure \ref{fig:adaptive step} shows the adaptive estimation protocol via either heterodyne or homodyne initial rough estimation. As per the QLAN philosophy, the asymptotic limit is roughly independent of the initial estimation method. For this simple toy model, both estimators converge quickly to the true parameter.

\section{Application in Eavesdropper detection}
\label{sec:application1}

Given the channel \eqref{eqn:fiber-channel}, we formulate the task of detecting whether, at a given segment $i$, the channel parameters deviate from their nominal values due to an eavesdropping attack. The input states are coherent states, which are transformed by the Gaussian channel $(X,Y)$ into displaced thermal states.

An attack could directly influence the SBS parameters, but since
such changes were studied in the previous section, we consider
an evanescent-coupling attack with the SBS coefficients held fixed.
We model the attack as a beam-splitter interaction immediately
after the SBS interaction in segment $i$, with transmissivity
$\tau_E$. The auxiliary optical input is a thermal state with
mean photon number $n_E$ and zero displacement. The output state
is $S_N(\beta)$, with
\begin{align}
N
&=
\eta^{L-i}
\Bigg[
\tau_E
\left(n_{\mathrm{th}}+\tfrac{1}{2}\right)|\nu_i|^2
\nonumber\\
&\qquad
+\frac{\tau_E}{2}
\left(
\left|\sqrt{\eta}+\kappa_i\right|^2-\eta
\right)
+(1-\tau_E)n_E
\Bigg],
\label{eq:disturbed_parameters}
\\
\beta
&=
\eta^{(L-1)/2}
\sqrt{\tau_E}
\left(\sqrt{\eta}+\kappa_i\right)\sqrt{E}.
\end{align}
Here, $1-\tau_E$ is the fraction of incident optical power
coupled out at the tap. For a passive attack ($n_E=0$), the
auxiliary input is vacuum: the tap attenuates the coherent
field and the existing SBS thermal occupation without adding
excess thermal noise. For $n_E>0$, the injected thermal noise
contributes $\eta^{L-i}(1-\tau_E)n_E$ photons per output mode.
Setting $\tau_E=1$ recovers the unattacked channel, while
$\kappa_i=\nu_i=n_E=0$ gives $N=0$, as required for a coherent
input transmitted through passive vacuum-loss channels.

We can model the intrusion-detection problem from a local parameter-estimation perspective. We treat the attack strength
\(\rho = 1 - \tau_E\) as an unknown scalar parameter and quantify its estimability via the quantum Fisher information (QFI). In Figure~\ref{fig:rho_variance}, we compare the precision for different receiver constraints on weak-probe limit, $E_{\text{probe}}= 5$. The ultimate bound is from the quantum bound  $\mathrm{Var}(\hat\rho)$, while the photon counting and heterodyne are other experimentally constrained measurements. For this weak-probe limit, Photon counting is close to the QFI limit, whereas heterodyne detection is less efficient.

\begin{figure}[!t]
    \centering
    \includegraphics[width=\linewidth]{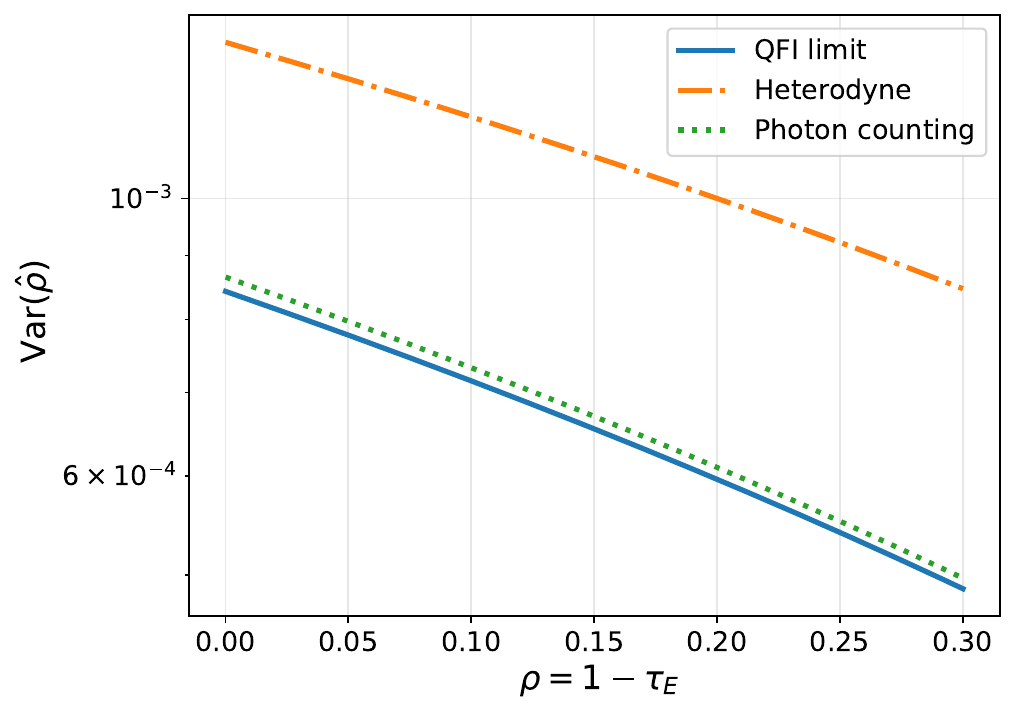}
  \caption{
Receiver-constrained estimation of the tapping strength comparing the quantum Fisher information limit, heterodyne detection, and photon counting for the passive attack, $n_E = 0$, $k = 1000$. The parameters are a $1550~{\rm nm}$ optical probe, fiber attenuation $0.2~{\rm dB/km}$, $\delta z=0.5~{\rm m}$, $L=100$ , $i=50$,  $f_B=10.8~{\rm GHz}$, $\Gamma_B/2\pi=30~{\rm MHz}$,  $\xi_0=0.31~{\rm W}^{-1}{\rm m}^{-1}$, $T=300~{\rm K}$, $P_p=10~{\rm mW}$, $E_{\rm probe}=5$.
}
    \label{fig:rho_variance}
\end{figure}

\section{Conclusion}

In this work, we developed a quantum-information-theoretic framework for stimulated Brillouin scattering-based sensing in optical fibers. We formulated each fiber segment as a single-mode bosonic Gaussian channel acting on the Stokes field. The Gaussian channel thus includes the local parameters such as ordinary fiber attenuation, SBS-induced gain, dispersive phase response, and phonon-induced thermal noise. Applying this framework to Brillouin-frequency and temperature sensing showed that the largest information is not obtained at the maximum of the Brillouin gain profile, but rather near the steepest parts of the resonance, where small parameter changes produce the largest variation in the output quantum state. We also compared the ultimate quantum limit with receiver-constrained strategies such as heterodyne detection and photon counting. In addition, we proposed an adaptive approach that achieves first-order asymptotic optimality and provides a second-order correction to address preliminary miscalibration. Finally, we applied this to the fiber-tapping problem and provided a quantum-information-theoretic benchmark for SBS-based intrusion detection in optical fibers.

Several directions remain open. First, the present analysis can be extended to a multiparameter estimation problem in which parameters such as temperature, strain, and Brillouin frequency can vary simultaneously. Techniques from modern quantum computing could be exported for this \cite{adhikari2026certifiedfidelitysusceptibilityclassical,adhikari2025quantumrampsecretsharing}. Second, the present work focused on coherent-state probes. It would be interesting to see if nonclassical Gaussian probes, such as squeezed or entangled states \cite{Adhikari_2024}, can improve SBS sensing. Third, further work is needed to sharpen the information-theoretic definition of the intrusion detection task and to assess the suitability of different probe signals. Possible methods include the quickest change-point detection \cite{Fanizza_2023}, or compound-hypothesis testing approaches \cite{Lami_2025} to relax the assumption of binary system states, thereby fixing the eavesdropper's impact on the system to a single parameter value, rather than a (more realistic) continuum. Finally, the single-active-segment model can be generalized to distributed sensing with many correlated fiber segments and to the investigation of engineered multimode structures that may provide additional metrological information.

\section*{Acknowledgements}
This work was financed by the Federal Ministry of Education and Research of Germany in the Q-STARS project, grant number 16KIS2604, as well as via grants 16KISQ093, 16KISQ039 and 16KISQ077. The generous support of the state of Bavaria via the 6GQT project is greatly appreciated. The research is part of the Munich Quantum Valley, which is supported by the Bavarian state government with funds from the Hightech Agenda Bayern Plus. Further support was received from the DFG via grant NO 1129/2-1.

\bibliographystyle{IEEEtran}
\bibliography{bib}

\clearpage
\onecolumn
\appendices

\section{Beam-splitter representation of distributed attenuation}
\label{sec:beam-splitter}

We consider the propagation of the frequency-resolved Stokes annihilation operator
\(a_s(z,\Omega)\), where \(z\) denotes the longitudinal propagation coordinate and
\(\Omega\) is the frequency offset from the Stokes carrier. The \(i\)th fiber
segment occupies the interval $z\in [z_i,z_i+\delta z],$ and is assumed to have a constant intensity attenuation coefficient
\(\alpha_{s,i}\) over this interval. Since \(\alpha_{s,i}\) is an intensity loss
coefficient, the corresponding deterministic field-amplitude damping rate is
\(\alpha_{s,i}/2\). The attenuation-only quantum Langevin equation for the Stokes field is
\begin{equation}
\frac{\partial a_s(z,\Omega)}{\partial z}
=
-\frac{\alpha_{s,i}}{2}a_s(z,\Omega)
+
\sqrt{\alpha_{s,i}}\,v(z,\Omega),
\label{eq:langevin_attenuation}
\end{equation}
where \(v(z,\Omega)\) is a continuum reservoir annihilation operator describing
vacuum fluctuations coupled into the optical mode by loss.  The reservoir field is assumed to be Markovian along the propagation coordinate and satisfies $[v(z,\Omega),v^\dagger(z',\Omega')]
=
\delta(z-z')\delta(\Omega-\Omega').$
The input Stokes field satisfies the canonical frequency-domain commutator $[a_s(z_i,\Omega),a_s^\dagger(z_i,\Omega')]
=
\delta(\Omega-\Omega').$
Furthermore, the input field and the reservoir field act on independent Hilbert
spaces and commute with each other.

For compactness, let us define $\alpha\equiv \alpha_{s,i}.$ Equation~\eqref{eq:langevin_attenuation} becomes
\begin{align}
\frac{\partial a_s(z,\Omega)}{\partial z}
+
\frac{\alpha}{2}a_s(z,\Omega)
&=
\sqrt{\alpha}\,v(z,\Omega),
 \nonumber \\
e^{\alpha z/2}
\frac{\partial a_s(z,\Omega)}{\partial z}
+
\frac{\alpha}{2}
e^{\alpha z/2}a_s(z,\Omega)
&=
\sqrt{\alpha}\,
e^{\alpha z/2}v(z,\Omega),
 \nonumber \\
\frac{\partial}{\partial z}
\left[
e^{\alpha z/2}a_s(z,\Omega)
\right]
&=
\sqrt{\alpha}\,
e^{\alpha z/2}v(z,\Omega).
\label{eq:langevin_total_derivative}
\end{align}
where we multiplied by \(e^{\alpha z/2}\) in the second step, and used the fact that the left-hand side is a total derivative in the third step.  We integrate Eq.~\eqref{eq:langevin_total_derivative} over the segment
\([z_i,z_i+\delta z]\):
\begin{align}
\int_{z_i}^{z_i+\delta z} dz\,
\frac{\partial}{\partial z}
\left[
e^{\alpha z/2}a_s(z,\Omega)
\right]
&=
\sqrt{\alpha}
\int_{z_i}^{z_i+\delta z} dz\,
e^{\alpha z/2}v(z,\Omega) \nonumber \\
e^{\alpha(z_i+\delta z)/2}
a_s(z_i+\delta z,\Omega)
-
e^{\alpha z_i/2}a_s(z_i,\Omega) &= \sqrt{\alpha}
\int_{z_i}^{z_i+\delta z} dz\,
e^{\alpha z/2}v(z,\Omega)
\label{eq:integrated_total_derivative}
\end{align}
where we used the fundamental theorem of calculus in the second step. Solving for \(a_s(z_i+\delta z,\Omega)\) gives the exact finite-segment input-output relation
\begin{equation}
a_s(z_i+\delta z,\Omega)
=
e^{-\alpha_{s,i}\delta z/2}a_s(z_i,\Omega)
+
\sqrt{\alpha_{s,i}}
\int_{z_i}^{z_i+\delta z} dz\,
e^{-\alpha_{s,i}(z_i+\delta z-z)/2}v(z,\Omega).
\label{eq:finite_segment_langevin_solution}
\end{equation}
The first term in Eq.~\eqref{eq:finite_segment_langevin_solution} is the
deterministically attenuated input field. The second term is the accumulated
reservoir contribution. A reservoir fluctuation injected at position \(z\) must
still propagate over the remaining distance \(z_i+\delta z-z\), and is therefore
attenuated by the amplitude factor
\(e^{-\alpha_{s,i}(z_i+\delta z-z)/2}\).

\subsection{Requirement of the Langevin reservoir term}
\label{subsec:deterministic_loss_failure}

The Langevin noise term $ v(z,\Omega)$ is required to preserve the commutation relation. If the reservoir term were omitted, the propagation equation would reduce to
\begin{equation}
\frac{\partial a_s(z,\Omega)}{\partial z}
=
-\frac{\alpha_{s,i}}{2}a_s(z,\Omega).
\label{eq:deterministic_attenuation_equation}
\end{equation}
Integrating Eq.~\eqref{eq:deterministic_attenuation_equation} from \(z_i\) to
\(z_i+\delta z\) gives $a_s(z_i+\delta z,\Omega)
=
e^{-\alpha_{s,i}\delta z/2}a_s(z_i,\Omega).$ However, the solution doesn't preserve the canonical bosonic algebra:

\begin{align}
[a_s(z_i+\delta z,\Omega),
a_s^\dagger(z_i+\delta z,\Omega')]
&=
\left[
e^{-\alpha_{s,i}\delta z/2}a_s(z_i,\Omega),
e^{-\alpha_{s,i}\delta z/2}a_s^\dagger(z_i,\Omega')
\right]
\nonumber\\
&=
e^{-\alpha_{s,i}\delta z}
[a_s(z_i,\Omega),a_s^\dagger(z_i,\Omega')]
\nonumber\\
&=
e^{-\alpha_{s,i}\delta z}\delta(\Omega-\Omega').
\label{eq:deterministic_commutator_loss}
\end{align}
Equation~\eqref{eq:deterministic_commutator_loss} is inconsistent with the
required output commutator $[a_s(z_i+\delta z,\Omega),
a_s^\dagger(z_i+\delta z,\Omega')]
=
\delta(\Omega-\Omega').$ To preserve the commutation relation, we thus need to add the 
Langevin operator \(v(z,\Omega)\)  as in Eq.~\eqref{eq:langevin_attenuation}.

Let us now verify explicitly that the solution
Eq.~\eqref{eq:finite_segment_langevin_solution} preserves the canonical
commutator. For notational clarity, set the left endpoint of the segment to
zero and write the right endpoint as \(z\). The solution then reads
\begin{equation}
a_s(z,\Omega)
=
e^{-\alpha z/2}a_s(0,\Omega)
+
\sqrt{\alpha}
\int_0^z dz_1\,
e^{-\alpha(z-z_1)/2}v(z_1,\Omega).
\label{eq:simplified_segment_solution}
\end{equation}
Define the deterministic and noise contributions by  $A(z,\Omega)$ and $B(z,\Omega)$ respectively, such that:
\begin{equation}
    a_s(z,\Omega)=A(z,\Omega)+B(z,\Omega), \quad a_s^\dagger(z,\Omega')
=
A^\dagger(z,\Omega')+B^\dagger(z,\Omega'),
\end{equation}

Using bilinearity of the commutator, we obtain
\begin{align}
[a_s(z,\Omega),a_s^\dagger(z,\Omega')]
&=
[A(z,\Omega)+B(z,\Omega),
A^\dagger(z,\Omega')+B^\dagger(z,\Omega')]
\nonumber\\
&=
[A(z,\Omega),A^\dagger(z,\Omega')]
+
[A(z,\Omega),B^\dagger(z,\Omega')]
\nonumber\\
&\quad+
[B(z,\Omega),A^\dagger(z,\Omega')]
+
[B(z,\Omega),B^\dagger(z,\Omega')].
\label{eq:commutator_decomposition}
\end{align}

The deterministic contribution is $[A(z,\Omega),A^\dagger(z,\Omega')] =e^{-\alpha z}\delta(\Omega-\Omega') $, while the cross terms vanish as input modes and the reservoir modes are independent. It remains to compute the reservoir contribution: 

\begin{align}
[B(z,\Omega),B^\dagger(z,\Omega')]
&=
\left[
\sqrt{\alpha}\int_0^z dz_1\,
e^{-\alpha(z-z_1)/2}v(z_1,\Omega),
\sqrt{\alpha}\int_0^z dz_2\,
e^{-\alpha(z-z_2)/2}v^\dagger(z_2,\Omega')
\right]
\nonumber\\
&=
\alpha
\int_0^z dz_1
\int_0^z dz_2\,
e^{-\alpha(z-z_1)/2}
e^{-\alpha(z-z_2)/2}
[v(z_1,\Omega),v^\dagger(z_2,\Omega')] \nonumber \\
&= \alpha
\int_0^z dz_1
\int_0^z dz_2\,
e^{-\alpha(z-z_1)/2}
e^{-\alpha(z-z_2)/2}
\delta(z_1-z_2)\delta(\Omega-\Omega')
\nonumber\\
&=
\alpha\delta(\Omega-\Omega')
\int_0^z dz_1
\int_0^z dz_2\,
e^{-\alpha(z-z_1)/2}
e^{-\alpha(z-z_2)/2}
\delta(z_1-z_2) \nonumber \\
&=
\alpha\delta(\Omega-\Omega')
\int_0^z dz_1\,
e^{-\alpha(z-z_1)/2}
e^{-\alpha(z-z_1)/2}
\nonumber\\
&=
\alpha\delta(\Omega-\Omega')
\int_0^z dz_1\,
e^{-\alpha(z-z_1)} \nonumber \\
&= \left(1-e^{-\alpha z}\right)\delta(\Omega-\Omega').
\end{align}
Substituting these commutation relations into
Eq.~\eqref{eq:commutator_decomposition}, we find
\begin{align}
[a_s(z,\Omega),a_s^\dagger(z,\Omega')]
&=
e^{-\alpha z}\delta(\Omega-\Omega')
+
\left(1-e^{-\alpha z}\right)\delta(\Omega-\Omega')
\nonumber\\
&=
\left[
e^{-\alpha z}+1-e^{-\alpha z}
\right]\delta(\Omega-\Omega')
\nonumber\\
&=
\delta(\Omega-\Omega').
\label{eq:commutator_preserved_simplified_segment}
\end{align}
Therefore, the Langevin noise term supplies exactly the commutator contribution
lost by deterministic attenuation.

\subsection{Effective Vacuum Mode and Beam-Splitter Form}
\label{subsec:effective_vacuum_mode}

We now show that the continuum-noise contribution in
Eq.~\eqref{eq:finite_segment_langevin_solution} can be written as a single
effective vacuum mode. Define the segment input and output operators $a_{s,\mathrm{in}}(\Omega)
:=
a_s(z_i,\Omega),$ and  $a_{s,\mathrm{out}}(\Omega)
:=
a_s(z_i+\delta z,\Omega)$ respectively. 
The segment intensity transmissivity is $\eta_i
:=
e^{-\alpha_{s,i}\delta z}$ , and 
Eq.~\eqref{eq:finite_segment_langevin_solution} becomes
\begin{equation}
a_{s,\mathrm{out}}(\Omega)
=
\sqrt{\eta_i}\,a_{s,\mathrm{in}}(\Omega)
+
\sqrt{\alpha_{s,i}}
\int_{z_i}^{z_i+\delta z} dz\,
e^{-\alpha_{s,i}(z_i+\delta z-z)/2}v(z,\Omega).
\label{eq:segment_solution_input_output}
\end{equation}
We now define the effective reservoir mode \(v_i(\Omega)\) by
\begin{equation}
v_i(\Omega)
:=
\frac{\sqrt{\alpha_{s,i}}}{\sqrt{1-\eta_i}}
\int_{z_i}^{z_i+\delta z} dz\,
e^{-\alpha_{s,i}(z_i+\delta z-z)/2}v(z,\Omega).
\label{eq:effective_vacuum_mode_definition}
\end{equation}
which gives the standard beam-splitter representation of a pure-loss Gaussian channel:
\begin{equation}
a_{s,\mathrm{out}}(\Omega)
=
\sqrt{\eta_i}\,a_{s,\mathrm{in}}(\Omega)
+
\sqrt{1-\eta_i}\,v_i(\Omega).
\label{eq:beam_splitter_form}
\end{equation}
The transformation in Eq.~\eqref{eq:beam_splitter_form} is therefore
not an additional approximation; it is the finite-segment representation of the
underlying Langevin attenuation equation.   In the language of the beam splitter, the optical input is transmitted with an amplitude coefficient
\(\sqrt{\eta_i}\), while the reservoir vacuum mode enters through the unused port
with amplitude coefficient \(\sqrt{1-\eta_i}\). We can also verify that $v_i(\Omega)$ is a a properly normalized bosonic mode: 
\begin{align}
[v_i(\Omega),v_i^\dagger(\Omega')]
&=
\frac{\alpha_{s,i}}{1-\eta_i}
\int_{z_i}^{z_i+\delta z}dz
\int_{z_i}^{z_i+\delta z}dz'\,
e^{-\alpha_{s,i}(z_i+\delta z-z)/2}\textit{}
e^{-\alpha_{s,i}(z_i+\delta z-z')/2}
[v(z,\Omega),v^\dagger(z',\Omega')] \nonumber \\
&= \frac{\alpha_{s,i}}{1-\eta_i}
\int_{z_i}^{z_i+\delta z}dz\,
e^{-\alpha_{s,i}(z_i+\delta z-z)}
\,\delta(\Omega-\Omega') \nonumber  \\
&=  \frac{\alpha_{s,i}}{1-\eta_i} \delta(\Omega-\Omega')  \int_{z_i}^{z_i+\delta z}dz\,e^{-\alpha_{s,i}(z_i+\delta z-z)} \nonumber  \\
&=  \frac{\alpha_{s,i}}{1-\eta_i} \delta(\Omega-\Omega')  \int_0^{\delta z}du\,e^{-\alpha_{s,i}u}  \nonumber  \\
&=  \frac{\alpha_{s,i}}{1-\eta_i} \frac{1-e^{-\alpha_{s,i}\delta z}}{\alpha_{s,i}} \delta(\Omega-\Omega')  \nonumber \\
&=  \frac{\alpha_{s,i}}{1-\eta_i} \frac{1-\eta_i}{\alpha_{s,i}}\delta(\Omega-\Omega')  \nonumber  \\
&= \delta(\Omega-\Omega') \nonumber 
\end{align}

The beam-splitter relation in Eq.~\eqref{eq:beam_splitter_form} also makes the
commutator preservation transparent: 
\begin{align}
[a_{s,\mathrm{out}}(\Omega),a_{s,\mathrm{out}}^\dagger(\Omega')]
&=
\eta_i\delta(\Omega-\Omega')
+
(1-\eta_i)\delta(\Omega-\Omega')
\nonumber\\
&=
\left[\eta_i+1-\eta_i\right]\delta(\Omega-\Omega')
\nonumber\\
&=
\delta(\Omega-\Omega').
\label{eq:beam_splitter_commutator_preserved}
\end{align}
Thus the beam-splitter form is exactly equivalent to the Langevin solution and
preserves the frequency-resolved bosonic algebra. 

For a sufficiently short segment, $\alpha_{s,i}\delta z \ll 1,$
 we get: 
\begin{equation}
\eta_i
=
e^{-\alpha_{s,i}\delta z}
\simeq
1-\alpha_{s,i}\delta z,
\qquad
1-\eta_i
\simeq
\alpha_{s,i}\delta z \qquad e^{-\alpha_{s,i}(z_i+\delta z-z)/2}
=
1+O(\delta z).
\end{equation}
 Consequently, $\sqrt{1-\eta_i}\,v_i(\Omega)
\simeq
\sqrt{\alpha_{s,i}\delta z}\,v_i(\Omega),$
which is the standard infinitesimal Langevin noise scaling. The $\sqrt{\delta z}$ dependence arises because the integrated
continuum noise has variance proportional to $\delta z$, as implied
by the continuum commutation relation.

\subsection{Continuous-Variable Representation of the Loss Channel}
\label{sec:cv_loss_channel_derivation}

 The corresponding input-output relation, together with its Hermitian adjoint, is
\begin{align}
a_{s,\mathrm{out}}(\Omega)
&=
\sqrt{\eta_i}\,a_{s,\mathrm{in}}(\Omega)
+
\sqrt{1-\eta_i}\,v_i(\Omega),  \qquad
a_{s,\mathrm{out}}^\dagger(\Omega')
=
\sqrt{\eta_i}\,a_{s,\mathrm{in}}^\dagger(\Omega')
+
\sqrt{1-\eta_i}\,v_i^\dagger(\Omega').
\label{eq:annihilation_loss_relation}
\end{align}
This does preserve the canonical commutation relation:

\begin{align}
\left[
a_{s,\mathrm{out}}(\Omega),
a_{s,\mathrm{out}}^\dagger(\Omega')
\right]
&=
\left[
\sqrt{\eta_i}\,a_{s,\mathrm{in}}(\Omega)
+
\sqrt{1-\eta_i}\,v_i(\Omega),
\sqrt{\eta_i}\,a_{s,\mathrm{in}}^\dagger(\Omega')
+
\sqrt{1-\eta_i}\,v_i^\dagger(\Omega')
\right]
\nonumber\\
&=
\eta_i
\left[
a_{s,\mathrm{in}}(\Omega),
a_{s,\mathrm{in}}^\dagger(\Omega')
\right]
+
\sqrt{\eta_i(1-\eta_i)}
\left[
a_{s,\mathrm{in}}(\Omega),
v_i^\dagger(\Omega')
\right]
\nonumber\\
&\quad
+
\sqrt{\eta_i(1-\eta_i)}
\left[
v_i(\Omega),
a_{s,\mathrm{in}}^\dagger(\Omega')
\right]
+
(1-\eta_i)
\left[
v_i(\Omega),
v_i^\dagger(\Omega')
\right] \nonumber \\
&=
\eta_i\,\delta(\Omega-\Omega')
+
\sqrt{\eta_i(1-\eta_i)}\,0
+
\sqrt{\eta_i(1-\eta_i)}\,0
+
(1-\eta_i)\,\delta(\Omega-\Omega')
\nonumber\\
&=
\left[\eta_i+(1-\eta_i)\right]\delta(\Omega-\Omega')
\nonumber\\
&=
\delta(\Omega-\Omega') \nonumber.
\end{align}

We now introduce the continuous-variable quadrature representation:
\begin{align}
q_{v_i}(\Omega)
&=
\frac{
v_i(\Omega)+v_i^\dagger(\Omega)
}{\sqrt{2}},
\qquad
p_{v_i}(\Omega)
=
\frac{
v_i(\Omega)-v_i^\dagger(\Omega)
}{i\sqrt{2}}, \nonumber \\
q_{s,\mathrm{in}}(\Omega)
&=
\frac{
a_{s,\mathrm{in}}(\Omega)+a_{s,\mathrm{in}}^\dagger(\Omega)
}{\sqrt{2}},
\qquad
p_{s,\mathrm{in}}(\Omega)
=
\frac{
a_{s,\mathrm{in}}(\Omega)-a_{s,\mathrm{in}}^\dagger(\Omega)
}{i\sqrt{2}}, \nonumber \\
q_{s,\mathrm{out}}(\Omega)
&=
\frac{
a_{s,\mathrm{out}}(\Omega)+a_{s,\mathrm{out}}^\dagger(\Omega)
}{\sqrt{2}},
\qquad
p_{s,\mathrm{out}}(\Omega)
=
\frac{
a_{s,\mathrm{out}}(\Omega)-a_{s,\mathrm{out}}^\dagger(\Omega)
}{i\sqrt{2}}.
\end{align}

Similarly, we now derive the transformation of the signal quadratures: 
\begin{align}
q_{s,\mathrm{out}}(\Omega)
&=
\frac{
a_{s,\mathrm{out}}(\Omega)+a_{s,\mathrm{out}}^\dagger(\Omega)
}{\sqrt{2}}
\nonumber\\
&=
\frac{
\sqrt{\eta_i}\,a_{s,\mathrm{in}}(\Omega)
+
\sqrt{1-\eta_i}\,v_i(\Omega)
+
\sqrt{\eta_i}\,a_{s,\mathrm{in}}^\dagger(\Omega)
+
\sqrt{1-\eta_i}\,v_i^\dagger(\Omega)
}{\sqrt{2}}
\nonumber\\
&=
\sqrt{\eta_i}
\frac{
a_{s,\mathrm{in}}(\Omega)+a_{s,\mathrm{in}}^\dagger(\Omega)
}{\sqrt{2}}
+
\sqrt{1-\eta_i}
\frac{
v_i(\Omega)+v_i^\dagger(\Omega)
}{\sqrt{2}}
\nonumber\\
&=
\sqrt{\eta_i}\,q_{s,\mathrm{in}}(\Omega)
+
\sqrt{1-\eta_i}\,q_{v_i}(\Omega).
\label{eq:q_output_transformation}
\end{align}
Similarly,
\begin{align}
p_{s,\mathrm{out}}(\Omega)
&=
\frac{
a_{s,\mathrm{out}}(\Omega)-a_{s,\mathrm{out}}^\dagger(\Omega)
}{i\sqrt{2}}
\nonumber\\
&=
\frac{
\sqrt{\eta_i}\,a_{s,\mathrm{in}}(\Omega)
+
\sqrt{1-\eta_i}\,v_i(\Omega)
-
\sqrt{\eta_i}\,a_{s,\mathrm{in}}^\dagger(\Omega)
-
\sqrt{1-\eta_i}\,v_i^\dagger(\Omega)
}{i\sqrt{2}}
\nonumber\\
&=
\sqrt{\eta_i}
\frac{
a_{s,\mathrm{in}}(\Omega)-a_{s,\mathrm{in}}^\dagger(\Omega)
}{i\sqrt{2}}
+
\sqrt{1-\eta_i}
\frac{
v_i(\Omega)-v_i^\dagger(\Omega)
}{i\sqrt{2}}
\nonumber\\
&=
\sqrt{\eta_i}\,p_{s,\mathrm{in}}(\Omega)
+
\sqrt{1-\eta_i}\,p_{v_i}(\Omega).
\label{eq:p_output_transformation}
\end{align}

Introducing the quadrature vectors
\begin{equation}
R_{s,\mathrm{in}}(\Omega)
:=
\begin{pmatrix}
q_{s,\mathrm{in}}(\Omega)\\
p_{s,\mathrm{in}}(\Omega)
\end{pmatrix},
\qquad
R_{s,\mathrm{out}}(\Omega)
:=
\begin{pmatrix}
q_{s,\mathrm{out}}(\Omega)\\
p_{s,\mathrm{out}}(\Omega)
\end{pmatrix},
\qquad
R_{v_i}(\Omega)
:=
\begin{pmatrix}
q_{v_i}(\Omega)\\
p_{v_i}(\Omega)
\end{pmatrix},
\label{eq:quadrature_vectors_defined} \nonumber
\end{equation}
Eqs.~\eqref{eq:q_output_transformation} and \eqref{eq:p_output_transformation} combine into
\begin{equation}
R_{s,\mathrm{out}}(\Omega)
=
\sqrt{\eta_i}\,R_{s,\mathrm{in}}(\Omega)
+
\sqrt{1-\eta_i}\,R_{v_i}(\Omega).
\label{eq:quadrature_vector_loss_relation}
\end{equation}

We next compute the transformation of first moments. Define
\begin{equation}
d_{\mathrm{in}}(\Omega)
:=
\left\langle R_{s,\mathrm{in}}(\Omega)\right\rangle,
\qquad
d_{\mathrm{out}}(\Omega)
:=
\left\langle R_{s,\mathrm{out}}(\Omega)\right\rangle,
\qquad
d_{v_i}(\Omega)
:=
\left\langle R_{v_i}(\Omega)\right\rangle.
\label{eq:displacement_vectors_defined}
\end{equation}
Taking the expectation value of Eq.~\eqref{eq:quadrature_vector_loss_relation}, one obtains
\begin{align}
d_{\mathrm{out}}(\Omega)
&=
\left\langle
R_{s,\mathrm{out}}(\Omega)
\right\rangle
\nonumber\\
&=
\left\langle
\sqrt{\eta_i}\,R_{s,\mathrm{in}}(\Omega)
+
\sqrt{1-\eta_i}\,R_{v_i}(\Omega)
\right\rangle
\nonumber\\
&=
\sqrt{\eta_i}\,
\left\langle R_{s,\mathrm{in}}(\Omega)\right\rangle
+
\sqrt{1-\eta_i}\,
\left\langle R_{v_i}(\Omega)\right\rangle
\nonumber\\
&=
\sqrt{\eta_i}\,d_{\mathrm{in}}(\Omega)
+
\sqrt{1-\eta_i}\,d_{v_i}(\Omega).
\label{eq:first_moment_general}
\end{align}
Since the ancillary mode is in the vacuum state,  $d_{v_i}(\Omega)=0,$ as 
\begin{equation}
\left\langle v_i(\Omega)\right\rangle
=
\left\langle v_i^\dagger(\Omega)\right\rangle
=
0,  \quad \left\langle q_{v_i}(\Omega)\right\rangle
=
\frac{
\left\langle v_i(\Omega)\right\rangle
+
\left\langle v_i^\dagger(\Omega)\right\rangle
}{\sqrt{2}}
=
0, \quad \left\langle p_{v_i}(\Omega)\right\rangle
=
\frac{
\left\langle v_i(\Omega)\right\rangle
-
\left\langle v_i^\dagger(\Omega)\right\rangle
}{i\sqrt{2}}
=
0. \nonumber
\end{equation}
Eq.~\eqref{eq:first_moment_general} reduces to
\begin{equation}
d_{\mathrm{out}}(\Omega)
=
\sqrt{\eta_i}\,d_{\mathrm{in}}(\Omega). 
\label{eq:first_moment_loss_channel}
\end{equation}

We now compute the covariance matrix. For a quadrature vector $R$, define
\begin{equation}
V_{\alpha\beta}
:=
\frac{1}{2}
\left\langle
\Delta R_\alpha \Delta R_\beta
+
\Delta R_\beta \Delta R_\alpha
\right\rangle,
\qquad
\Delta R_\alpha
:=
R_\alpha-\left\langle R_\alpha\right\rangle,
\label{eq:covariance_matrix_definition}
\end{equation}
where $\alpha,\beta\in\{q,p\}$. From Eq.~\eqref{eq:quadrature_vector_loss_relation},
\begin{align}
\Delta R_{s,\mathrm{out}}(\Omega)
&=
R_{s,\mathrm{out}}(\Omega)
-
\left\langle R_{s,\mathrm{out}}(\Omega)\right\rangle
\nonumber\\
&=
\sqrt{\eta_i}
\left[
R_{s,\mathrm{in}}(\Omega)
-
\left\langle R_{s,\mathrm{in}}(\Omega)\right\rangle
\right]
+
\sqrt{1-\eta_i}
\left[
R_{v_i}(\Omega)
-
\left\langle R_{v_i}(\Omega)\right\rangle
\right]
\nonumber\\
&=
\sqrt{\eta_i}\,\Delta R_{s,\mathrm{in}}(\Omega)
+
\sqrt{1-\eta_i}\,\Delta R_{v_i}(\Omega).
\label{eq:output_fluctuation_vector}
\end{align}
Therefore,
\begin{align}
\left[V_{\mathrm{out}}(\Omega)\right]_{\alpha\beta}
&=
\frac{1}{2}
\left\langle
\Delta R_{s,\mathrm{out},\alpha}(\Omega)
\Delta R_{s,\mathrm{out},\beta}(\Omega)
+
\Delta R_{s,\mathrm{out},\beta}(\Omega)
\Delta R_{s,\mathrm{out},\alpha}(\Omega)
\right\rangle
\nonumber\\
&=
\frac{1}{2}
\Big\langle
\left[
\sqrt{\eta_i}\,\Delta R_{s,\mathrm{in},\alpha}
+
\sqrt{1-\eta_i}\,\Delta R_{v_i,\alpha}
\right]
\left[
\sqrt{\eta_i}\,\Delta R_{s,\mathrm{in},\beta}
+
\sqrt{1-\eta_i}\,\Delta R_{v_i,\beta}
\right]
\nonumber\\
&\quad+
\left[
\sqrt{\eta_i}\,\Delta R_{s,\mathrm{in},\beta}
+
\sqrt{1-\eta_i}\,\Delta R_{v_i,\beta}
\right]
\left[
\sqrt{\eta_i}\,\Delta R_{s,\mathrm{in},\alpha}
+
\sqrt{1-\eta_i}\,\Delta R_{v_i,\alpha}
\right]
\Big\rangle
\nonumber\\
&=
\eta_i
\left[V_{\mathrm{in}}(\Omega)\right]_{\alpha\beta}
+
(1-\eta_i)
\left[V_{v_i}(\Omega)\right]_{\alpha\beta}
\nonumber\\
&\quad+
\frac{\sqrt{\eta_i(1-\eta_i)}}{2}
\left\langle
\Delta R_{s,\mathrm{in},\alpha}
\Delta R_{v_i,\beta}
+
\Delta R_{v_i,\beta}
\Delta R_{s,\mathrm{in},\alpha}
\right\rangle
\nonumber\\
&\quad+
\frac{\sqrt{\eta_i(1-\eta_i)}}{2}
\left\langle
\Delta R_{v_i,\alpha}
\Delta R_{s,\mathrm{in},\beta}
+
\Delta R_{s,\mathrm{in},\beta}
\Delta R_{v_i,\alpha}
\right\rangle.
\label{eq:output_covariance_expanded}
\end{align}
The signal and the ancillary vacuum are assumed to be initially uncorrelated, so the total input state factorizes as $\rho_{\mathrm{tot}}
=
\rho_{s,\mathrm{in}}\otimes |0\rangle_{v_i}\langle 0|.$
Therefore, $\left\langle
\Delta R_{s,\mathrm{in},\alpha}
\Delta R_{v_i,\beta}
\right\rangle
=
\left\langle
\Delta R_{s,\mathrm{in},\alpha}
\right\rangle
\left\langle
\Delta R_{v_i,\beta}
\right\rangle
=
0,$
and all cross-covariance terms in Eq.~\eqref{eq:output_covariance_expanded} vanish. Hence
\begin{equation}
V_{\mathrm{out}}(\Omega)
=
\eta_i V_{\mathrm{in}}(\Omega)
+
(1-\eta_i)V_{v_i}(\Omega).
\label{eq:covariance_general_loss}
\end{equation}

It remains to evaluate $V_{v_i}$. For a vacuum state satisfying $v_i|0\rangle_{v_i}=0$, the relevant correlations are $\langle v_i\rangle
=
\langle v_i^\dagger\rangle
=
0,
\langle v_i^\dagger v_i\rangle
=
0,
\langle v_i v_i^\dagger\rangle
=
1.$
The vacuum variance of $q_{v_i}$ is
\begin{align}
\left\langle q_{v_i}^2\right\rangle
=
\left\langle
\left(
\frac{v_i+v_i^\dagger}{\sqrt{2}}
\right)^2
\right\rangle
=
\frac{1}{2}
\left\langle
v_i^2
+
v_i v_i^\dagger
+
v_i^\dagger v_i
+
(v_i^\dagger)^2
\right\rangle
=
\frac{1}{2}
\left(
0+1+0+0
\right)
=
\frac{1}{2}.
\label{eq:vacuum_q_variance}
\end{align}
The vacuum variance of $p_{v_i}$ is
\begin{align}
\left\langle p_{v_i}^2\right\rangle
=
\left\langle
\left(
\frac{v_i-v_i^\dagger}{i\sqrt{2}}
\right)^2
\right\rangle
=
-\frac{1}{2}
\left\langle
v_i^2
-
v_i v_i^\dagger
-
v_i^\dagger v_i
+
(v_i^\dagger)^2
\right\rangle
=
-\frac{1}{2}
\left(
0-1-0+0
\right)
=
\frac{1}{2}.
\label{eq:vacuum_p_variance}
\end{align}
The symmetrized covariance between $q_{v_i}$ and $p_{v_i}$ is
\begin{align}
\frac{1}{2}
\left\langle
q_{v_i}p_{v_i}+p_{v_i}q_{v_i}
\right\rangle
=
\frac{1}{2}
\left\langle
\frac{v_i+v_i^\dagger}{\sqrt{2}}
\frac{v_i-v_i^\dagger}{i\sqrt{2}}
+
\frac{v_i-v_i^\dagger}{i\sqrt{2}}
\frac{v_i+v_i^\dagger}{\sqrt{2}}
\right\rangle
=
0.
\label{eq:vacuum_qp_correlation}
\end{align}
Thus
\begin{equation}
V_{v_i}
=
\begin{pmatrix}
\langle q_{v_i}^2\rangle &
\frac{1}{2}\langle q_{v_i}p_{v_i}+p_{v_i}q_{v_i}\rangle\\
\frac{1}{2}\langle p_{v_i}q_{v_i}+q_{v_i}p_{v_i}\rangle &
\langle p_{v_i}^2\rangle
\end{pmatrix}
=
\frac{1}{2}
\begin{pmatrix}
1&0\\
0&1
\end{pmatrix}
=
\frac{1}{2}I_2.
\label{eq:vacuum_covariance_matrix}
\end{equation}
Substituting Eq.~\eqref{eq:vacuum_covariance_matrix} into Eq.~\eqref{eq:covariance_general_loss}, we obtain
\begin{equation}
V_{\mathrm{out}}(\Omega)
=
\eta_i V_{\mathrm{in}}(\Omega)
+
\frac{1-\eta_i}{2}I_2.
\label{eq:loss_channel_covariance_final}
\end{equation}
Equations~\eqref{eq:first_moment_loss_channel} and \eqref{eq:loss_channel_covariance_final} give the continuous-variable Gaussian channel induced by the annihilation-operator transformation in Eq.~\eqref{eq:annihilation_loss_relation}.

In the standard Gaussian-channel notation,
\begin{equation}
d_{\mathrm{out}}
=
X_i d_{\mathrm{in}},
\qquad
V_{\mathrm{out}}
=
X_i V_{\mathrm{in}}X_i^{T}+Y_i,
\label{eq:general_gaussian_channel_form}
\end{equation}
the matrices are $X_i
=
\sqrt{\eta_i}\,I_2,
Y_i
=
\frac{1-\eta_i}{2}I_2.$
Indeed,
\begin{align}
X_iV_{\mathrm{in}}X_i^T+Y_i
&=
\left(\sqrt{\eta_i}I_2\right)
V_{\mathrm{in}}
\left(\sqrt{\eta_i}I_2\right)^T
+
\frac{1-\eta_i}{2}I_2=
\eta_i V_{\mathrm{in}}
+
\frac{1-\eta_i}{2}I_2,
\end{align}
which is identical to Eq.~\eqref{eq:loss_channel_covariance_final}.

\section{Derivation of the Output Displaced-Thermal State}
\label{app:displaced-thermal-output}

We will now derive the output
parameters used in Eqs. \ref{eq:clean_parameters_theta}. For a single bosonic mode with
annihilation operator $\hat a$, the displacement operator is
\begin{equation}
    D(\alpha)
    =
    \exp\!\left(\alpha \hat a^\dagger-\alpha^*\hat a\right).
\end{equation}
A coherent state is defined by $  |\alpha\rangle=D(\alpha)|0\rangle,$ with $ \hat a|\alpha\rangle=\alpha|\alpha\rangle$, and has a mean photon number
$\langle\hat a^\dagger\hat a\rangle=|\alpha|^2$. In terms of the
quadratures $\hat q=\frac{\hat a+\hat a^\dagger}{\sqrt{2}},$ and 
$\hat p=\frac{\hat a-\hat a^\dagger}{i\sqrt{2}}$, its displacement vector and covariance matrix are
\begin{equation}
    d_\alpha
    =
    \sqrt{2}
    \begin{pmatrix}
        \Re\alpha\\
        \Im\alpha
    \end{pmatrix},
    \qquad
    V_\alpha=\frac{1}{2}I_2.
\end{equation}

A thermal state with mean photon number $N$ is
\begin{equation}
    S_N(0)
    =
    \sum_{n=0}^{\infty}
    \frac{N^n}{(N+1)^{n+1}}
    |n\rangle\langle n|,
\end{equation}
and a displaced thermal state is $  S_N(\beta)
    =
    D(\beta)S_N(0)D^\dagger(\beta).$ Its first and second moments are
\begin{equation}
    \langle\hat a\rangle=\beta,
    \qquad
    d_\beta
    =
    \sqrt{2}
    \begin{pmatrix}
        \Re\beta\\
        \Im\beta
    \end{pmatrix},
    \qquad
    V=\left(N+\frac{1}{2}\right)I_2.
\end{equation}
In particular, $S_0(\beta)=|\beta\rangle\langle\beta|$.

We now consider a fiber composed of $L$ segments with uniform
intensity transmissivity $\eta$. The SBS interaction is assumed to be
active only in segment $i$. For every passive segment $j\neq i$, the
Gaussian-channel matrices are
\begin{equation}
    X_j=\sqrt{\eta}\,I_2,
    \qquad
    Y_j=\frac{1-\eta}{2}I_2.
\end{equation}
For the active segment, $  \mu_i(\Omega;\theta)
    =
    \sqrt{\eta}+\kappa_i(\Omega;\theta),$ 
and
\begin{equation}
    X_i
    =
    \begin{pmatrix}
        \Re\mu_i & -\Im\mu_i\\
        \Im\mu_i & \Re\mu_i
    \end{pmatrix}, \qquad Y_i
    =
    \left[
        \left(n_{\mathrm{th}}(\theta)+\frac{1}{2}\right)
        |\nu_i(\Omega;\theta)|^2
        +
        \frac{1-\eta}{2}
    \right]I_2.
\end{equation}

For the coherent input $|\sqrt{E}\rangle$, the amplitude immediately
before the active segment is
\begin{equation}
    \alpha_i^{\mathrm{in}}
    =
    \eta^{(i-1)/2}\sqrt{E}.
\end{equation}
The active segment multiplies this amplitude by $\mu_i$, while the
remaining $L-i$ passive segments contribute a factor
$\eta^{(L-i)/2}$. Hence the output displacement is
\begin{align}
    \beta_i(\Omega;\theta)
    &=
    \eta^{(L-i)/2}
    \mu_i(\Omega;\theta)
    \eta^{(i-1)/2}\sqrt{E}
    \nonumber\\
    &=
    \eta^{(L-1)/2}
    \left[
        \sqrt{\eta}+\kappa_i(\Omega;\theta)
    \right]\sqrt{E}.
    \label{eq:app-beta-i}
\end{align}

A coherent state remains coherent under pure vacuum loss, so the
covariance matrix entering the active segment is $I_2/2$. Immediately
after the active segment,
\begin{align}
    V_i^{+}
    &=
    X_i\frac{I_2}{2}X_i^{T}+Y_i
    \nonumber\\
    &=
    \left[
        \frac{|\mu_i|^2}{2}
        +
        \left(n_{\mathrm{th}}+\frac{1}{2}\right)|\nu_i|^2
        +
        \frac{1-\eta}{2}
    \right]I_2 \\
    &= \left(N_i^{\mathrm{loc}}+\frac{1}{2}\right)I_2
\end{align}
where, 
\begin{equation}
     N_i^{\mathrm{loc}}
    =
    \left(n_{\mathrm{th}}+\frac{1}{2}\right)|\nu_i|^2
    +
    \frac{|\mu_i|^2-\eta}{2}.
\end{equation}
After the remaining $L-i$ passive segments,
\begin{align}
    N_i(\Omega;\theta)
    &=
    \eta^{L-i}N_i^{\mathrm{loc}}
    \nonumber\\
    &=
    \eta^{L-i}
    \left[
        \left(n_{\mathrm{th}}(\theta)+\frac{1}{2}\right)
        |\nu_i(\Omega;\theta)|^2
        +
        \frac{
            |\sqrt{\eta}+\kappa_i(\Omega;\theta)|^2-\eta
        }{2}
    \right].
    \label{eq:app-N-i}
\end{align}

Consequently, the final Stokes state is $\rho_{\mathrm{out}}(\Omega;\theta)
    =
    S_{N_i(\Omega;\theta)}
    \!\left(\beta_i(\Omega;\theta)\right).$
As a consistency check, in the pure-loss limit
$\kappa_i=\nu_i=0$, one finds $N_i=0$ and
\begin{equation}
    \rho_{\mathrm{out}}
    =
    \left|
        \eta^{L/2}\sqrt{E}
    \right\rangle
    \left\langle
        \eta^{L/2}\sqrt{E}
    \right|,
\end{equation}
as expected for a coherent state transmitted through a pure-loss
channel.

\section{Quantum Fisher Information}
\label{sec:QFI}
\subsection{Symmetric Logarithmic Derivative (SLD)}

Let $\rho_\theta$ be a density operator depending smoothly on an unknown parameter $\theta$. The goal of quantum parameter estimation is to infer $\theta$ from measurements performed on $\rho_\theta$.

The symmetric logarithmic derivative (SLD), denoted by $L_\theta$, is defined as:
\begin{equation}
\label{eq:SLD}
  \frac{\partial \rho_\theta}{\partial \theta} =   \frac{1}{2}\{\rho_\theta,L_\theta\} = \frac{1}{2}\left(\rho_\theta L_\theta + L_\theta \rho_\theta\right)
\end{equation}
where $\{X,Y\}=XY+YX$ is the anticommutator. The SLD is the quantum analogue of the classical logarithmic derivative, or score function. To obtain an explicit expression for the SLD, we diagonalize the density matrix as $\rho_\theta
=
\sum_k p_k |k\rangle\langle k|$ with eigenvalues and eigenstate as $\rho_\theta |k\rangle = p_k |k\rangle$, and take the matrix elements of equation \ref{eq:SLD}
\begin{align}
\langle k|\partial_\theta \rho_\theta|l\rangle
&=
\frac{1}{2}
\langle k|\rho_\theta L_\theta + L_\theta \rho_\theta|l\rangle \nonumber \\
&= \frac{1}{2}
\langle k|p_k L_\theta + L_\theta p_l|l\rangle \nonumber \\
&= 
\langle k| \frac{1}{2}(p_k + p_l)L_\theta|l\rangle
\end{align}
Therefore, $(L_\theta)_{kl}
=
\frac{2(\partial_\theta \rho_\theta)_{kl}}{p_k+p_l}$ and $L_\theta$ is explicitly:
\begin{equation}
\label{eq:Llambda}
L_\theta
=
2\sum_{k,l:\,p_k+p_l>0}
\frac{\langle k|\partial_\theta \rho_\theta|l\rangle}{p_k+p_l}
|k\rangle\langle l|.    
\end{equation}

A particularly important case is unitary parameter encoding, $\rho_\theta
=
e^{-i\theta A}\rho\,e^{i\theta A},$
where $A=A^\dagger$ is the generator of the parameter shift. Differentiating gives $\partial_\theta \rho_\theta
=
-i[A,\rho_\theta],$ where $[X,Y]= XY-YX$ is the commutator. Now, the equation \ref{eq:SLD}, where we defined SLD, can be written as:
\begin{equation}
\partial_\theta \rho_\theta= -i[A,\rho_\theta] = \frac{1}{2}\{\rho_\theta,L_\theta\}
\end{equation}
The matrix elements becomes $   \langle k|\partial_\theta \rho_\theta|l\rangle
=
-i\langle k|A\rho_\theta-\rho_\theta A|l\rangle = i (p_k - p_l)  \langle k|A|l\rangle. $ Substituting into Eq. \ref{eq:Llambda} gives
\begin{equation}
L_\theta
=
2i
\sum_{k,l:\,p_k+p_l>0}
\frac{p_k-p_l}{p_k+p_l}
\langle k|A|l\rangle
|k\rangle\langle l|.    
\end{equation}

\subsection{Quantum Fisher Information}

In quantum parameter estimation, measurement outcomes are described by a POVM $\{\Pi_x\}$, and the probability distribution is given by the Born rule $p(x|\theta) = \mathrm{Tr}[\rho_\theta \Pi_x].$ The derivative of the probability is given by:
\begin{align}
\label{eq:partial_p}
\partial_\theta p(x|\theta)
&= \mathrm{Tr}[(\partial_\theta \rho_\theta)\Pi_x] = \frac{1}{2}\mathrm{Tr}[(\rho_\theta L_\theta + L_\theta \rho_\theta)\Pi_x] = \frac{1}{2}\left(
\mathrm{Tr}[\rho_\theta L_\theta \Pi_x]
+
\mathrm{Tr}[L_\theta \rho_\theta \Pi_x]
\right).
\end{align}

Using cyclicity of the trace, $\mathrm{Tr}[L_\theta \rho_\theta \Pi_x] = \mathrm{Tr}[\rho_\theta \Pi_x L_\theta],$ and Hermiticity of $\rho_\theta, \Pi_x, L_\theta$,
\begin{equation}
   \mathrm{Tr}[\rho_\theta L_\theta \Pi_x] =  \mathrm{Tr}([ \Pi_x L_\theta \rho_\theta]^\dagger) = (\mathrm{Tr}( \Pi_x L_\theta \rho_\theta])^* = (\mathrm{Tr}( \rho_\theta\Pi_x L_\theta ])^*. 
\end{equation}
Thus, equation \ref{eq:partial_p} can be expressed as:
\begin{align}
\partial_\theta p(x|\theta)
&=\frac{1}{2}\left(
\mathrm{Tr}[\rho_\theta L_\theta \Pi_x]
+
\mathrm{Tr}[L_\theta \rho_\theta \Pi_x]
\right) \nonumber \\
&=  \frac{1}{2} ((\mathrm{Tr}( \rho_\theta\Pi_x L_\theta ])^* +  \mathrm{Tr}[\rho_\theta \Pi_x L_\theta] ) \nonumber \\
&=  \mathrm{Re}\,\mathrm{Tr}[\rho_\theta \Pi_x L_\theta].
\end{align}
where we use $z+z^* = 2 \text{ Re} (z)$ for complex number $z$. Substituting $\partial_\theta p(x|\theta)$ into the  classical Fisher information $F(\theta)$

\begin{align}
   F(\theta) &= \int dx \frac{(\partial_\theta p(x|\theta))^2}{p(x|\theta)} \\
   &= \int dx \frac{
\left(\mathrm{Re}\,\mathrm{Tr}[\rho_\theta \Pi_x L_\theta]\right)^2
}{
\mathrm{Tr}[\rho_\theta \Pi_x]
} \\
&\le
\int dx \frac{
\left|\mathrm{Tr}[\rho_\theta \Pi_x L_\theta]\right|^2
}{
\mathrm{Tr}[\rho_\theta \Pi_x]
}
\end{align}
where we used $(\mathrm{Re}\,z)^2 \le |z|^2$ for the last inequality step. Let us denote $A_x = \frac{\sqrt{\rho_\theta}\sqrt{\Pi_x}}{\sqrt{\mathrm{Tr}[\rho_\theta \Pi_x]}}$ and $B_x = \sqrt{\Pi_x}L_\theta\sqrt{\rho_\theta}$, we have the identity
\begin{equation}
\frac{
|\mathrm{Tr}[\rho_\theta \Pi_x L_\theta]|^2
}{
\mathrm{Tr}[\rho_\theta \Pi_x]
}
=
|\mathrm{Tr}[A_x B_x]|^2.
\end{equation}
where we used $\mathrm{Tr}[\rho_\theta \Pi_x L_\theta]
=
\mathrm{Tr}[\sqrt{\rho_\theta}\sqrt{\Pi_x}\sqrt{\Pi_x}L_\theta\sqrt{\rho_\theta}]$. By the Cauchy--Schwarz inequality for operators,
\begin{equation}
|\mathrm{Tr}[A_x B_x]|^2 \le \mathrm{Tr}[A_x A_x^\dagger] \mathrm{Tr}[B_x^\dagger B_x]. 
\end{equation}
Since $\mathrm{Tr}[A_x A_x^\dagger]=
\frac{\mathrm{Tr}[\sqrt{\rho_\theta}\Pi_x\sqrt{\rho_\theta}]}
{\mathrm{Tr}[\rho_\theta \Pi_x]}
= 1$ and $\mathrm{Tr}[B_x^\dagger B_x]
=
\mathrm{Tr}[\Pi_x L_\theta \rho_\theta L_\theta]$, we have
\begin{equation}
\frac{
|\mathrm{Tr}[\rho_\theta \Pi_x L_\theta]|^2
}{
\mathrm{Tr}[\rho_\theta \Pi_x]
}
\le
\mathrm{Tr}[\Pi_x L_\theta \rho_\theta L_\theta].
\end{equation}
Substituting into the Fisher information:
\begin{align}
F(\theta)
&\le
\int dx \frac{
\left|\mathrm{Tr}[\rho_\theta \Pi_x L_\theta]\right|^2
}{
\mathrm{Tr}[\rho_\theta \Pi_x]}\\
&\le
\int dx\, \mathrm{Tr}[\Pi_x L_\theta \rho_\theta L_\theta] \\
&=
\mathrm{Tr}\left[
\left(\int dx\, \Pi_x\right) L_\theta \rho_\theta L_\theta
\right]\\
&= \mathrm{Tr}[L_\theta \rho_\theta L_\theta] \\
&=
\mathrm{Tr}[\rho_\theta L_\theta^2].
\end{align}
where we use completeness of the POVM, $\int dx\, \Pi_x = \mathbb{I}.$
This above chain of inequality prove that the Fisher information $F(\theta)$ of any quantum measurement is bounded by a quantity $F_Q(\theta) = \mathrm{Tr}[\rho_\theta L_\theta^2]$. We define the Quantum Fisher Information (QFI) as $F_Q(\theta)$:

\begin{equation}
F_Q(\theta) = \mathrm{Tr}[\rho_\theta L_\theta^2].
\end{equation}
Therefore, operationally QFI is the supremum of the classical Fisher information over all possible quantum measurements. That is,
\begin{equation}
F_Q(\theta) = \sup_{\{\Pi_x\}} F(\theta),
\end{equation}
where the supremum is taken over all POVMs.
In terms of eigenvalues and eigenstate of $\rho$, we have
\begin{align}
F_Q(\theta) &= \mathrm{Tr}[\rho_\theta L_\theta^2]\\
&= \sum_k p_k \langle k | L_\theta^2 | k \rangle\\
&= \sum_k \sum_l p_k\langle k | L_\theta | l \rangle
\langle l | L_\theta | k \rangle \\
&=  4 \sum_{k,l: p_k+p_l>0}
p_k \frac{(p_k - p_l)^2}{(p_k + p_l)^2}
\left| \langle k | A | l \rangle \right|^2
\end{align}
where we use $\langle k | A | l \rangle = \langle l | A | k \rangle^*$ and the explicit form of the symmetric logarithmic derivative (SLD) as:
\begin{equation}
L_\theta
=
2i \sum_{k,l: p_k+p_l>0}
\frac{p_k - p_l}{p_k + p_l}
\langle k | A | l \rangle
|k\rangle \langle l|, \qquad \langle k | L_\theta | l \rangle
=
2i \frac{p_k - p_l}{p_k + p_l}
\langle k | A | l \rangle
\end{equation}

To express $F_Q(\theta)$ in a symmetric form, we note that the summand is not invariant under exchange of indices $k \leftrightarrow l$ due to the factor $p_k$. Introducing
\begin{equation}
T_{kl}
=
p_k
\frac{(p_k - p_l)^2}{(p_k + p_l)^2}
\left| \langle k | A | l \rangle \right|^2,
\end{equation}
we observe that
\begin{equation}
T_{kl} + T_{lk}
=
\frac{(p_k - p_l)^2}{p_k + p_l}
\left| \langle k | A | l \rangle \right|^2.
\end{equation}
Using the identity $\sum_{k,l} T_{kl}
=
\frac{1}{2} \sum_{k,l} (T_{kl} + T_{lk}),$ we arrive at the symmetric expression
\begin{equation}
\label{eq:F_Q_explicit}
F_Q(\theta)
=
2 \sum_{k,l: p_k+p_l>0}
\frac{(p_k - p_l)^2}{p_k + p_l}
\left| \langle k | A | l \rangle \right|^2.
\end{equation}

This expression gives the Quantum Fisher Information for a quantum state undergoing unitary parameter encoding in terms of its eigenvalues and the generator $A$.

\subsubsection{Pure-State Limit of the Quantum Fisher Information}

Let us now derive the Quantum Fisher Information (QFI) for a pure state, $\rho = \ket{\psi}\bra{\psi}$,  undergoing unitary parameter encoding. Let $\{\ket{k}\}$ be an orthonormal eigenbasis of $\rho$, with
\begin{equation}
\ket{0} \equiv \ket{\psi}, \qquad p_0 = 1, \qquad p_l = 0 \;\; (l \neq 0).
\end{equation}

We now analyze the contributions to the double sum of Eq. \ref{eq:F_Q_explicit}. If both indices satisfy $k,l \neq 0$, then $p_k+p_l=0$ and such terms are excluded. If $k=l=0$, the numerator vanishes, $\frac{(p_0 - p_0)^2}{p_0 + p_0} = 0$, so this term does not contribute. Therefore, only terms with exactly one index equal to $0$ remain. For $k=0$, $l\neq 0$,
\begin{equation}
\frac{(p_0 - p_l)^2}{p_0 + p_l}
=
\frac{(1-0)^2}{1+0}
=
1,
\end{equation}
and similarly for $k\neq 0$, $l=0$. Hence,
\begin{equation}
F_Q
=
2 \sum_{l\neq 0}
\left| \braket{0 | A | l} \right|^2
+
2 \sum_{k\neq 0}
\left| \braket{k | A | 0} \right|^2 = 4 \sum_{l\neq 0}
\left| \braket{l | A | 0} \right|^2.
\end{equation}
where we used the Hermiticity of $A$ to get $\left| \braket{l | A | 0} \right|^2 = \left| \braket{0 | A | l} \right|^2$. Furthermore, invoking completeness $\sum_l \ket{l}\bra{l} = \mathbb{I}$, we obtain
\begin{align}
\sum_l
\left| \braket{l | A | 0} \right|^2
&=
\bra{0} A^2 \ket{0} = \sum_{l\neq 0}
\left| \braket{l | A | 0} \right|^2 + \left| \braket{0 | A | 0} \right|^2.
\end{align}
Identifying $\ket{0} = \ket{\psi}$, we arrive at
\begin{align}
F_Q
&=4 \sum_{l\neq 0}
\left| \braket{l | A | 0} \right|^2  \\
&=4 \sum_l
\left| \braket{l | A | 0} \right|^2 - \left| \braket{0 | A | 0} \right|^2 \\
&= 
4 \left(
\bra{\psi} A^2 \ket{\psi}
-
\braket{\psi | A | \psi}^2
\right) \\
&=  4 (\Delta A)^2
\end{align}
This shows that, for pure states, the Quantum Fisher Information is directly proportional to the variance of the generator of the parameter encoding.

\section{SLD Optimal Observable for the Displaced Thermal State}
\label{sec:SLD_optimal}

We now derive the symmetric logarithmic derivative (SLD) for the output state is $ \rho_{\rm out}(\theta)=S_{N(\theta)}(\beta(\theta))$
where \(\theta\) denotes the scalar parameter. The state $\rho_{\rm out}$ has following properties,
\begin{equation}
     \Sigma(\theta)
    =
    \left(N(\theta)+\frac12\right)I_2, \quad  \Delta X(\theta)
    =
    \sqrt{2}
    \begin{pmatrix}
        \operatorname{Re}\beta(\theta)\\
        \operatorname{Im}\beta(\theta)
    \end{pmatrix}, \quad \langle \hat{a}\rangle = \beta(\theta), \quad  \Delta \hat X(\theta)
    =
    \hat X-\Delta X(\theta),\quad  \hat X=
    \begin{pmatrix}
        \hat q\\
        \hat p
    \end{pmatrix}.
\label{eq:parameter1}
\end{equation}
where, $ \Sigma(\theta)$ is the covariance matrix which is isotropic, $\Delta X(\theta)$ is the displacement vector, and $ \Delta \hat X(\theta)$ is the centered quadrature operator. 

We node define operator, $\hat{b} = \hat a-\beta(\theta),$, to denote thermal state in the displaced frame such that $ \langle \hat{b} \rangle = 0$, and $\langle \hat{b}^\dagger \hat{b} \rangle = N. $ Also, since $\hat{a} = \frac{\hat{q} + i \hat{p}}{\sqrt{2}}$, and $\beta(\theta) = \operatorname{Re} \beta(\theta)+ \operatorname{Im} \beta(\theta)$, we get:
\begin{equation}
    \hat b
=
\frac{1}{\sqrt2}
\left[
\hat q-\sqrt2\,\mathrm{Re}\,\beta(\alpha)
+i\left(
\hat p-\sqrt2\,\mathrm{Im}\,\beta(\alpha)
\right)
\right], \qquad \hat q-\sqrt2\,\mathrm{Re}\,\beta
=
\frac{\hat b+\hat b^\dagger}{\sqrt2}, \qquad \hat p-\sqrt2\,\mathrm{Im}\,\beta
=
\frac{\hat b-\hat b^\dagger}{i\sqrt2}.
\label{eq:parameter2}
\end{equation}

The SLD $\mathcal L_\theta$ is defined by
\begin{equation}
    \partial_\theta \theta_{\rm out}(\theta)
    =
    \frac12
    \left[
    \theta_{\rm out}(\theta)\mathcal L_\theta
    +
    \mathcal L_\theta\theta_{\rm out}(\theta)
    \right].
\end{equation}
Since the parameter $\theta$ enters both the displacement $\beta(\theta)$
and the thermal occupation $N(\theta)$, the SLD separates into
\begin{equation}
    \mathcal L_\theta
    =
    \mathcal L_{\rm disp}
    +
    \mathcal L_{\rm noise}.
\end{equation}

Taking the expectation value of the square of the SLD gives the QFI, $    F_Q(\theta)
    =
    \operatorname{Tr}
    \left[
    \theta_{\rm out}(\theta)\mathcal L_\theta^2
    \right].$
Using the decomposition $  \mathcal L_\theta
    =
    \mathcal L_{\rm disp}
    +
    \mathcal L_{\rm noise},$
we obtain $    F_Q(\theta)
    =
    \langle \mathcal L_{\rm disp}^2\rangle
    +
    \langle \mathcal L_{\rm noise}^2\rangle
    +
    2\langle \mathcal L_{\rm disp}\mathcal L_{\rm noise}\rangle .$
For a displaced thermal state, the centered quadrature fluctuations are
Gaussian with vanishing odd moments, so the cross term vanishes $ \langle \mathcal L_{\rm disp}\mathcal L_{\rm noise}\rangle=0.$
Therefore,
\begin{equation}
    F_Q(\theta)
    =
    F_{\rm disp}(\theta)+F_{\rm noise}(\theta),
\end{equation}

\subsection{Displacement contribution}

The SLD operator $\mathcal L_{\rm disp}$ for the displacement contribution is
\begin{equation}
\label{eq:displaced}
    \mathcal L_{\rm disp}
    =
    \Delta \dot X^{T}\Sigma^{-1}\Delta \hat X
    =
    \frac{\Delta \dot X^{T}\Delta \hat X}{N+\frac12},
\end{equation}
where $\Delta \dot X=\partial_\theta \Delta X(\theta)$. Substituting the quadrature parameters from Eq. \ref{eq:parameter1} and \ref{eq:parameter2}, we get the expression for SLD $ \mathcal L_{\rm disp}$ for the displaced thermal states as:
\begin{align}
    \mathcal L_{\rm disp}
&=
\frac{\sqrt{2}}{N+\frac12}
\left[
\partial_\theta \mathrm{Re}\,\beta\,
\left(\hat q-\sqrt{2}\,\mathrm{Re}\,\beta\right)
+
\partial_\theta \mathrm{Im}\,\beta\,
\left(\hat p-\sqrt{2}\,\mathrm{Im}\,\beta\right)
\right] \\
&=\frac{1}{N+\frac12}
\left[
\partial_\theta \mathrm{Re}\,\beta\,
\left(\hat b+\hat b^\dagger\right)
-
i\,\partial_\theta \mathrm{Im}\,\beta\,
\left(\hat b-\hat b^\dagger\right)
\right] \notag \\
&= \frac{1}{N+\frac12}
\left[
\left(
\partial_\theta \mathrm{Re}\,\beta
-
i\,\partial_\theta \mathrm{Im}\,\beta
\right)\hat b
+
\left(
\partial_\theta \mathrm{Re}\,\beta
+
i\,\partial_\theta \mathrm{Im}\,\beta
\right)\hat b^\dagger
\right] \notag \\
&
=
\frac{
(\partial_\theta \beta^*)\hat b
+
(\partial_\theta \beta)\hat b^\dagger
}{
N(\theta)+\frac12
}.
\end{align}
where we used $\partial_\theta \beta
=
\partial_\theta \mathrm{Re}\,\beta
+
i\,\partial_\theta \mathrm{Im}\,\beta,$ and $\partial_\theta \beta^*
=
\partial_\theta \mathrm{Re}\,\beta
-
i\,\partial_\theta \mathrm{Im}\,\beta.$ As an example, consider the former toy case,  $\rho(\theta) = S_{a+b\theta}(c\sqrt{\theta})$ such that $N(\theta) = a  + b\theta$ and $\beta(\theta) = c \sqrt{\theta}$. Then $\partial_\theta \beta
=
\partial_\theta \beta^*
=
\frac{c}{2\sqrt{\theta}},$ and SLD becomes:
\begin{equation}
  \mathcal L_{\mathrm{disp}} =
\frac{
c(\hat b+\hat b^\dagger)
}{
2\sqrt{\theta}\left(N(\theta)+\frac12\right)
} =\frac{
c\hat a+ c\hat a^\dagger-2c^2\sqrt{\theta}
}{
\left(2N(\theta)+1\right)\sqrt{\theta}
}   
\end{equation}

To prove that Eq. \ref{eq:displaced} gives the correct QFI term, let us compute:
\begin{align}
    F_{\rm disp} &= \langle \mathcal L_\theta^2 \rangle \nonumber \\
    &=
    \left\langle
    \sum_{ij}
    (\Sigma^{-1}\Delta \dot X)_i
    (\Sigma^{-1}\Delta \dot X)_j
    \Delta \hat X_i \Delta \hat X_j
    \right\rangle \nonumber \\
    &=
    \sum_{ij}
    (\Sigma^{-1}\Delta \dot X)_i
    (\Sigma^{-1}\Delta \dot X)_j
    \langle \Delta \hat X_i \Delta \hat X_j \rangle  \nonumber  \\
    &=
    (\Sigma^{-1}\Delta \dot X)^T
    \Sigma
    (\Sigma^{-1}\Delta \dot X) \nonumber \\
    &=
    \Delta \dot X^T
    \Sigma^{-1}
    \Sigma
    \Sigma^{-1}
    \Delta \dot X \nonumber \\
    &=
    \Delta \dot X^T \Sigma^{-1} \Delta \dot X \nonumber \\
    &=  \frac{4|\partial_\theta \beta(\theta)|^2}{2N(\theta)+1}
\end{align}
where we use $    \langle \Delta \hat X_i \Delta \hat X_j \rangle
    =
    \Sigma_{ij}$ in the fourth step, and this QFI calculation matches the result of QFI from Eq. \ref{eq:third_term}.

\subsection{Noise contribution}
The noise contribution is quadratic in the centered mode occupation. In the displaced frame, let $ \hat b=\hat a-\beta(\theta)$ and $\hat n_b=\hat b^\dagger \hat b .$ Then
\begin{equation}
    \mathcal L_{\rm noise}
    = \frac{\partial_\theta N}{N(N+1)}
    \left(
    (\hat a-\beta)^\dagger(\hat a-\beta)-N
    \right) =
    \frac{\partial_\theta N}{N(N+1)}
    \left(
    \hat n_b-N
    \right)
\end{equation}
where \(N=\langle \hat n_b\rangle\). To prove that this term gives the
correct QFI contribution, we compute
\begin{align}
    F_{\rm noise}(\theta)
   & =
    \left\langle
    \mathcal L_{\rm noise}^2
    \right\rangle  \nonumber \\ 
    &= \left[
    \frac{\partial_\theta N}{N(N+1)}
    \right]^2
    \left\langle
    \left(
    \hat n_b-N
    \right)^2
    \right\rangle  \nonumber \\
    &= \left[
    \frac{\partial_\theta N}{N(N+1)}
    \right]^2\langle \hat n_b^2\rangle-\langle \hat n_b\rangle^2  \nonumber \\
    &= \left[
    \frac{\partial_\theta N}{N(N+1)}
    \right]^2 N(N+1)  \nonumber \\
    &= 
    \frac{[\partial_\theta N(\theta)]^2}
    {N(\theta)[N(\theta)+1]}
\end{align}
Thus, the quadratic number-like part of the SLD reproduces exactly the
thermal-noise contribution to the QFI.

Therefore, the SLD for the displaced thermal state is
\begin{equation}
    \mathcal L_\theta
     = \mathcal L_{\rm disp}
    +
    \mathcal L_{\rm noise} =
    \Delta \dot X^{T}\Sigma^{-1}\Delta \hat X
    +
    \frac{\partial_\theta N}{N(N+1)}
    \left[
    (\hat a-\beta)^\dagger(\hat a-\beta)-N
    \right].
\label{eq:sld_displaced_thermal}
\end{equation}
where the first term is linear in quadrature, while the second is quadratic in field operators, i.e., number-like. Upon taking the expectation value of its square, one obtains the QFI result from Eq. \ref{eq:F_coh}, thus showing its correctness.

\subsection{Optimal POVMs}

The quantum Cramér--Rao bound is saturated by measurements in the eigenbasis of the symmetric logarithmic derivative (SLD). Therefore, the
optimal measurement is given by the spectral decomposition of
\(\mathcal L_\theta\). From Eq.~\eqref{eq:sld_displaced_thermal}, the SLD for the displaced thermal state is
\begin{align}
    \mathcal L_\theta
   & =
    \Delta \dot X^{T}\Sigma^{-1}\Delta \hat X
    +
    \frac{\partial_\rho N}{N(N+1)}
    \left[
    (\hat a-\beta)^\dagger(\hat a-\beta)-N
    \right] \\
    &= 
\frac{
2(\partial_\theta \beta^*)\hat b
+
2(\partial_\theta \beta)\hat b^\dagger
}{
2N(\theta)+1
}+ \frac{\partial_\rho N}{N(N+1)}
    \left[
    (\hat a-\beta)^\dagger(\hat a-\beta)-N
    \right]
\end{align}

Defining the displaced mode $ \hat b=\hat a-\beta(\theta),   \hat n_b=\hat b^\dagger \hat b,$  the SLD can be written 
\begin{equation}
  \mathcal L_\theta
    =
    A^*\hat b + A\hat b^\dagger
    +
    \theta\left(\hat n_b-N\right), \qquad
    A
    =
    \frac{2\partial_\rho\beta(\rho)}{2N(\rho)+1},
    \qquad
    \theta
    =
    \frac{\partial_\rho N(\rho)}{N(\rho)[N(\rho)+1]}.
\end{equation}

To diagonalize \(\mathcal L_\rho\), we complete the square by introducing
a shifted annihilation operator $\hat c
    =
    \hat b + \frac{A}{\theta}
    =
    \hat a-\beta(\rho)+\frac{A}{\theta}.$ Up to an additive constant, the SLD becomes

\begin{align}
    \mathcal L_\theta
    &=
    A^*\hat b + A\hat b^\dagger
    +
    \theta\left(\hat n_b-N\right) \notag \\
    &=
    A^* \left( \hat{c} - \frac A\theta  \right)
    +
    A \left( \hat{c}^\dagger - \frac {A^*}\theta  \right)
    +
    \theta
    \left[
    \left(\hat c^\dagger-\frac{A^*}{\theta}\right)
    \left(\hat c-\frac{A}{\theta}\right)
    -N
    \right] \notag \\
    &=
    A^*\hat c
    -
    \frac{|A|^2}{\theta}
    +
    A\hat c^\dagger
    -
    \frac{|A|^2}{\theta}
    +
    \theta \hat c^\dagger \hat c
    -
    A\hat c^\dagger
    -
    A^*\hat c
    +
    \frac{|A|^2}{\theta}
    -
    \theta N \notag \\
    &=
    \theta \hat c^\dagger \hat c
    -
    \theta N
    -
    \frac{|A|^2}{\theta} \notag \\
    &= \theta\,\hat c^\dagger \hat c
    +
    \text{const}.
\end{align}

Thus, the eigenbasis of \(\mathcal L_\theta\) is the Fock basis of the
shifted mode \(\hat c\). The optimal measurement is therefore photon
counting after an appropriate displacement of the field. The optimal POVM is
\begin{equation}
    \Pi_n
    =
    D\!\left(\beta(\theta)-\frac{A}{\theta}\right)
    |n\rangle\langle n|
    D^\dagger\!\left(\beta(\theta)-\frac{A}{\theta}\right), \qquad   A
    =
    \frac{2\partial_\theta\beta(\theta)}{2N(\theta)+1},
    \qquad
    \theta
    =
    \frac{\partial_\theta N(\theta)}{N(\theta)[N(\theta)+1]}.
\end{equation}

If \(\partial_\theta N=0\), the SLD reduces to a linear quadrature operator,  and the optimal measurement is homodyne detection along the direction
\(\Delta \dot X\). Conversely, if \(\partial_\theta\beta=0\), the SLD is
purely number-like, and the optimal measurement is photon counting in the displaced frame.

As an example, consider a state $  S_{a+b\theta}\!\left(c\sqrt{\theta}\right)$, $\partial_\theta \beta(\theta)=
    \frac{c}{2\sqrt{\theta}},
    \partial_\theta N(\theta)=
    b$, then the effective shifted displacement from Eq. \ref{eq:optimal_POVMs} are
\begin{align}
    \beta(\theta)-\frac{A}{\theta}
    &=
    c\sqrt{\theta}
    -
    \frac{
    c(a+b\theta)(a+b\theta+1)
    }{
    b\sqrt{\theta}\,(2a+2b\theta+1)
    } .
\end{align}

\end{document}